\newtheorem{theorem}{Theorem}
\numberwithin{theorem}{section}
\newtheorem{corollary}[theorem]{Corollary}
\newtheorem{lemma}[theorem]{Lemma}
\newtheorem{proposition}[theorem]{Proposition}
\theoremstyle{definition}
\newtheorem{definition}[theorem]{Definition}
\newtheorem{example}[theorem]{Example}
\newtheorem{remark}[theorem]{Remark}
\newcommand{\supp}{\mathrm{supp}}
\newcommand{\rat}{\mathbb{Q}}
\newcommand{\vf}{\varphi}
\newcommand{\U}{\mathscr{U}}
\newcommand{\h}{\mathscr{X}}
\newcommand{\hy}{\mathscr{Y}}
\newcommand{\sC}{\mathscr{C}}
\newcommand{\E}{\mathcal{E}}
\newcommand{\ha}{\overset{\leftharpoonup}}
\newcommand{\de}{\mathrm{d}}
\newcommand{\rn}{\mathrm{nrd}}
\newcommand{\ts}{\mathtt{s}}
\newcommand{\pqal}{\mathbb{H}_p} 
\newcommand{\mpqal}{\mathbf{H}_p} 
\newcommand{\pqais}{\theta_p} 
\newcommand{\qal}{\mathbb{H}} 
\newcommand{\mqal}{\mathbf{H}}  
\newcommand{\pqrot}{\kappa_p}   
\newcommand{\qrot}{\kappa}
\newcommand{\so}{\mathrm{SO}}
\newcommand{\un}{\mathrm{U}}
\newcommand{\Os}{\mathcal{O}}
\newcommand{\Oinv}{\Omega_{\mathrm{inv}}}
\newcommand{\I}{\mathbf{i}}
\newcommand{\J}{\mathbf{j}}  
\newcommand{\K}{\mathbf{k}}
\title{Invariant measures on $p$-adic Lie groups: the $p$-adic quaternion algebra and the Haar integral on the $p$-adic rotation groups}
\author{Paolo Aniello\,\orcidlink{0000-0003-4298-8275},$^{1,2}$\thanks{Email: paolo.aniello@na.infn.it\hspace{3cm} $~^\mathsection$Email: vincenzo.parisi@unicam.it
} \and Sonia L'Innocente\,\orcidlink{0000-0002-9224-7451},$^{3}$\thanks{Email: sonia.linnocente@unicam.it \hspace{2.31cm} $~^\text{\textparagraph}$Email: ilaria.svampa@unicam.it 
} \and Stefano Mancini\,\orcidlink{0000-0002-3797-3987},$^{3,4}$\thanks{Email: stefano.mancini@unicam.it \hspace{2.39cm} $~^\text{\textbardbl}$Email: andreas.winter@uab.cat
} \and Vincenzo Parisi\,\orcidlink{0000-0001-9563-6471},$^{3,4\mathsection}$ \and Ilaria Svampa\,\orcidlink{0000-0002-1389-0319},$^{3,4,5^\text{\textparagraph}}$  \and Andreas Winter\,\orcidlink{0000-0001-6344-4870},$^{5,6,7\text{\textbardbl}}$}
\date{\normalsize
	$^1$\textit{Dipartimento di Fisica  ``Ettore Pancini'', Universit\`a di Napoli ``Federico II'',\\
Complesso Universitario di Monte S.~Angelo, via Cintia, I-80126 Napoli, Italy}\\%	
	\vspace{3mm}
	$^2$\textit{Istituto Nazionale di Fisica Nucleare, Sezione di Napoli,\\
Complesso Universitario di Monte S.~Angelo, via Cintia, I-80126 Napoli, Italy}\\%
	\vspace{3mm}
	$^3$\textit{School of Science and Technology, University of Camerino,\\
via Madonna delle Carceri, 9, Camerino, I-62032, Italy}\\%
	\vspace{3mm}
	$^4$\textit{Istituto Nazionale di Fisica Nucleare, Sezione di Perugia,\\
via A.~Pascoli, I-06123 Perugia, Italy}\\%
	\vspace{3mm}
	$^5$\textit{Departament de F\'isica: Grup d'Informaci\'o Qu\`antica,\\
              Universitat Aut\`onoma de Barcelona, ES-08193 Bellaterra (Barcelona), Spain}\\%
              \vspace{3mm}
    $^6$\textit{ICREA---Instituci\'o Catalana de la Recerca i Estudis Avan\c{c}ats, \\
              Pg. Llu\'is Companys, 23, ES-08010 Barcelona, Spain}\\
    \vspace{3mm}
    $^7$ \textit{Institute for Advanced Study, Technische Universit\"at M\"unchen,\\ 
Lichtenbergstra{\ss}e 2a, D-85748 Garching, Germany}
		%\today
}
\begin{document}
	\maketitle

	\begin{abstract}
We provide a general expression of the Haar measure --- that is, the essentially unique translation-invariant measure ---
on a $p$-adic Lie group. We then argue that this measure can be regarded as the measure naturally induced by the invariant volume
form on the group, as it happens for a standard Lie group over the reals. As an important application, we next consider the
problem of determining the Haar measure on the $p$-adic special orthogonal groups in dimension two, three and four
(for every prime number $p$). In particular, the Haar measure on $\so(2,\rat_p)$  is obtained by a direct application
of our general formula. As for $\so(3,\rat_p)$ and $\so(4,\rat_p)$, instead, we show that Haar integrals on these two groups can conveniently 
be lifted to Haar integrals on certain $p$-adic Lie groups 
from which the special orthogonal groups are obtained as 
quotients. This construction involves a suitable quaternion algebra over the field $\rat_p$ and is reminiscent of the quaternionic
realization of the real rotation groups. Our results should pave the way to the development of harmonic analysis on the $p$-adic
special orthogonal groups, with potential applications in $p$-adic quantum mechanics and in the recently proposed $p$-adic quantum information theory.

		\vspace{5mm}
		\noindent\textbf{Keywords:}{
		\textit{Locally compact group; Haar measure; $p$-adic Lie group; quaternion algebra}.}

	\end{abstract}

	\tableofcontents
	
\section{Introduction}
During the last decades of the XX century, a new branch of mathematical physics, the so-called
\emph{$p$-adic mathematical physics}, has been developed as an effort to find a non-Archimedean
approach to space-time and string dynamics at the Planck scale~\cite{brekke1993,volovich1987,volovich87,arefeva1991,khrennikov1991pp}.
Since then, various $p$-adic quantum mechanical models have been considered and
studied~\cite{vladimirov1994,vladimirov1989,ruelle1989,meurice1989,albeverio97,vladimirov89p,vladimirov90,zelenov89,khrennikov2013,
albeverio1996,khrennikov1991p,khrennikov1990,volovich2010,dragovich2017},
and several applications to quantum field theory and string theory have
been proposed~\cite{parisi1988,Khrennikov1990p,albeverio2009,khrennikov1993p,albeverio1997p}. Although the original focus of these
theories was on the foundational aspects, further investigation has revealed new surprising applications, especially in the context
of statistical and condensed matter physics. For instance, it has been observed that the natural ultrametric hierarchical structure of $p$-adic
numbers makes them suitable for the description of the dynamics of chaotic and disordered systems. From this observation,
M\'ezard, Parisi and their collaborators have shown, in the first half of the 1980s, that the ground state of the spin glasses
exhibits a natural (non-Archimedean) ultrametric structure~\cite{mezard1984,rammal1986,PS00}.

In more recent years, further intriguing applications of $p$-adic numbers  have emerged, well beyond their original
mathematical and physical context. Indeed, $p$-adic numbers have proved to be a valuable tool in solving issues related to
algebraic dynamical systems, image analysis, compression of information, image recognition, cryptography and computer science
(see~\cite{anashin2009}, and references therein). Even more recently, there has been an increasing interest in the potential
applications of the field of $p$-adic numbers to quantum information theory, as well~\cite{zelenov2020,aniello2022,our2}.
This interest stems from the unique properties of $p$-adic numbers, that may provide new solutions to challenging problems
in quantum information science. E.g., it has been observed that the $p$-adic numbers can be profitably used in the construction
of mutually unbiased bases (MUBs), for any Hilbert space dimension~\cite{zelenov2022}.

As a first step in establishing the foundations of a $p$-adic theory of quantum information, it has been argued~\cite{our2} that a suitable model
of a $p$-adic qubit can be obtained by resorting to two-dimensional irreducible projective representations of the group of rotations
on the configuration space $\mathbb{Q}_p^3$ (for an alternative `purely $p$-adic' approach to the qubit also see~\cite{aniello2023}).
The special orthogonal groups over the $p$-adic fields can be defined through quadratic forms over $\rat_p$. Unlike the real case, however,
definite (i.e., representing the zero trivially) quadratic forms over $\rat_p$ exist only in dimension two, three and four~\cite{serre2012course}.
The resulting symmetry groups $\so(2,\rat_p)$, $\so(3,\rat_p)$ and $\so(4,\rat_p)$ are the only compact $p$-adic special orthogonal groups. In particular, $\so(3,\rat_p)$ can be
thought of as the group of rotations on $\mathbb{Q}_p^3$, and its geometrical features have been explored in~\cite{our1st}. The compactness of
the aforementioned groups entails that all their irreducible unitary representations occur (and can be studied) as subrepresentations of the
regular representation, according to the celebrated \emph{Peter-Weyl theorem}~\cite{folland2016course}.

Now, the study of the regular representation of compact groups --- in particular, the formulation and the application of Schur's
orthogonality relations --- as well as several other fundamental issues of abstract  harmonic analysis, involve the \emph{Haar measure} on
such groups, namely, the essentially unique (say, left) invariant Radon measure, or, regarding such a measure as a \emph{functional}~\cite{folland2016course},
the \emph{Haar integral}. More generally, the irreducible --- in general, \emph{projective} --- representations of compact groups are
\emph{square integrable} (see~\cite{AnielloSDP,AnielloPR,AnielloSP,AnielloSIR}, and references therein),
and thus satisfy suitable orthogonality relations, where, once again, the Haar measure is involved.
Still another class of problems where this measure plays a central role, is related to the `phase-space' formulation of quantum mechanics~\cite{FollandHAPS}.
Here, the phase space appears \emph{in quotes} for a two-fold reason: first, because a \emph{$p$-adic model} of
phase space is what we have in mind; second, because the usual group of translations on phase space (with its genuinely projective representations)
is replaced with a locally compact group --- e.g., with a compact $p$-adic Lie group --- admitting square integrable
representations. Such representations allow one to define \emph{generalized Wigner transforms} mapping quantum-mechanical operators
into complex functions on the relevant group~\cite{AnielloSDP,AnielloPR,AnielloSP,AnielloSIR}.

In the present work --- as a first point of an ideal program devoted to the study of harmonic analysis on
the compact $p$-adic special orthogonal groups, and to its applications to quantum information science ---
we face the problem of describing the Haar integral on the $p$-adic Lie groups $\so(2,\rat_p)$, $\so(3,\rat_p)$ and $\so(4,\rat_p)$,
for every prime $p$. Whereas the Haar measure on standard special orthogonal groups over the reals has been extensively studied
using different approaches, the corresponding $p$-adic problem seems to be (to the best of our knowledge) still unexplored.

Our strategy to deal with such a problem articulates in two main steps:

\begin{enumerate}

\item On the one hand, we derive a general formula for the invariant measure on a generic $p$-adic Lie group.
Our construction relies on the existence of a suitable atlas of \emph{mutually disjoint} charts on such a group, which allows one
to express its Haar measure in the local $\rat_p^n$-coordinates by exploiting the change-of-variables formula
for integrals on $\rat_p^n$. Precisely, we first obtain a \emph{quasi}-invariant measure on the group. At this point,
we observe that every quasi-invariant measure on a (in general, locally compact) group immediately yields a Haar
measure. This method is tailored on the peculiar properties of a $p$-adic Lie group, but we next show that
our result can be interpreted within the \emph{invariant volume form} approach to the Haar integral usually
adopted for standard Lie groups over the reals.

\item We then observe that, on the other hand, a \emph{direct} `brute force' application of the previous general approach may not be
very \emph{practical} or \emph{convenient}, depending on the (more or less) manageable parametrization of the group one
has to work with. In concrete applications, it is often more convenient to exploit a realization of the group one is interested in
as a suitable \emph{quotient} group $X=G/H$, where $G$ is some suitable $p$-adic Lie group and $H$ is a closed subgroup of $G$.
Now, if the approach outlined in the previous step provides us with a convenient expression of the Haar integral on $G$,
then we can simply \emph{lift} the Haar integral living `downstairs' on $X$ to an integral `upstairs' on $G$. This nice lifting strategy
for computing the Haar integrals relies on the so-called \emph{Weil-Mackey-Bruhat} formula~\cite{folland2016course,reiter2000,doran88}.

\end{enumerate}

A direct application of the general formula mentioned in the first point above easily yields the Haar measure on $\so(2,\rat_p)$.
In dimensions three and four, instead, we find it convenient to introduce a quaternion algebra over $\rat_p$ first.
We can then realize $\so(3,\rat_p)$ and $\so(4,\rat_p)$ as suitable quotient groups and next apply the lifting strategy
of the Haar integrals outlined in the second point above. This approach is reminiscent of the quaternionic realization
of the standard rotation group $\so(3)$ the reader may be familiar with.

The structure of the paper is as follows. In Section~\ref{sec2}, we collect the basic notions and tools which will be used throughout the remaining sections of the paper. Specifically, in Subsection~\ref{sec2n} we recall some basic facts concerning the Haar measure on locally compact groups and the lifts of Haar integrals on quotient groups.  In Subsection~\ref{subsec.2.1}, we discuss $p$-adic manifolds and introduce the notion of a $p$-adic Lie group, before delving into the specific class of $p$-adic special orthogonal groups, in Subsection~\ref{subsec.3.2n}. Section~\ref{cosHaarmes} deals with a general construction of the Haar measure on a $p$-adic Lie group, eventually showing that it naturally coincides with the measure associated with the (maximal-rank) invariant differential form defined on the group. Section~\ref{sec5} is devoted to the applications of the (previously constructed) theory to the $p$-adic special orthogonal groups in dimension two, three and four. Specifically, in Subsection~\ref{subsec.5.1}, we derive the Haar measure on $\so(2,\rat_p)$. In Subsection~\ref{sec:quatalp}, we explicitly construct, for any prime number $p$, the $p$-adic quaternion algebra and, in Subsection~\ref{sec:relquatrotp}, we highlight its relation with the elements of $\so(3,\rat_p)$ and $\so(4,\rat_p)$. Then, in Subsections~\ref{subsec:haarso3p} and~\ref{subsec:haarso4p} we construct the Haar integrals on $\so(3,\rat_p)$ and $\so(4,\rat_p)$ by exploiting the suitable `lifting strategy’ and, hence, by realizing them as Haar integrals on specific subgroups of the $p$-adic quaternion algebra. Finally, in Section~\ref{sec6},
conclusions are drawn, with a quick glance at future prospects.
%%%%%%%%%%%%%%%%%%%%%%%%%%%% Second Section     %%%%%%%%%%%%%%%%%%%%%%%%%%%%

\section{Basic notions and tools}\label{sec2}
In this section, we collect some  basic results and tools which will be relevant for all our later derivations. We begin by recalling the notion of Haar measure on a locally compact group. Then, we introduce the $p$-adic Lie groups and we consider, in particular, the class of $p$-adic special orthogonal groups. Finally, we provide a brief outline of integration theory on a $p$-adic manifold.

%%%%%%%%%%%%%%%%%%%%%%%%%%%%%%%% Subsection 2.1 %%%%%%%%%%%%%%%%%%%%%%%%%%%%%%%%

\subsection{The Haar measure on a locally
compact group and the lifts of Haar integrals}\label{sec2n}
Let $G$ be a locally compact (Hausdorff%si puo sempre assumere, per Corollary 3.2 Folland Harmonic analysis
) topological group; in short, a LC group. By a  \emph{left} (resp.\ \emph{right}) \emph{Haar measure} $\mu$ on $G$ we mean a non-zero \emph{Radon measure} for which the following condition holds: 
\begin{equation}\label{eq:condleftrightinv}
\mu(g\E)=\mu(\E)\quad (\text{resp.}\, \mu(\E g)=\mu(\E)),
\end{equation}
for every Borel set $\E\subset G$, and $g\in G$~\cite{folland2016course, hewitt1979}. We refer to~\eqref{eq:condleftrightinv} as to the \emph{left-invariance} (resp.\ \emph{right-invariance}) property of the measure. 

It is worth recalling a remarkable characterization of the left (resp.\ right) Haar measure provided by a suitable left- (right-)invariance condition for a class of functionals on $\mathrm{C}_c(G)$ --- the algebra of compactly-supported continuous complex-valued functions on $G$~\cite{folland2016course,folland99}.
\begin{remark}
We are adopting the convention that the \emph{support}, $\supp(f)$, of a continuous function $f$ is the closure of the open set $\{g\in G\mid f(g)\in\mathbb{C}\setminus\{0\}\}$.
\end{remark}
Let $\mu$ be a fixed Radon measure on a LC group $G$. The map defined as
\begin{equation}\label{funconG}
\mathrm{C}_c(G)\ni f\mapsto I(f)\coloneqq\int_G f(g)\de\mu(g)\in \mathbb{C}
\end{equation}
is easily seen to be a positive linear functional on $\mathrm{C}_c(G)$. On the other hand, the celebrated Riesz Representation Theorem (cf.\ Theorem $7.2$ in~\cite{folland99}) assures that for \emph{every} positive linear functional on $\mathrm{C}_c(G)$, there is a \emph{unique} Radon measure $\mu$ on $G$ such that $I$ is represented as in~\eqref{funconG}. Exploiting this correspondence, a Radon measure $\mu$ is a left Haar measure iff the associated functional is left-invariant, i.e., iff the condition
\begin{equation}\label{lefinv}
\int_{G}(L_{h}f)(g)\de\mu(g)=\int_{G}f(g)\de\mu(g)
\end{equation}
holds for every $f\in C_{\mathrm{c}}(G)$. Here, the map $L_h$, for $h\in G$, of \emph{left translation} on $\mathrm{C}_c(G)$ is defined as $(L_h f)(g)\coloneqq f(h^{-1}g)$. By defining the \emph{right translation} via $(R_h f)(g) = f(gh)$, we capture analogously right-invariance of the measure. In what follows, whenever $\mu$ is a Haar measure on $G$, we will refer to the integral in the r.h.s.\ of~\eqref{funconG} as to the \emph{Haar integral} associated with $\mu$.

It is a well known result (see, e.g., Theorem $2.10$ and $2.20$ in~\cite{folland2016course}) that any LC group admits an \emph{essentially uniquely defined} Haar measure. In particular, if $\mu$ and $\nu$ are left Haar measures on $G$, then there exists  $c\in \mathbb{R}^+_{\ast}$ such that $\mu=c\nu$. If $G$ is a LC group, its left and right Haar measures are related via the so-called \emph{modular function} $\Delta\colon G\rightarrow \mathbb{R}^+_{\ast}$~\cite{folland2016course}. In the case where $\Delta\equiv 1$ (as it happens for abelian and compact groups), $G$ is called \emph{unimodular}, meaning that left and right Haar measures coincide.

\begin{remark}\label{rem.finmes}
A locally compact group $G$ has \emph{finite} left (and right) Haar measure $\mu$ if and only if it is compact~\cite{folland2016course, gaal}; in this case, it is possible (and customary) to normalize the Haar measure in such a way that $\mu(G)=1$. 
\end{remark}

\begin{example}[Haar measure on $\rat_p$]\label{exa.2.2}
The (additive) group of the field of $p$-adic numbers $\rat_p$ ($p\in\mathbb{N}$ prime) is a LC group once endowed with its standard \emph{ultrametric} topology (namely, the topology induced by the non-trivially valued, \emph{non-Archimedean absolute value} $|\,\cdot\,|_p$ on $\mathbb{Q}_p$). Therefore, it admits a left Haar measure $\lambda$. Since $(\rat_p,+)$ is abelian (hence, unimodular), $\lambda$ is right-invariant as well, i.e., 
\begin{equation}\label{measureonqp}
\lambda(\E+x)=\lambda(\E)=\lambda(x+\E)
\end{equation}
holds for every Borel subset $\E$ in $\mathcal{B}_{\mathbb{Q}_p}$, and any $x\in\rat_p$. Since the subring $\mathbb{Z}_p$ of \emph{$p$-adic integers} is a compact subset of $\rat_p$, we can normalize $\lambda$ by setting
\begin{equation}\label{norm.cond}
\lambda(\mathbb{Z}_p)=1.
\end{equation}
It is now not difficult to explicitly construct the measure $\lambda$. Indeed, let $\overline{B}(r,x_0):=\{x\in\mathbb{Q}_p\mid |x-x_0|_p\leq r\}$ be a ball centred in $x_0\in\mathbb{Q}_p$ of radius $r\in\mathbb{Z}_{>0}$. Since $\overline{B}(1,0)=\mathbb{Z}_p$, owing to the invariance condition~\eqref{measureonqp} and the normalization~\eqref{norm.cond}, we get
$\lambda\big(\overline{B}(1,x)\big)=1$ for every $x\in \mathbb{Q}_p$. Moreover, the topological features of $\rat_p$ --- i.e., any ball of radius $p^k$, $k>0$, is a \emph{disjoint} union of $p^k$ balls of radius $1$ --- also entail that $\lambda\left(\overline{B}(p^k,x)\right)=p^k$ for every $k\in\mathbb{Z}$, $x\in\mathbb{Q}_p$. Hence, we get to the conclusion that the measure of every Borel set $\E$ of $\mathbb{Q}_p$ is given by
\begin{equation}\label{eq:haarQp}
\lambda(\E)=\inf\left\{\sum_{j\geq1}p^{m_j}\mid \E\subset\bigcup_{j\geq1}\overline{B}(p^{m_j},x_j)\right\},
\end{equation}
analogous to the formula for the Lebesgue measure on the real line. 
\end{example}

\begin{example}\label{exa.2.3}
The group $\rat_p^n=\rat_p\times\ldots\times\rat_p \,\,\mbox{($n$-times)}$, endowed with the product topology, has a natural structure of (additive) LC group; hence, it admits a left (and right) Haar measure. To find it explicitly, it is enough to observe that, being $\rat_p$ a second countable LC group, there is no distinction between the standard product of measures and the Radon product (see §2.2 in~\cite{folland2016course}). Therefore, the Haar measure on $\rat_p^n$ is provided by the $n$-times product of the Haar measure on $\rat_p$, i.e., 
\begin{equation}
\lambda^n=\lambda\times\ldots\times\lambda\;\;\mbox{($n$-times)},\quad\mbox{$\lambda$ Haar measure on $\rat_p$}.
\end{equation}
With a slight abuse of notation, we will denote by $\lambda$ the Haar measure on $\rat_p^n$ for every $n\in\mathbb{N}$, as the dimension $n$ will be clear from the context.
\end{example}

Let $G$ be a LC group, and let $X$ be a LC Hausdorff space. We call $X$ a (\emph{transitive}) \emph{$G$-space} whenever it is equipped with a (transitive) continuous left action  $(\hspace{0.2mm}\cdot\hspace{0.2mm})\,[\hspace{0.5mm}\cdot\hspace{0.5mm}]\colon G\times X\rightarrow X$ of $G$. %Reminder: Let $G$ be a LC group and $X$ a LCH space. By a (continuous) \emph{left action} of $G$ on $X$ we mean a (continuous) map $(\hspace{0.2mm}\cdot\hspace{0.2mm})\,[\hspace{0.5mm}\cdot\hspace{0.5mm}]\colon G\times X\rightarrow X$ of $G$ on $X$ such that: $\mathrm{i)}$ $g_1[g_2[x]]=(g_1g_2)[x]$,  for all $g_1,g_2$ in $G$, and every $x$ in $X$, and $\mathrm{ii)}$ for each $g\in G$, the map $X\ni x\mapsto g[x]\in X$ is a homeomorphism of $X$. 
If $G$ is a locally compact second countable Hausdorff (in short, LCSC) group, and $H$ a closed \emph{normal} subgroup of $G$ (e.g., the centre of $G$), let  $X\equiv G/H$ denote the quotient (LCSC) group. Furthermore, let $q\colon G\rightarrow X$ be the \emph{quotient map}  (i.e., the projection homomorphism) which is an \emph{open continuous} map. We can then define a natural continuous action $(\hspace{0.2mm}\cdot\hspace{0.2mm})\,[\hspace{0.5mm}\cdot\hspace{0.5mm}]\colon G\times X\rightarrow X$ of $G$ on $X$, i.e., 
\begin{equation}\label{eq.7s}
g[x]\coloneqq q(g)x,\qquad g\in G,\,\,x\in X. 
\end{equation}
This action is transitive and, hence, turns $X\equiv G/H$ into a transitive $G$-space. In the literature, one refers to such a $G$-space as to a \emph{homogeneous space}~\cite{folland2016course,varadarajan,reiter2000,hewitt1979, gaal,doran88}.

Let now $\mu_G$, $\mu_H$, $\mu_X$ denote the (left) Haar measures on $G$, $H$, $X\equiv G/H$ respectively, and let $\Delta_G$, $\Delta_H$ be the modular functions on $G$ and $H$. It is a standard fact that (since $X$ admits a $X$-invariant, hence $G$-invariant, measure $\mu_X$; see Theorem~$2.51$ of~\cite{folland2016course})
\begin{equation}
\Delta_G(h)=\Delta_H(h), \qquad \forall h\in H,    
\end{equation}
i.e., $\Delta_H=\Delta_{G}|_H$. Therefore, if $G$ is unimodular, then $H$ shares the same property.

Let $(X,\mathcal{B}_X)$, $(Y,\mathcal{B}_Y)$ be \emph{(Borel) measurable spaces}. We recall that a map $\varphi\colon X\rightarrow Y$ is called a \emph{Borel map} if, for every Borel set $\mathcal{E}\in \mathcal{B}_Y$,  $\varphi^{-1}(\mathcal{E})\in \mathcal{B}_X$; it is called a \emph{Borel isomorphism} if it is one-one, onto, and $f^{-1}$ is a Borel map. If $X\equiv G/H$ is a quotient group, we also denote by $\ts\colon X\rightarrow G$ a \emph{Borel (cross) section} of $X$ into $G$, i.e., a Borel map satisfying the condition $q(\ts(x))=x$, for every $x\in X$.
%sarebbe $q(\ts(x)h)=x$, for every $x\in X$ and $h\in H$, ma è ridondante trattandosi qui della mappa quoziente
\begin{proposition}[Lemma~$6$ of~\cite{AnielloPR}]
For every Borel section $\ts\colon X\rightarrow G$, the mapping
\begin{equation}
    \gamma_\ts\colon X\times H\ni(x,h)\mapsto \ts(x)h\in G
\end{equation}
is a Borel isomorphism ($X\times H$ being endowed with the product topology).
\end{proposition}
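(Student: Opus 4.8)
The plan is to exhibit an explicit inverse of $\gamma_\ts$ and check that both $\gamma_\ts$ and this inverse are Borel maps. First I would define the candidate inverse. Given $g\in G$, set $x\coloneqq q(g)\in X$; then by the section property $q(\ts(x))=x=q(g)$, so $\ts(x)^{-1}g$ lies in the kernel of $q$, which is precisely $H$. Thus the map
\begin{equation}\label{eq.invgamma}
\delta\colon G\ni g\mapsto\bigl(q(g),\,\ts(q(g))^{-1}g\bigr)\in X\times H
\end{equation}
is well defined, and a direct computation shows $\delta\circ\gamma_\ts=\mathrm{id}_{X\times H}$ and $\gamma_\ts\circ\delta=\mathrm{id}_G$: indeed $\delta(\gamma_\ts(x,h))=\delta(\ts(x)h)=\bigl(q(\ts(x)h),\ts(x)^{-1}\ts(x)h\bigr)=(x,h)$ since $q(\ts(x)h)=q(\ts(x))q(h)$ — wait, $h$ need not be central unless $H$ is; but here $q$ is the quotient homomorphism $G\to G/H$ so $q(h)$ is the identity coset, hence $q(\ts(x)h)=q(\ts(x))=x$ regardless. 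So $\gamma_\ts$ is a bijection with inverse $\delta$.

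Next I would verify the Borel measurability of both maps. For $\gamma_\ts$: the multiplication map $m\colon G\times G\to G$ is continuous, hence Borel; the inclusion $H\hookrightarrow G$ is continuous; and $\ts\colon X\to G$ is Borel by hypothesis. Therefore $(x,h)\mapsto(\ts(x),h)\mapsto\ts(x)h$ is a composition of a Borel map with a continuous one, hence Borel. One technical point worth spelling out is that $\ts\times\mathrm{id}_H\colon X\times H\to G\times G$ is Borel: this uses that for second countable spaces the Borel $\sigma$-algebra of a product is the product of the Borel $\sigma$-algebras (cf.\ the discussion around Example~\ref{exa.2.3}), so measurability in each coordinate suffices. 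For $\delta$: the quotient map $q\colon G\to X$ is continuous (indeed open continuous), hence Borel; $\ts\circ q\colon G\to G$ is then Borel; inversion on $G$ is continuous; and multiplication is continuous; so $g\mapsto\bigl(q(g),(\ts(q(g)))^{-1}g\bigr)$ is Borel, again invoking the product-of-Borel-$\sigma$-algebras fact to assemble the two coordinates.

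The main obstacle — really the only subtlety — is the measurable-product issue: one must ensure that building a map into $X\times H$ (resp.\ $G\times G$) coordinate-wise preserves Borel measurability, and that composing a Borel map with a continuous map yields a Borel map. The former is exactly why the standing assumption that $G$ is LCSC (hence $X$, $H$ are second countable) is needed: in general the product Borel $\sigma$-algebra can be strictly smaller than the Borel $\sigma$-algebra of the product, but for second countable spaces they coincide. The latter is immediate from the definition of Borel map: preimages of Borel sets under a continuous map are Borel, so preimages under the composition land back in $\mathcal{B}_X$. Once these two elementary facts are in place, the proposition follows by combining the bijectivity established in the first step with the measurability of $\gamma_\ts$ and $\delta$ established in the second: $\gamma_\ts$ is one-one, onto, and its inverse $\delta$ is a Borel map, which is precisely the definition of a Borel isomorphism.
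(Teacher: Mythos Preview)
Your argument is correct: constructing the explicit inverse $\delta(g)=(q(g),\ts(q(g))^{-1}g)$ and checking Borel measurability in both directions via the second-countability hypothesis (so that $\mathcal{B}_{X\times H}=\mathcal{B}_X\otimes\mathcal{B}_H$) is precisely the standard route. The paper does not supply its own proof here but merely cites Lemma~6 of~\cite{AnielloPR}, so there is nothing further to compare against.
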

For every $f\in \mathrm{C}_c(G)$, we put
\begin{align}\label{eq.4s2}
(Pf)(x)&\coloneqq \int_H\de\mu_H(h)(f\circ\gamma_\ts)(x,h)\nonumber\\
&=\int_H \de\mu_H(h)f(\ts(x)h), \qquad x\in X.
\end{align}
\begin{remark}
It is worth observing that the function $H\ni h\mapsto f(gh)\in\mathbb{C}$, for any $g\in G$, is in $\mathrm{C}_c(H)$ (in particular, $gh\in\supp(f)\implies h\in g^{-1}\supp(f)\bigcap H$), where $g^{-1}\supp(f)\bigcap H$ is a compact subset of $G$ and, hence, of $H$). Therefore, the integral on the r.h.s.\ of~\eqref{eq.4s2} is well-defined.
\end{remark}
\begin{remark}
Note that, by the left-invariance of $\mu_H$, the integral $\int_H \de\mu_H(h)f(gh)$ is constant w.r.t.\ $g$ varying in $q^{-1}(\{x\})$, for every $x\in X$. Hence, $(Pf)(x)\in\mathbb{C}$ does not depend on the choice of the cross section $\ts$.
\end{remark}
\begin{theorem}\label{theo.2.3s2}
For every $f\in \mathrm{C}_c(G)$, the function 
\begin{equation}
X\ni x\mapsto (Pf)(x)\in\mathbb{C}
\end{equation}
belongs to $\mathrm{C}_c(X)$, and the mapping $\mathrm{C}_c(G)\ni f\mapsto Pf\in \mathrm{C}_c(X)$ is surjective. Moreover, for every $f\in \mathrm{C}_c(G)$, we have that 
\begin{align}\label{eq.6s2}
\int_G \de\mu_G(g)f(g)&=\int_{X\times H}\de\mu_X\times\mu_H(x,h)f(\ts(x)h)\nonumber\\
&=\int_X\de\mu_X(x)\int_H\de\mu_H(h)f(\ts(x)h)\nonumber\\
&=\int_X\de\mu_X(x)(Pf)(x), \qquad\textup{(Weil-Mackey-Bruhat formula)},
\end{align}
where the Haar measures $\mu_G$, $\mu_H$, $\mu_X$ are supposed to be suitably normalized and $\ts\colon X\rightarrow G$ is any Borel cross section.
\end{theorem}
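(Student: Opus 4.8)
The plan is to push everything down to $X$ using that the quotient map $q$ is open, and then to identify the functional $f\mapsto\int_X(Pf)\,\de\mu_X$ with a multiple of the Haar integral on $G$ via the essential uniqueness recalled in Subsection~\ref{sec2n}. For the statement $Pf\in\mathrm{C}_c(X)$, I would work with the auxiliary function $\widetilde{P}f\colon G\to\mathbb{C}$, $\widetilde{P}f(g)\coloneqq\int_H f(gh)\,\de\mu_H(h)$, which is well defined by the Remarks following~\eqref{eq.4s2}. The left-invariance of $\mu_H$ gives $\widetilde{P}f(gh_0)=\widetilde{P}f(g)$ for all $h_0\in H$, so $\widetilde{P}f$ is constant on the fibres of $q$ and hence factors as $\widetilde{P}f=(Pf)\circ q$; since $q$ is continuous, open and onto, $Pf$ is continuous iff $\widetilde{P}f$ is. Continuity of $\widetilde{P}f$ at a point $g_0$ is the usual compact-support argument: choosing a compact neighbourhood $V$ of $g_0$ and a compact $K\supseteq\supp(f)$, for $g\in V$ the integrand lives in the fixed compact set $K'\coloneqq V^{-1}K$, so $\widetilde{P}f(g)=\int_{K'\cap H}f(gh)\,\de\mu_H(h)$, and the joint continuity of $(g,h)\mapsto f(gh)$ on the compact $V\times K'$ yields $|\widetilde{P}f(g)-\widetilde{P}f(g_0)|\le\varepsilon\,\mu_H(K'\cap H)$ for $g$ near $g_0$. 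Since $Pf(x)\neq0$ forces $x\in q(\supp f)$, which is compact, $Pf\in\mathrm{C}_c(X)$.

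For surjectivity of $P$, the engine is that $\widetilde{P}\psi\ge0$ whenever $\psi\ge0$, together with the fact that for any compact $C\subseteq X$ there is $\psi\in\mathrm{C}_c(G)$, $\psi\ge0$, with $\widetilde{P}\psi>0$ on $q^{-1}(C)$: for each $x\in C$ pick $g_x\in q^{-1}(\{x\})$ and $\psi_x\in\mathrm{C}_c(G)$ with $\psi_x\ge0$, $\psi_x(g_x)>0$, so that $P\psi_x$ is a continuous function on $X$ positive near $x$; cover $C$ by finitely many such neighbourhoods and set $\psi=\sum_i\psi_{x_i}$. Then, given $\phi\in\mathrm{C}_c(X)$, apply this with $C=\supp(\phi)$, let $V\coloneqq\{x\in X\mid(P\psi)(x)>0\}\supseteq\supp(\phi)$, and define
\begin{equation*}
f(g)\coloneqq
\begin{cases}
\psi(g)\,\dfrac{\phi(q(g))}{(P\psi)(q(g))}, & q(g)\in V,\\[2mm]
0, & q(g)\notin V.
\end{cases}
\end{equation*}
One verifies that $\supp(f)\subseteq\supp(\psi)\cap q^{-1}(\supp\phi)$ is compact and that $f$ is continuous — on the open set $q^{-1}(V)$ the formula is plainly continuous, the closed set $G\setminus q^{-1}(V)$ is disjoint from $\supp(f)$, and the two descriptions agree on the overlap — so $f\in\mathrm{C}_c(G)$; and then $(Pf)(x)=\phi(x)$ by a one-line computation using $q(\ts(x)h)=x$ and the right-$H$-invariance of $\widetilde{P}\psi$.

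For the identity itself, I would set $\widetilde{J}(f)\coloneqq\int_X(Pf)\,\de\mu_X$. By the first step this is a linear functional on $\mathrm{C}_c(G)$, positive since $f\ge0\Rightarrow Pf\ge0$, and not identically zero since by the second step it takes the value $\int_X\phi\,\de\mu_X>0$ on a suitable non-negative $f$. The decisive point is left-invariance: for $g\in G$, $x\in X$,
\begin{equation*}
\bigl(P(L_gf)\bigr)(x)=\int_H f\bigl(g^{-1}\ts(x)h\bigr)\de\mu_H(h)=\widetilde{P}f\bigl(g^{-1}\ts(x)\bigr)=(Pf)\bigl(q(g)^{-1}x\bigr),
\end{equation*}
using $\widetilde{P}f=(Pf)\circ q$ and $q(g^{-1}\ts(x))=q(g)^{-1}x$; thus $P(L_gf)$ is the left $X$-translate of $Pf$ by $q(g)$, and since $\mu_X$ is the Haar measure of the topological \emph{group} $X=G/H$ — which is exactly where the normality of $H$ enters — we get $\widetilde{J}(L_gf)=\widetilde{J}(f)$. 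By the Riesz representation theorem $\widetilde{J}$ is integration against a left Haar measure on $G$, so essential uniqueness yields $c\in\mathbb{R}^+_\ast$ with $\widetilde{J}(f)=c\int_G f\,\de\mu_G$; fixing the normalizations of $\mu_G,\mu_H,\mu_X$ compatibly absorbs $c$, and this is the outermost equality in~\eqref{eq.6s2}, the two intermediate equalities being the definition~\eqref{eq.4s2} of $Pf$ and the Fubini--Tonelli theorem (legitimate because $\mu_X\otimes\mu_H$ is a Radon product, $X$ and $H$ being second countable), transported along the Borel isomorphism $\gamma_\ts$ of the preceding Proposition.

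The one genuinely delicate point is the construction in the second step: producing a preimage of $\phi$ under $P$ that is simultaneously continuous on all of $G$ and compactly supported, which forces the careful interlocking of $\supp(\phi)$, the open set $V$ and $\supp(\psi)$. The remaining steps are, respectively, routine compactness bookkeeping and a standard appeal to uniqueness of the Haar measure.
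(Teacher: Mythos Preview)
Your proof is correct and is essentially the standard argument --- the very one the paper defers to by citing Proposition~2.50 and Theorem~2.51 of Folland --- spelled out in full detail: show $Pf\in\mathrm{C}_c(X)$ via the right-$H$-invariant lift $\widetilde{P}f$, construct a preimage of $\phi$ by dividing by $P\psi$, and then identify $f\mapsto\int_X(Pf)\,\de\mu_X$ with the Haar integral on $G$ via left-invariance and uniqueness. One cosmetic remark: in the surjectivity step your phrase ``the two descriptions agree on the overlap'' is slightly off, since $q^{-1}(V)$ and its complement do not overlap; the clean way to say what you mean is that $\supp(f)\subseteq q^{-1}(\supp\phi)\subseteq q^{-1}(V)$, so every point outside $q^{-1}(V)$ has the open neighbourhood $q^{-1}(X\setminus\supp\phi)$ on which $f$ vanishes (either by definition or because $\phi\circ q=0$ there).
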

(Note: Since $X,H$ are LCSC groups, in the first line of~\eqref{eq.6s2} it is not necessary to make a distinction between the standard product of measures and the Radon product~\cite{folland2016course}.)
\begin{proof}
See Section~$2.6$ of~\cite{folland2016course}; in particular, Proposition~$2.50$ and Theorem~$2.51$.
\end{proof}
For every $\phi\in \mathrm{C}_c(X)$ and $\psi\in \mathrm{C}_c(G)$, we set
\begin{equation}
(\mathscr{L}_\psi\phi)(g)\coloneqq \psi(g)\phi(q(g)),\qquad g\in G.
\end{equation}
It is easy to see that $\mathscr{L}_\psi\phi\in \mathrm{C}_c(G)$; in particular, we have that
\begin{equation}
\supp(\mathscr{L}_\psi\phi)\subset\supp(\psi)\bigcap q^{-1}(\supp(\phi))
\end{equation}
is a compact subset of $G$. 
\begin{lemma}\label{lem1.s2}
For every compact subset $K$ of $X$, there exists a function $\psi\in \mathrm{C}_c^+(G)$ such that 
\begin{equation}
(P\psi)(x)=1,\quad\forall x\in K.
\end{equation}
Here and in the following, we set $\mathrm{C}_c^+(G)\coloneqq\{f\in\mathrm{C}_c(G)\mid f\geq 0,\,\,f\not\equiv 0\}$.
\end{lemma}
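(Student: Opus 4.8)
The plan is two-fold: first, to build a single $\psi_0\in\mathrm{C}_c^+(G)$ whose image $P\psi_0$ is strictly positive everywhere on $K$; then to rescale $\psi_0$ fiberwise, by multiplying it by a suitable function pulled back from $X$ along $q$, so that the $P$-image becomes exactly $1$ on $K$ while nonnegativity is preserved.

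For the first step, I would argue locally and patch. Assume $K\neq\emptyset$ (the claim being vacuous otherwise). For each $x\in K$, pick $g_x\in q^{-1}(\{x\})$ and, by Urysohn's lemma on the LC Hausdorff group $G$, a function $f_x\in\mathrm{C}_c^+(G)$ with $f_x(g_x)>0$; let $W_x\coloneqq\{g\in G\mid f_x(g)>0\}$, an open neighbourhood of $g_x$. The point is that $(Pf_x)(q(g))>0$ for every $g\in W_x$: by the remark preceding Theorem~\ref{theo.2.3s2}, $(Pf_x)(q(g))=\int_H f_x(gh)\,\de\mu_H(h)$, and since $f_x(g)=f_x(g\cdot e)>0$ the set $\{h\in H\mid f_x(gh)>0\}$ is a nonempty open subset of $H$, hence of strictly positive $\mu_H$-measure (a left Haar measure being positive on nonempty open sets), so the integral is strictly positive. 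Since $q$ is open, $\{q(W_x)\}_{x\in K}$ is an open cover of $K$; extract a finite subcover $q(W_{x_1}),\dots,q(W_{x_n})$ and set $\psi_0\coloneqq\sum_{i=1}^n f_{x_i}\in\mathrm{C}_c^+(G)$. Linearity and nonnegativity of $P$ then give $(P\psi_0)(x)\geq(Pf_{x_i})(x)>0$ for $x\in q(W_{x_i})$, so $(P\psi_0)>0$ on $K$; moreover $P\psi_0\in\mathrm{C}_c(X)$ by Theorem~\ref{theo.2.3s2}, so $V\coloneqq\{x\in X\mid (P\psi_0)(x)>0\}$ is open in $X$ and contains $K$.

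For the second step, the key identity I would use is $P(\mathscr{L}_\psi\phi)=\phi\cdot(P\psi)$ on $X$, for all $\psi\in\mathrm{C}_c(G)$ and $\phi\in\mathrm{C}_c(X)$; this is immediate from~\eqref{eq.4s2} once one notes that $q(\ts(x)h)=q(\ts(x))q(h)=x$ for $h\in H=\ker q$, so that the constant $\phi(x)$ factors out of the integral over $H$. Now, using local compactness of $X$ together with Urysohn's lemma, pick $\chi\in\mathrm{C}_c^+(X)$ with $0\leq\chi\leq 1$, $\chi\equiv 1$ on $K$, and $\supp(\chi)\subset V$; define $\phi\colon X\to\mathbb{R}$ by $\phi\coloneqq\chi/(P\psi_0)$ on $V$ and $\phi\coloneqq 0$ off $V$ — the two pieces agree on $V\setminus\supp(\chi)$, so $\phi\in\mathrm{C}_c^+(X)$. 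Then $\psi\coloneqq\mathscr{L}_{\psi_0}\phi$ lies in $\mathrm{C}_c(G)$ (as observed just before the lemma), is nonnegative since $\psi(g)=\psi_0(g)\phi(q(g))$ is a product of nonnegative terms, and satisfies $(P\psi)(x)=\phi(x)(P\psi_0)(x)=\chi(x)=1$ for all $x\in K$; in particular $\psi\not\equiv 0$, so $\psi\in\mathrm{C}_c^+(G)$, as required.

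I expect the only genuinely delicate point to be the rescaling: one cannot simply invoke the surjectivity of $P$ (Theorem~\ref{theo.2.3s2}) to solve $P\psi=\chi$, because surjectivity gives no control on the sign of $\psi$; the identity $P(\mathscr{L}_\psi\phi)=\phi\cdot(P\psi)$ is precisely what permits dividing by $P\psi_0$ over the fibers above $K$ while remaining inside $\mathrm{C}_c^+(G)$. The remaining ingredients — Urysohn's lemma, openness of $q$, compactness of $K$, and positivity of a Haar measure on nonempty open sets — are standard.
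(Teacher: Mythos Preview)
Your proof is correct; it is precisely the standard argument behind Lemma~2.49 in Folland's \textit{A Course in Abstract Harmonic Analysis}, which is all the paper invokes for its proof. So you have written out in detail what the paper merely cites: a Urysohn/compactness argument to produce some $\psi_0\in\mathrm{C}_c^+(G)$ with $P\psi_0>0$ on $K$, followed by the fiberwise rescaling $\psi=\psi_0\cdot(\phi\circ q)$ with $\phi=\chi/(P\psi_0)$ to normalize $P\psi$ to $1$ on $K$.

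One small cosmetic point: when you justify continuity of $\phi$, the clean way to phrase it is that $\phi$ equals $\chi/(P\psi_0)$ on the open set $V$ and equals $0$ on the open set $X\setminus\supp(\chi)$, and these agree on the overlap $V\setminus\supp(\chi)$; this is what you mean, but your sentence ``the two pieces agree on $V\setminus\supp(\chi)$'' reads as though $V$ and $X\setminus V$ were the two pieces being glued, which would not suffice (they don't overlap). The mathematics is fine; only the wording could be tightened.
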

%Note: Here and in the following, we set $\mathrm{C}_c^+(G)\coloneqq\{f\in\mathrm{C}_c(G)\mid f\geq 0,\,\,f\not\equiv 0\}$%will denote the set of all compactly-supported non-negative continuous functions on $G$ which are not identically zero
\begin{proof}
Use Lemma~$2.49$ of~\cite{folland2016course}.    
\end{proof}
By Lemma~\ref{lem1.s2}, for every nonempty compact subset $K$ of $X$, we can define the following (nonempty) subset of $\mathrm{C}_c^+(G)$
\begin{equation}\label{eq.15s2}
\Psi_K\coloneqq \{\psi\in \mathrm{C}_c^+(G)\mid (P\psi)(x)=1,\;\forall x\in K\}.   
\end{equation}
By convention, we put $\Psi_{\emptyset}=\{\psi\equiv 0\}$. 
\begin{definition}\label{wmblift}
Given any $\phi\in \mathrm{C}_c(X)$, for every $\psi\in\Psi_{\supp(\phi)}$, we call the function $\mathscr{L}_\psi\phi\in \mathrm{C}_c(G)$ a \emph{Weil-Mackey-Bruhat (WMB) lift} --- specifically, the $\psi$-\emph{lift} --- of $\phi$.
\end{definition}
The notion of Weil-Mackey-Bruhat lift in Definition~\ref{wmblift} is strictly related to the WMB formula~\eqref{eq.6s2}. Indeed, given a $\psi$-lift of a function $\phi\in\mathrm{C}_c(X)$, exploiting the WMB formula, it is not difficult to prove the following results:
\begin{lemma}\label{lem2.6s2}
For every $\phi\in \mathrm{C}_c(X)$, and every $\psi\in \Psi_{\supp(\phi)}$, we have that
\begin{equation}\label{eq.10s2}
P(\mathscr{L}_\psi\phi)=\phi.    
\end{equation}
\end{lemma}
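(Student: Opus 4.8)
The plan is to evaluate $P(\mathscr{L}_\psi\phi)$ pointwise and reduce it, via the defining formulas, to the product $\phi\cdot(P\psi)$, after which the defining property of $\Psi_{\supp(\phi)}$ finishes the job. Fix $x\in X$ and a Borel cross section $\ts$. First I would simply unwind the definitions of $\mathscr{L}_\psi$ and of $P$:
\[
P(\mathscr{L}_\psi\phi)(x)\;=\;\int_H\de\mu_H(h)\,(\mathscr{L}_\psi\phi)\big(\ts(x)h\big)\;=\;\int_H\de\mu_H(h)\,\psi\big(\ts(x)h\big)\,\phi\big(q(\ts(x)h)\big),
\]
which is legitimate since $\mathscr{L}_\psi\phi\in\mathrm{C}_c(G)$ (noted right after its definition) and since $h\mapsto\psi(\ts(x)h)$ lies in $\mathrm{C}_c(H)$ (the remark following~\eqref{eq.4s2}). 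The crucial step is the observation that $q$ is a homomorphism with kernel $H$, so that $q(\ts(x)h)=q(\ts(x))\,q(h)=x$ for every $h\in H$, using $q(\ts(x))=x$ and $q(h)=e_X$. Hence the factor $\phi(q(\ts(x)h))=\phi(x)$ does not depend on $h$ and can be pulled out of the integral, which yields
\[
P(\mathscr{L}_\psi\phi)(x)\;=\;\phi(x)\int_H\de\mu_H(h)\,\psi\big(\ts(x)h\big)\;=\;\phi(x)\,(P\psi)(x).
\]

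To conclude I would split into two cases according to whether $x$ belongs to $\supp(\phi)$. If $x\in\supp(\phi)$, then $\psi\in\Psi_{\supp(\phi)}$ gives $(P\psi)(x)=1$ by~\eqref{eq.15s2}, whence $P(\mathscr{L}_\psi\phi)(x)=\phi(x)$. If $x\notin\supp(\phi)$, then $\phi(x)=0$, and the displayed identity forces $P(\mathscr{L}_\psi\phi)(x)=0=\phi(x)$ as well. Since $x\in X$ was arbitrary, this establishes $P(\mathscr{L}_\psi\phi)=\phi$.

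I do not expect any genuine obstacle here: the entire content of the argument is the equality $q(\ts(x)h)=x$ --- valid precisely because $H=\ker q$ --- which collapses the $h$-dependence of the $\phi$-factor and turns $P(\mathscr{L}_\psi\phi)$ into $\phi\cdot(P\psi)$. Everything else (well-definedness of the integrals, membership in the relevant spaces $\mathrm{C}_c(\cdot)$, and the normalization $(P\psi)|_{\supp(\phi)}\equiv 1$ built into $\Psi_{\supp(\phi)}$ via Lemma~\ref{lem1.s2}) is either immediate or has already been recorded in the preceding remarks.
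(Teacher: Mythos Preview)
Your proof is correct and follows essentially the same route as the paper: unwind $P(\mathscr{L}_\psi\phi)(x)$, use $q(\ts(x)h)=x$ to pull $\phi(x)$ out of the integral, and then invoke $(P\psi)(x)=1$ on $\supp(\phi)$. The only difference is that you make the harmless case split $x\in\supp(\phi)$ versus $x\notin\supp(\phi)$ explicit, whereas the paper compresses this into the single line $(P\psi)(x)\phi(x)=\phi(x)$.
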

\begin{proof}
In fact, by Lemma~\ref{lem1.s2}, we have:
\begin{align}
\big(P(\mathscr{L}_\psi\phi)\big)(x)&=\int_H\de\mu_H(h)\psi(\ts(x)h)\phi(q(\ts(x))h)\nonumber\\
&=(P\psi)(x)\phi(x)=\phi(x),\qquad\forall x\in X,
\end{align}
where $\ts \colon X\rightarrow G$ is any Borel cross section ($q(\ts(x)h)=x$).
\end{proof}
We are now able to express any Haar integral on $X$ as a Haar integral on $G$:
\begin{theorem}\label{th2.7s2}
Let $\phi$ be a function in $\mathrm{C}_c(X)$. Then, for every WMB lift $\mathscr{L}_\psi\phi\in \mathrm{C}_c(G)$ of $\phi$ \normalfont{($\psi\in\Psi_{\supp(\phi)}$)}, we have that 
\begin{equation}\label{eq.12s2}
\int_X \de\mu_X(x)\phi(x)=\int_G\de\mu_G(g)(\mathscr{L}_\psi\phi)(g),
\end{equation}
where a suitable (mutual) normalization of $\mu_X$ and $\mu_G$ is assumed. 
\end{theorem}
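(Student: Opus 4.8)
The plan is to reduce Theorem~\ref{th2.7s2} to the Weil-Mackey-Bruhat formula of Theorem~\ref{theo.2.3s2} together with the lifting identity of Lemma~\ref{lem2.6s2}. The point is that both ingredients are already in place: given $\phi\in\mathrm{C}_c(X)$ and a choice of $\psi\in\Psi_{\supp(\phi)}$ (which is nonempty by Lemma~\ref{lem1.s2}), the WMB lift $\mathscr{L}_\psi\phi$ is by construction an element of $\mathrm{C}_c(G)$, so it is a legitimate argument of the Haar integral on $G$.

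First I would apply the Weil-Mackey-Bruhat formula~\eqref{eq.6s2} to the function $f\coloneqq\mathscr{L}_\psi\phi\in\mathrm{C}_c(G)$, which gives
\begin{equation*}
\int_G\de\mu_G(g)(\mathscr{L}_\psi\phi)(g)=\int_X\de\mu_X(x)\,\big(P(\mathscr{L}_\psi\phi)\big)(x).
\end{equation*}
Then I would invoke Lemma~\ref{lem2.6s2}, according to which $P(\mathscr{L}_\psi\phi)=\phi$ precisely because $\psi\in\Psi_{\supp(\phi)}$, so that $(P\psi)(x)=1$ for all $x\in\supp(\phi)$ (and the product $\psi(g)\phi(q(g))$ vanishes off $q^{-1}(\supp(\phi))$, so the values of $P\psi$ elsewhere are irrelevant). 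Substituting this into the right-hand side above immediately yields
\begin{equation*}
\int_G\de\mu_G(g)(\mathscr{L}_\psi\phi)(g)=\int_X\de\mu_X(x)\,\phi(x),
\end{equation*}
which is~\eqref{eq.12s2}. The normalization clause is handled by noting that~\eqref{eq.6s2} already presupposes a mutually compatible normalization of $\mu_G,\mu_H,\mu_X$, and the identity $P(\mathscr{L}_\psi\phi)=\phi$ is independent of how $\mu_H$ (or $\mu_X$) is scaled, so the same coherent choice works here.

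There is essentially no real obstacle: this theorem is a formal corollary of the two preceding results, and the only things to check are bookkeeping — that $\mathscr{L}_\psi\phi$ genuinely lies in $\mathrm{C}_c(G)$ (already recorded right after its definition, via $\supp(\mathscr{L}_\psi\phi)\subset\supp(\psi)\cap q^{-1}(\supp(\phi))$), and that the right-hand side of~\eqref{eq.12s2} does not depend on the auxiliary choice of $\psi\in\Psi_{\supp(\phi)}$, which is automatic since the left-hand side manifestly does not. If one wanted to be scrupulous, the mildest subtlety worth a sentence is the edge case $\phi\equiv 0$, where $\supp(\phi)=\emptyset$, $\Psi_{\emptyset}=\{\psi\equiv 0\}$ by convention, and both sides of~\eqref{eq.12s2} vanish trivially.
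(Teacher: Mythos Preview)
Your proposal is correct and follows exactly the same approach as the paper's proof: apply the Weil--Mackey--Bruhat formula~\eqref{eq.6s2} to $f=\mathscr{L}_\psi\phi$, then use Lemma~\ref{lem2.6s2} to replace $P(\mathscr{L}_\psi\phi)$ by $\phi$. The paper's proof is just these two steps without the additional commentary on the edge case and the independence of~$\psi$.
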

\begin{proof}
In fact, by the second assertion of Theorem~\ref{theo.2.3s2}, 
\begin{align}
\int_G\de\mu_G(g)(\mathscr{L}_\psi\phi)(g)&=\int_X\de\mu_X(x)\big(P(\mathscr{L}_\psi\phi)\big)(x)\nonumber\\
&=\int_X\de\mu_X(x)\phi(x),
\end{align}
where, for the second equality, we have used Lemma~\ref{lem2.6s2}.
\end{proof}
We will call the Haar integral on the r.h.s.\ of~\eqref{eq.12s2} a \emph{lift} of the Haar integral on the l.h.s.\ of the same formula.  

\bigskip

In our specific applications, $X=G/H$ will be a \emph{compact} group. In this case, some of the previously discussed results admit a remarkable generalization. To start with, let us notice that, when $X$ is compact, $\mathrm{C}_c(X)$ coincides with the set $\mathrm{C}(X)$ of all continuous functions on $X$. Let us put $\Psi\equiv\Psi_X$. From Theorem~\ref{th2.7s2}, we can immediately prove the following:
\begin{corollary}
Let $X=G/H$ be compact. Then, for every $\psi\in \Psi$, we have that
\begin{equation}
\int_X \de\mu_X(x)\phi(x)=\int_G\de\mu_G(g)(\mathscr{L}_\psi\phi)(g),\qquad\forall\phi\in \mathrm{C}(X),
\end{equation}
where a suitable (mutual) normalization of $\mu_X$ and $\mu_G$ is assumed.  
\end{corollary}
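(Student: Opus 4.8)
The plan is to deduce the Corollary directly from Theorem~\ref{th2.7s2} by observing that the compactness of $X$ removes the dependence of the lift on the particular support of $\phi$. First I would note that, since $X=G/H$ is compact, $\mathrm{C}_c(X)=\mathrm{C}(X)$, and for \emph{every} $\phi\in\mathrm{C}(X)$ one has $\supp(\phi)\subset X$, so that $\Psi\equiv\Psi_X\subset\Psi_{\supp(\phi)}$. Hence a single fixed $\psi\in\Psi$ simultaneously serves as an admissible choice in the definition of a WMB lift of \emph{every} $\phi\in\mathrm{C}(X)$: the function $\mathscr{L}_\psi\phi$ lies in $\mathrm{C}_c(G)$ and, by Lemma~\ref{lem2.6s2} (applicable because $\psi\in\Psi\subset\Psi_{\supp(\phi)}$), satisfies $P(\mathscr{L}_\psi\phi)=\phi$.

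Second, I would invoke Theorem~\ref{th2.7s2} verbatim with this $\psi$: since $\psi\in\Psi_{\supp(\phi)}$, the theorem gives
\begin{equation}
\int_X \de\mu_X(x)\phi(x)=\int_G\de\mu_G(g)(\mathscr{L}_\psi\phi)(g),
\end{equation}
under the stated (mutual) normalization of $\mu_X$ and $\mu_G$. The only point requiring a word of care is that the normalization constant relating $\mu_X$ and $\mu_G$ in Theorem~\ref{th2.7s2} is fixed once and for all by the Weil--Mackey--Bruhat formula~\eqref{eq.6s2} together with the chosen normalization of $\mu_H$, and in particular it does \emph{not} depend on $\phi$ or on $\psi$; so the same normalized pair $(\mu_X,\mu_G)$ works uniformly. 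Thus the displayed identity holds for all $\phi\in\mathrm{C}(X)$ with one and the same $\psi\in\Psi$, which is exactly the assertion of the Corollary.

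There is essentially no hard step here: the content is entirely bookkeeping about which set $\Psi_K$ the test function $\psi$ must belong to, and the key simplification is the set inclusion $\Psi_X\subset\Psi_{\supp(\phi)}$ valid for every $\phi$ precisely because $X$ is compact (so that $\Psi_X$ is itself nonempty by Lemma~\ref{lem1.s2} applied with $K=X$). If anything deserves emphasis, it is that one should first check $\Psi\neq\emptyset$ — guaranteeing that the statement is not vacuous — which is immediate from Lemma~\ref{lem1.s2} with the compact set $K=X$.
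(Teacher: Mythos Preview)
Your proposal is correct and follows essentially the same approach as the paper, which simply states that the Corollary is immediate from Theorem~\ref{th2.7s2}. You have spelled out the details (the inclusion $\Psi_X\subset\Psi_{\supp(\phi)}$, the nonemptiness of $\Psi_X$ via Lemma~\ref{lem1.s2}, and the $\phi$-independence of the normalization), but the underlying argument is the same.
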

\begin{remark}
Fixed any $\psi\in\Psi\equiv\Psi_X$, the map $\mathscr{L}_\psi\colon \mathrm{C}(X)\rightarrow \mathrm{C}_c(G)$ is a right inverse of $P\colon \mathrm{C}_c(G)\rightarrow \mathrm{C}(X)$, i.e., it satisfies relation~\eqref{eq.10s2} for \emph{all} $\phi\in \mathrm{C}(X)$.    
\end{remark}
\begin{remark}\label{rem.2.14s}
Without any assumption of compactness of $X$, the map $P\colon \mathrm{C}_c(G)\rightarrow \mathrm{C}_c(X)$ can be extended to a (surjective) map $\widehat{P}\colon \mathrm{L}^1(G)\rightarrow \mathrm{L}^1(X)$, defined by 
\begin{equation}\label{eq.21s2}
\big(\widehat{P}f\big)(x)\coloneqq \int_H\de\mu_H(h)f(\ts(x)h),\qquad x\in X,\,f\in\mathrm{L}^1(G);
\end{equation}
see Lemma~$7$ of~\cite{AnielloPR} and Theorem~$3.4.6$ of~\cite{reiter2000} ($\mathrm{L}^1(G)\equiv \mathrm{L}^1(G,\mu_G)$ denotes the set of complex-valued functions on $G$ whose absolute value is integrable w.r.t.\ $\mu_G$). Moreover, the \emph{extended WMB formula} holds:
\begin{equation}
\int_G\de\mu_G(g)f(g)=\int_X\de\mu_X(x)\big(\widehat{P}f\big)(x),\qquad\forall f\in\mathrm{L}^1(G),
\end{equation}
for a suitable (mutual) normalization of the Haar measures $\mu_X$, $\mu_G$.
\end{remark}
The forthcoming Theorem will provide us with a suitable generalization of the results in Lemma~\ref{lem2.6s2} and in Theorem~\ref{th2.7s2}, tailored to the case where $X=G/H$ is a compact group. To this end, we find useful to preliminary recall the notion of  \emph{pushforward measure}. 
\begin{definition}
Let $(X,\mathcal{B}_X)$ and $(Y,\mathcal{B}_Y)$ be (Borel) measurable spaces. Let $\mu$ be a Borel measure on $X$.
If $\varphi\colon X\rightarrow Y$ is a Borel map of $X$ into $Y$, the \emph{pushforward measure} $\varphi_\ast\mu$ of 
$\mu$ through $\varphi$ is the measure on $(Y,\mathcal{B}_Y)$ defined by
\begin{equation}
\varphi_\ast\mu(\E)\coloneqq \mu\circ\varphi^{-1}(\E),
\end{equation} 
for every Borel set $\E$ in $\mathcal{B}_Y$.
\end{definition}
\begin{remark}\label{rem.3.5}
If $(X,\mathcal{B}_X)$ and $(Y,\mathcal{B}_Y)$ are Borel measurable spaces, and if $f\colon Y\rightarrow \mathbb{R}$ is a Borel function on $Y$, the following (abstract) change-of-variables formula (C.O.V.F., in short) holds~\cite{knapp05}:
\begin{equation}\label{eq.30}
\int_X (f\circ\varphi)\de\mu=\int_Y f\de(\varphi_\ast\mu).
\end{equation}
Moreover, from~\eqref{eq.30}, it is not difficult to prove the  following relation~\cite{knapp05}:
\begin{equation}\label{eq.2}
\varphi_\ast(g\de\mu)=g\circ\varphi^{-1}\de(\varphi_\ast\mu),
\end{equation}
for every Borel function $g\colon X\rightarrow\mathbb{R}$. We shall constantly resort to this formula in our description of integration theory on $\rat_p$-manifolds.
\end{remark}
We are now ready to prove the following result
\begin{theorem}\label{th.2.17}
Let $X=G/H$ be compact. Then, for every $\psi\in \Psi$, the map $\mathscr{L}_\psi\colon \mathrm{C}(X)\rightarrow \mathrm{C}_c(G)$ admits an extension --- a so-called \emph{extended WMB lift} --- 
\begin{equation}
\widehat{\mathscr{L}}_\psi\colon \mathrm{L}^1(X)\rightarrow \mathrm{L}^1(G),
\end{equation}
defined by
\begin{equation}\label{eq.27s}
\big(\widehat{\mathscr{L}}_\psi\phi\big)(g)\coloneqq \psi(g)(\phi\circ q)(g),
\end{equation}
that is a right inverse of $\widehat{P}$:
\begin{equation}\label{eq.19s2}
\widehat{P}\big(\widehat{\mathscr{L}}_\psi\phi\big)=\phi,\quad\forall\phi\in\mathrm{L}^1(X).
\end{equation}
Moreover, for every $\phi\in L^1(X)$, we have that
\begin{equation}\label{eq.20s2}
\int_X\de\mu_X(x)\phi(x)=\int_G\de\mu_G(g)\big(\widehat{\mathscr{L}}_\psi\phi\big)(g),
\end{equation}
for a suitable (mutual) normalization of $\mu_X$, $\mu_G$.
\end{theorem}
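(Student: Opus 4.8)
The plan is to verify the three assertions of Theorem~\ref{th.2.17} in order: that formula~\eqref{eq.27s} does define a map $\widehat{\mathscr{L}}_\psi\colon\mathrm{L}^1(X)\to\mathrm{L}^1(G)$ extending $\mathscr{L}_\psi$; that it is a right inverse of $\widehat{P}$, i.e.~\eqref{eq.19s2} holds; and that the lifting identity~\eqref{eq.20s2} holds. The inputs are the Weil--Mackey--Bruhat formula of Theorem~\ref{theo.2.3s2} and its $\mathrm{L}^1$-extension from Remark~\ref{rem.2.14s}, the fact that for $\psi\in\Psi$ every $\phi\in\mathrm{C}(X)=\mathrm{C}_c(X)$ is lifted by $\mathscr{L}_\psi\phi$ (Theorem~\ref{th2.7s2} read with $\mathrm{C}_c(X)=\mathrm{C}(X)$), and the normalization $(P\psi)(x)=1$ for all $x\in X$ that is built into the definition of $\Psi$. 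Throughout I fix $\mu_X,\mu_G,\mu_H$ with the relative normalization for which the WMB formula holds, and I use repeatedly that $q(\ts(x)h)=q(\ts(x))\,q(h)=x$ for every $x\in X$ and $h\in H$ (since $\ts$ is a cross section, $q$ a homomorphism, and $q(h)$ the identity of $X=G/\ker q$).

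First I would establish well-definedness. Pick a Borel representative of $\phi\in\mathrm{L}^1(X)$; then $\widehat{\mathscr{L}}_\psi\phi=\psi\cdot(\phi\circ q)$ is Borel on $G$ (composition with the continuous map $q$), and it reduces to $\mathscr{L}_\psi\phi\in\mathrm{C}_c(G)$ when $\phi\in\mathrm{C}(X)$. Since $|\widehat{\mathscr{L}}_\psi\phi|=\psi\cdot(|\phi|\circ q)$, to bound the $\mathrm{L}^1(G)$-norm it is enough to integrate a non-negative function, and for this I would use the \emph{Tonelli form} of the WMB formula, valid for any non-negative Borel integrand $f\colon G\to[0,\infty]$; this one obtains from Theorem~\ref{theo.2.3s2} by identifying $\mu_G$ with the pushforward $(\gamma_\ts)_\ast(\mu_X\otimes\mu_H)$ --- an identification furnished by the first line of~\eqref{eq.6s2} together with uniqueness in the Riesz Representation Theorem --- and then invoking the change-of-variables formula~\eqref{eq.30} and Tonelli's theorem for the $\sigma$-finite product $\mu_X\otimes\mu_H$ (note $\mu_X$ is finite since $X$ is compact, cf.~Remark~\ref{rem.finmes}). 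Applied to $f=\psi\cdot(\phi\circ q)$ with $\phi\geq0$, it gives $\int_G\psi(g)\phi(q(g))\,\de\mu_G(g)=\int_X(P\psi)(x)\,\phi(x)\,\de\mu_X(x)=\int_X\phi\,\de\mu_X<\infty$, because the inner $H$-integral of $\psi(\ts(x)h)\phi(q(\ts(x)h))=\phi(x)\,\psi(\ts(x)h)$ equals $\phi(x)(P\psi)(x)=\phi(x)$. Hence $\widehat{\mathscr{L}}_\psi\phi\in\mathrm{L}^1(G)$, and in fact $\|\widehat{\mathscr{L}}_\psi\phi\|_{\mathrm{L}^1(G)}=\|\phi\|_{\mathrm{L}^1(X)}$; moreover, applying the plain Tonelli form of the WMB formula to $\mathbf 1_N\circ q$ with $\mu_X(N)=0$ yields $\mu_G(q^{-1}(N))=\int_X\mathbf 1_N(x)\,\mu_H(H)\,\de\mu_X(x)=0$, so $\phi\circ q$, hence $\widehat{\mathscr{L}}_\psi\phi$, is independent --- up to $\mu_G$-null sets --- of the chosen representative of $\phi$. (Equivalently one can argue abstractly: by Theorem~\ref{th2.7s2}, $\mathscr{L}_\psi$ is an $\mathrm{L}^1$-isometry on the dense subspace $\mathrm{C}(X)\subset\mathrm{L}^1(X)$, hence extends uniquely to an isometry $\mathrm{L}^1(X)\to\mathrm{L}^1(G)$, which one then identifies with~\eqref{eq.27s} by passing to an a.e.-convergent subsequence --- but this route still requires the fact that $q$ does not enlarge null sets.)

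Next I would deduce \eqref{eq.19s2} and \eqref{eq.20s2}. Since $\widehat{\mathscr{L}}_\psi\phi\in\mathrm{L}^1(G)$, Remark~\ref{rem.2.14s} applies, and for $\mu_X$-a.e.\ $x\in X$ one computes $\big(\widehat{P}(\widehat{\mathscr{L}}_\psi\phi)\big)(x)=\int_H\psi(\ts(x)h)\phi(q(\ts(x)h))\,\de\mu_H(h)=\phi(x)\int_H\psi(\ts(x)h)\,\de\mu_H(h)=\phi(x)(P\psi)(x)=\phi(x)$, which is~\eqref{eq.19s2} (the $\mathrm{L}^1$-analogue of Lemma~\ref{lem2.6s2}). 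Finally, the extended WMB formula of Remark~\ref{rem.2.14s}, applied to $f=\widehat{\mathscr{L}}_\psi\phi\in\mathrm{L}^1(G)$ and combined with~\eqref{eq.19s2}, gives $\int_G\big(\widehat{\mathscr{L}}_\psi\phi\big)(g)\,\de\mu_G(g)=\int_X\big(\widehat{P}(\widehat{\mathscr{L}}_\psi\phi)\big)(x)\,\de\mu_X(x)=\int_X\phi(x)\,\de\mu_X(x)$, i.e.~\eqref{eq.20s2}, with the same mutual normalization of $\mu_X$ and $\mu_G$.

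The only genuinely non-routine step is the integrability of $\widehat{\mathscr{L}}_\psi\phi$ --- and, what amounts to the same computation, the insensitivity of $\phi\circ q$ to the chosen representative of $\phi$: this forces one to use the WMB identity slightly beyond the $\mathrm{C}_c$/$\mathrm{L}^1$ generality in which it is stated in the excerpt, namely for non-negative Borel integrands, which I would supply via the pushforward identification $\mu_G=(\gamma_\ts)_\ast(\mu_X\otimes\mu_H)$ and Tonelli's theorem (or, equivalently, via the isometric-extension argument). Everything else reduces to substituting $q(\ts(x)h)=x$ and $(P\psi)\equiv1$ on $X$ into identities already available.
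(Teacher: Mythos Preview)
Your proof is correct and follows essentially the same approach as the paper: both establish $\widehat{\mathscr{L}}_\psi\phi\in\mathrm{L}^1(G)$ via the pushforward identification $\mu_G=(\gamma_\ts)_\ast(\mu_X\otimes\mu_H)$ (the paper invokes Lemma~7 of~\cite{AnielloPR} for this, while you extract it from the first line of~\eqref{eq.6s2} and Riesz uniqueness), then apply the change-of-variables formula and Tonelli's theorem to obtain the isometry identity~\eqref{eq.22s2}. You are in fact more explicit than the paper, which merely states ``at this point, one easily proves~\eqref{eq.19s2} and~\eqref{eq.20s2}'', and you also address the independence of $\widehat{\mathscr{L}}_\psi\phi$ from the chosen Borel representative of $\phi$, a point the paper passes over in silence.
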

\begin{proof}
Let us first prove that, for every $\phi\in\mathrm{L}^1(X)$, the (Borel) function $\widehat{\mathscr{L}}_\psi\phi$ belongs to $\mathrm{L}^1(G)$. In fact, by Lemma~$7$ of~\cite{AnielloPR}, for any Borel section $\ts\colon X\rightarrow G$, the pushforward measure $(\gamma_\ts)_\ast(\mu_X\times\mu_H)$ coincides (up to normalization) with $\mu_G$. Hence, we have that
\begin{align}
\int_G \de\mu_G(g)\big|\big(\widehat{\mathscr{L}}_\psi\phi\big)(g)\big|&=\int_G\de\mu_G(g)\psi(g)|\phi(q(g))|\nonumber\\
&=\int_G \de\big((\gamma_\ts)_\ast(\mu_X\times\mu_H)\big)(g)\psi(g)|\phi(q(g))|\nonumber\\
&=\int_{X\times H}\de\mu_X\times\mu_H(x,h)\psi(\ts(x)h)|\phi(x)|\nonumber\\
&=\int_X \de\mu_X(x)\int_H\de\mu_H(h)\psi(\ts(x)h)|\phi(x)|,
\end{align}
where the last equality is obtained by Tonelli's theorem. Therefore, we find that
\begin{equation}\label{eq.22s2}
\int_G\de\mu_G(g)\big|\big(\widehat{\mathscr{L}}_\psi\phi\big)(g)\big|=\int_X\de\mu_X(x)|\phi(x)|=\|\phi\|_{\mathrm{L}^1(X)},
\end{equation}
and $\widehat{\mathscr{L}}_\psi\phi\in\mathrm{L}^1(G)$. At this point, one easily proves~\eqref{eq.19s2} and~\eqref{eq.20s2}.
\end{proof}
\begin{remark}
Relation~\eqref{eq.22s2} shows that $\widehat{\mathscr{L}}_\psi\colon\mathrm{L}^1(X)\rightarrow\mathrm{L}^1(G)$ is a (linear) isometry.
\end{remark}
To conclude this section, we state the following remarkable consequence of Theorem~\ref{theo.2.3s2}
\begin{theorem}
Let us suppose that $H$ is compact. Then, for a suitable normalization of $\mu_G$ and $\mu_X$, $q_\ast(\mu_G)=\mu_X$.    
\end{theorem}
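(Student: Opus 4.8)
The plan is to compute the pushforward $q_\ast(\mu_G)$ by testing it against an arbitrary $\phi\in\mathrm{C}(X)$ via the abstract change-of-variables formula~\eqref{eq.30}, and then to recognize the resulting functional as (a multiple of) the Haar integral on $X$, invoking the essential uniqueness of the Haar measure recalled in Subsection~\ref{sec2n}. Concretely, for $\phi\in\mathrm{C}(X)$ the C.O.V.F.\ gives $\int_X\phi\,\de(q_\ast\mu_G)=\int_G(\phi\circ q)\,\de\mu_G$, so the task reduces to showing that $\int_G(\phi\circ q)\,\de\mu_G$ equals a fixed positive constant times $\int_X\phi\,\de\mu_X$.

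First I would check that $q_\ast(\mu_G)$ is a well-defined, non-zero Radon measure on $X$: since $H$ is compact and $X=G/H$ is a homogeneous space, $G$ is $\sigma$-compact (indeed a LCSC group), so $q_\ast(\mu_G)$ is $\sigma$-finite and inner/outer regular, and it is non-zero because $\mu_G$ is non-zero and $q$ is surjective; finiteness on compacta follows because $q^{-1}(K)$ is compact for compact $K\subset X$ (here compactness of $H$ is essential — $q^{-1}(K)$ is closed and maps onto the compact $K$ with compact fibres $gH$). Next I would establish left-invariance of $q_\ast(\mu_G)$ under the $G$-action~\eqref{eq.7s}: for $g\in G$ and a Borel set $\E\subset X$ one has $q^{-1}(g^{-1}\E)=g^{-1}q^{-1}(\E)$ by equivariance of $q$, hence $q_\ast(\mu_G)(g^{-1}\E)=\mu_G(g^{-1}q^{-1}(\E))=\mu_G(q^{-1}(\E))=q_\ast(\mu_G)(\E)$ using left-invariance of $\mu_G$. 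In particular $q_\ast(\mu_G)$ is invariant under translations by elements of $X$ itself (take $g=\ts(x)$), so it is a non-zero $X$-invariant Radon measure on the compact group $X$, i.e.\ a Haar measure; by essential uniqueness it equals $c\,\mu_X$ for some $c\in\mathbb{R}^+_\ast$, and fixing the mutual normalization so that $c=1$ gives $q_\ast(\mu_G)=\mu_X$.

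Alternatively — and this is the route I would actually prefer to present, since it pins down the normalization transparently — I would feed the constant function $\psi\equiv 1$ on $G$ into the Weil--Mackey--Bruhat formula of Theorem~\ref{theo.2.3s2}. When $H$ is compact, $(P1)(x)=\int_H\de\mu_H(h)\,1=\mu_H(H)$ is a finite positive constant, so $\psi\equiv 1$ lies in $\Psi_K$ for every compact $K$ once $\mu_H$ is normalized to $\mu_H(H)=1$. Then for $\phi\in\mathrm{C}(X)$, applying~\eqref{eq.6s2} to $f=\mathscr{L}_1\phi=\phi\circ q$ yields $\int_G\de\mu_G(g)(\phi\circ q)(g)=\int_X\de\mu_X(x)\,(P(\phi\circ q))(x)=\int_X\de\mu_X(x)\,\phi(x)$ by Lemma~\ref{lem2.6s2}; combined with the C.O.V.F.\ this is exactly $\int_X\phi\,\de(q_\ast\mu_G)=\int_X\phi\,\de\mu_X$ for all $\phi\in\mathrm{C}(X)$, whence $q_\ast(\mu_G)=\mu_X$ by the Riesz representation theorem.

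The only genuinely delicate point is the verification that $q_\ast(\mu_G)$ is an honest Radon measure — specifically local finiteness — and this is exactly where the compactness hypothesis on $H$ is used: without it, $q^{-1}(K)$ need not have finite $\mu_G$-measure (e.g.\ $G=\mathbb{R}$, $H=\mathbb{Z}$ is fine, but $G=\mathbb{R}$, $H=\{0\}$ shows nothing goes wrong, whereas $H$ non-compact can make $q^{-1}(\text{point})$ of infinite measure), so the pushforward would fail to be Radon and the statement would be false. Everything else is a routine application of equivariance of $q$ together with the uniqueness of Haar measure; once local finiteness is secured, the proof is essentially immediate from Theorem~\ref{theo.2.3s2}.
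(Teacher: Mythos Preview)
Your preferred (second) route is essentially the paper's proof: both verify that $q_\ast(\mu_G)$ is Radon via compactness of $q^{-1}(K)$ when $H$ is compact, then apply the WMB formula~\eqref{eq.6s2} directly to $f=\phi\circ q\in\mathrm{C}_c(G)$ and compute $P(\phi\circ q)=\phi$ under the normalization $\mu_H(H)=1$. One cosmetic point: the constant function $\psi\equiv 1$ is not literally in $\Psi_K\subset\mathrm{C}_c^+(G)$ unless $G$ itself is compact, so it is cleaner to bypass the $\mathscr{L}_\psi$ framing and, as the paper does, simply note that $\phi\circ q$ is compactly supported (because $q^{-1}(\supp\phi)$ is compact) and feed it directly into~\eqref{eq.6s2}.
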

\begin{proof}
Since $H$ is compact, $q_\ast(\mu_G)$ is a Radon measure on $X$ (for every compact $E\subset X$, $q^{-1}(E)=KH$, with $K\subset G$ compact by Lemma~$2.48$ of~\cite{folland2016course}; hence, $KH$ is compact too). Then, for every $\phi\in \mathrm{C}_c(X)$, by the WMB formula we have that 
\begin{align}
\int_X \de q_\ast(\mu_G)(x)\phi(x)&=\int_G\de\mu_G(g)\phi(q(g))\nonumber\\
&=\int_X\de\mu_X(x)\int_H\de\mu_H(h)\phi(x)=\int_X\de\mu_X(x)\phi(x),
\end{align}
where we have assumed that $\mu_H(H)=1$ and $\mu_G$,$\mu_H$ are suitably normalized. Hence, $q_\ast(\mu_G)=\mu_X$.
\end{proof}

%%%%%%%%%%%%%%%%%%%%%%%%%%%%%%%Subsection 2.2 %%%%%%%%%%%%%%%%%%%%%%%%%%%%%%%

\subsection{\texorpdfstring{$p$}{Lg}-Adic Lie groups}\label{subsec.2.1}
In this subsection, we discuss the main features of $p$-adic manifolds and $p$-adic Lie groups~\cite{schneider2011p,glockner,serre2009}.

As in the standard real setting, the starting point is to introduce a suitable notion of chart on a Hausdorff space.
\begin{definition}
Let $\h$ be a Hausdorff space. A \emph{chart} on $\h$ is a triple $(\U, \varphi,  \rat_p^n)$, where $\U\subset\h$ is an open subset and $\varphi\colon \U\rightarrow \rat_p^n$ is a map such that $\varphi\colon \U\rightarrow \varphi(\U)$ is a homeomorphism. We refer to $\U$ as the \emph{domain} of the chart, and to $n\in\mathbb{N}$ as its dimension. 
\end{definition}
If $x\in\U\subset\h$, we say that $(\U,\varphi,\rat_p^n)$ is a \emph{chart around $x$}. In the following, we will set $\ha\vf\colon\vf(\U)\rightarrow\U$ to be the inverse map of $\varphi$ on its range. 
\begin{definition}
If $U$ is an open subset of $\rat_p^n$, a  function $f\colon U\rightarrow \rat_p$ is said to be a   \emph{$\rat_p$-analytic function}, if it is expressed by a convergent power series in a neighborhood of every $x$ in $U$. A map $f=(f_1,\ldots,f_m)\colon U\rightarrow \rat_p^m$ is said to be a \emph{$\rat_p$-analytic map}, if every $f_i$, $i=1,\ldots,m$, is a $\rat_p$-analytic function.
\end{definition}
\begin{definition}
Two charts $(\U_1,\varphi_1,\rat_p^{n_1})$ and $(\U_2,\varphi_2,\rat_p^{n_2})$  on $\h$ are  \emph{compatible}, if both  $\varphi_2\circ\ha\varphi_1\colon \varphi_1(\U_1 \cap \U_2)\rightarrow \varphi_2(\U_1\cap \U_2)$ and $\varphi_1\circ\ha\varphi_2\colon \varphi_2(\U_1 \cap \U_2)\rightarrow \varphi_1(\U_1\cap \U_2)$ are $\rat_p$-analytic maps.
\end{definition}
If $(\U_1,\varphi_1,\rat_p^{n_1})$ and $(\U_2,\varphi_2,\rat_p^{n_2})$ are compatible charts on $\h$ such that $\U_1\cap \U_2\neq\emptyset$, then one can prove that, necessarily, it is $n_1=n_2$~\cite{schneider2011p}. 
\begin{definition}
An \emph{atlas} $\mathcal{A}$ for $\h$ is a family $\{(\U_\alpha,\varphi_\alpha,\rat_p^{n_\alpha})\}_{\alpha\in A}$ of pairwise compatible charts which cover $\h$, i.e., $\h=\bigcup_{\alpha\in A}\U_\alpha$. An atlas $\mathcal{A}$ for $\h$ is called $n$-dimensional if all the charts in $\mathcal{A}$ have dimension $n$.
\end{definition}
Similarly to the standard real case, it is now natural to set the following: 
\begin{definition}\label{def.2.4}
A Hausdorff space, $\h$, together with a maximal (w.r.t.\ inclusion) atlas $\mathcal{A}$ is called a $\rat_p$\emph{-analytic manifold}. The manifold is called
$n$-dimensional if the atlas $\mathcal{A}$ is $n$-dimensional.
\end{definition}
For notational convenience, in what follows we shall denote an $n$-dimensional atlas on $\h$ as $\mathcal{A}=\{(\U_\alpha,\vf_\alpha)\}_{\alpha\in A}$; moreover, we will refer to `$\rat_p$-analytic manifold' simply as `$\rat_p$-manifold'. If $\h$, $\hy$ are two $\rat_p$-manifolds of dimension $m$ and $n$ respectively, we shall say that a map $f$ from $\h$ to $\hy$ is \emph{$\rat_p$-analytic} if, for every $x\in\h$, there exist a chart $(\U,\varphi,\rat_p^m)$ on $\h$ around $x$, and a chart $(\mathscr{V}, \psi,\rat_p^n)$ on $\hy$ around $f(x)$, such that $f(\U)\subset \mathscr{V}$, and $\psi\circ f\circ \ha\varphi\colon \varphi(\U)\rightarrow \rat_p^n$ is a $\rat_p$-analytic map.
\begin{remark}\label{rem:totdiscimopencom}
Every $\rat_p$-manifold $\h$ is both totally disconnected and locally compact (TDLC in short). In particular, the latter condition entails that for every point $x$ of $\h$, the set $\mathcal{T}_x$ of all compact open subsets in $\h$ containing $x$ forms a base at $x$ (see Lemma~7.1.1 in~\cite{igusa2000}). Therefore, the set $\mathcal{T}(\h)=\bigcup_{x\in\h}\mathcal{T}_x$ of all the compact open subsets of $\h$ forms a basis for the topology of $\h$.
\end{remark}

 Analytic differential forms on $\rat_p$-manifolds are defined in a similar fashion to the standard real setting  (see Chapter 2 in~\cite{igusa2000} for a thorough discussion). Indeed, let $\h$ be a $\rat_p$-manifold of dimension $n$, and let $\mathcal{A}=\{(\U_\alpha,\varphi_\alpha)\}_{\alpha\in A}$ be an atlas on $\h$. If $\Theta$ is a \emph{differential form of degree $k<n$} on $\h$, its restriction $\Theta_\alpha\coloneqq\Theta|_{\U_\alpha}$ --- in the local coordinates of $(\U_\alpha,\vf_\alpha)$ --- is given by
\begin{equation}
\Theta_\alpha(u)=\sum_{j_1<\ldots<j_k}\theta^\alpha_{j_1\ldots j_k}(u)dx_{j_1}\wedge\ldots\wedge dx_{j_k},   
\end{equation}
where $\theta^\alpha_{j_1\ldots j_k}$ are $\rat_p$-valued functions on $\U_\alpha$, and where we set $\vf_\alpha(u)=(x_1,\ldots,x_n)$ to denote the local coordinates of $u$ 
 in $\U_\alpha$. If, for every $\alpha\in A$, the maps $\theta^\alpha_{j_1,\ldots, j_k}$ are all $\rat_p$-analytic functions on $\U_\alpha$, we say that $\Theta$ is a \emph{$\rat_p$-analytic differential $k$-form} on $\h$. If $\Omega$ is a $\rat_p$-analytic differential $n$-form on $\h$ (i.e., of maximal degree equal to the dimension $n$ of $\h$), its local expression $\Omega_\alpha\coloneqq\Omega|_{\U_\alpha}$ can be written as
\begin{equation}  \label{eq:formamaxloca} 
\Omega_\alpha(u)=\omega_\alpha(u)\de x_1\wedge\ldots\wedge\de x_n,
\end{equation}
for $\omega_\alpha\colon\U_\alpha\rightarrow\mathbb{Q}_p$ a $\rat_p$-analytic function. In what follows, we shall abbreviate `$\rat_p$-analytic differential $k$-form’ to `differential $k$-form’.

Let $F\colon \h\rightarrow\hy$ be a $\rat_p$-analytic map between $n$-dimensional $\rat_p$-manifolds $\h$, $\hy$, and let $\Xi$ be a differential $n$-form on $\hy$. The \emph{pullback} $F^\ast\Xi$ of $\Xi$ through $F$ is a well-defined differential $n$-form on $\h$~\cite{igusa2000}; specifically, if $(\U_\alpha,\vf_\alpha)$ is a chart in $\h$, and $(\mathscr{V}_\beta,\psi_\beta)$ is a chart in $\hy$, then, on $\U_\alpha\cap F^{-1}(\mathscr{V}_\beta)$ one has:
\begin{equation}\label{eq:diffformpull}
F^\ast\Xi_\beta=F^\ast(\xi_\beta\,\de y_1\wedge\ldots\wedge \de y_n)=(\xi_\beta\circ F)(\det \mathrm{D}F)\de x_1\wedge\ldots\wedge\de x_n,
\end{equation}
where $(x_i)_{i=1}^n$ and $(y_j)_{j=1}^n$ denote the systems of local coordinates of $\U_\alpha$ and $\mathscr{V}_\beta$ respectively, and where $\mathrm{D}F$ is the \emph{Jacobian matrix} of the transformation $F$.

To conclude this subsection, we now discuss the principal object of our investigations: 
\begin{definition}\label{defliegroups}
A \emph{$p$-adic Lie group} $G$ is a $\mathbb{Q}_p$-manifold which is also a group, and such that the multiplication map 
\begin{equation}\label{prodonlig}
 G\times G \ni(g,h)\mapsto gh\in G
\end{equation}
is $\rat_p$-analytic. 
\end{definition}
From Definition~\ref{defliegroups}, it follows that the inverse map, $G\ni g\mapsto g^{-1}\in G$, is a $\rat_p$-analytic map. Moreover, it is clear that every $p$-adic Lie group is a TDLC Hausdorff space (see Definition~\ref{def.2.4} and Remark~\ref{rem:totdiscimopencom}).
\begin{remark}\label{rem.2.9}
Let $G$ be a $p$-adic Lie group. For $h\in G$, the map $\ell_h$ of \emph{left translation} by $h$ is defined as:
\begin{equation}
G\ni g\mapsto \ell_h(g)\coloneqq hg\in G.
\end{equation}
This map is the composition of the map $G\ni g\mapsto (h,g)\in G\times G$, and the multiplication map defined in~\eqref{prodonlig}; hence, it is $\rat_p$-analytic (the composition of $\rat_p$-analytic maps is a $\rat_p$-analytic map; see Lemma~$8.4$ in~\cite{schneider2011p}). Similarly, one can define the map of right translation, $r_h$, on $G$, which is $\rat_p$-analytic as well. 
\end{remark}
\begin{remark} 
A classical result by van Dantzig (see Theorem~7.7 in~\cite{hewitt1979}) states that a TDLC group admits a base at the identity consisting of compact open subgroups (and vice versa). This result provides a peculiar characterization of the topology of $p$-adic Lie groups.
\end{remark}
Since a $p$-adic Lie group, $G$, is a $\rat_p$-manifold, we can clearly define differential $k$-forms on it. In particular, we say that a differential $k$-form $\Theta$ on  $G$ is \emph{left-invariant} if $\ell_h^\ast\Theta=\Theta$ for any $h\in G$, i.e., if
\begin{equation}\label{eq.72s}
\ell_h^\ast\Theta(g)=\Theta(h^{-1}g)
\end{equation}
holds for every $g$ and $h$ in $G$. Right-invariant differential $n$-forms are defined similarly with $\ell_h$ replaced by $r_h$. By taking $h\equiv g^{-1}$ and $g\equiv e$ in~\eqref{eq.72s}, we also see that 
\begin{equation}\label{eq.73s}
\ell_{g^{-1}}^\ast\Theta(e)=\Theta(g),
\end{equation}
that is, if $\Theta$ is left-invariant on $G$, its value at every point on $G$ is determined by the value $\Theta$ assumes at the identity $e$ in $G$. In the next subsection, we shall prove that a left-invariant $n$-form on $G$ can always be constructed, and that it naturally induces the left-invariant Haar measure on $G$. For the moment, we want to stress a relevant topological feature of $p$-adic Lie groups which will turn out to be central in our later derivations.  
 We recall that a Hausdorff space $\h$ is called \emph{paracompact}, if every open covering of $\h$ can be refined into a \emph{locally finite} open covering. We say that $\h$ is \emph{strictly paracompact} if every open cover of $\h$ admits a refinement consisting of pairwise disjoint open sets. 
\begin{proposition}\label{prop.2.9}
Let $G$ be a second countable $p$-adic Lie group. Then, $G$ is a strictly paracompact space. 
\end{proposition}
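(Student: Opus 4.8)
The plan is to exploit the two defining topological features of $G$ recalled in the excerpt: first, $G$ is a $\rat_p$-manifold, hence totally disconnected and locally compact (Remark~\ref{rem:totdiscimopencom}), so the compact open subsets of $G$ form a basis for its topology; second, $G$ is second countable by hypothesis. The goal is to show that an arbitrary open cover of $G$ can be refined by a \emph{pairwise disjoint} open cover. I would not try to refine the cover directly; instead I would first produce a single convenient pairwise disjoint open cover of $G$ by compact open sets (a ``partition into clopen pieces''), and then intersect it appropriately with the given cover.

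First I would fix an open cover $\{V_i\}_{i\in I}$ of $G$. Using Remark~\ref{rem:totdiscimopencom}, for each $g\in G$ choose a compact open set $W_g$ with $g\in W_g$ and $W_g\subset V_{i(g)}$ for some index $i(g)$. The family $\{W_g\}_{g\in G}$ is then an open cover of $G$ refining $\{V_i\}$, and it consists of compact open sets. By second countability (Lindel\"of property), I can extract a countable subcover $\{W_n\}_{n\in\mathbb{N}}$, still consisting of compact open sets and still refining the original cover. Now I would ``disjointify'' in the standard way: set $D_1 \coloneqq W_1$ and, for $n\geq 2$, $D_n \coloneqq W_n \setminus (W_1\cup\cdots\cup W_{n-1})$. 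The key point here, and the step that genuinely uses the $p$-adic (totally disconnected) structure rather than just paracompactness, is that each $D_n$ is again \emph{open}: a finite union of compact open sets is compact open, and in a totally disconnected locally compact space every compact open set is clopen, so its complement is open; hence $D_n$ is the intersection of an open set $W_n$ with an open set, thus open. (Each $D_n$ is also compact open, being a closed subset of the compact set $W_n$ that is also open.) By construction the $D_n$ are pairwise disjoint, they cover $G$ (since the $W_n$ do), and each $D_n\subset W_n\subset V_{i}$ for a suitable $i$, so $\{D_n\}_{n\in\mathbb{N}}$ is the desired refinement.

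I expect the main obstacle — really the only non-formal point — to be justifying that the sets $D_n$ obtained by the disjointification are open; in a general paracompact space this would fail, and it is exactly here that one must invoke that $G$ is totally disconnected and locally compact so that compact open implies clopen. Everything else is routine: the existence of a compact-open refinement is Remark~\ref{rem:totdiscimopencom}, the passage to a countable subcover is the Lindel\"of property coming from second countability, and the disjointification formula together with disjointness and the covering property are purely set-theoretic. One small remark worth making in the writeup is that the argument in fact shows something slightly stronger, namely that every open cover of a second countable $\rat_p$-manifold admits a refinement by \emph{countably many pairwise disjoint compact open} sets; this refined statement is what gets used in Section~\ref{cosHaarmes} to build the mutually disjoint atlas underlying the Haar measure construction.
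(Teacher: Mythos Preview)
Your proof is correct and complete. It differs from the paper's approach in a meaningful way: the paper argues abstractly, first observing that a LCSC Hausdorff space is $\sigma$-compact, hence Lindel\"of, hence paracompact (citing Engelking), and then invoking Proposition~8.7 of Schneider's book, which states that for a $\rat_p$-manifold paracompactness and strict paracompactness are equivalent. Your argument, by contrast, is direct and self-contained: you refine the given cover by compact open sets (using Remark~\ref{rem:totdiscimopencom}), pass to a countable subfamily via the Lindel\"of property, and then disjointify, the crucial observation being that compact open sets in a Hausdorff space are clopen, so the differences $D_n$ remain open. One small sharpening: the clopenness of the $W_n$ follows already from Hausdorffness (compact subsets of Hausdorff spaces are closed); what the totally disconnected locally compact structure actually buys you is the \emph{existence} of a basis of compact open sets in the first place, i.e.\ the very first step. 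Your approach has the advantage of yielding the slightly stronger conclusion you note --- a refinement by countably many pairwise disjoint compact open sets --- which is exactly what is used later to build the disjoint atlas; the paper's approach is shorter on the page but defers the work to two external references.
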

\begin{proof}
By assumption, $G$ is locally compact, second countable and Hausdorff, hence $\sigma$-compact (i.e.\ union of countably many compact subspaces). Every $\sigma$-compact space is Lindel\"of, and, therefore, paracompact (cf.\ Theorem~5.1.11.\ in~\cite{engelking89}). Then, the proposition follows by the equivalence of the points $\mathrm{i}$ and $\mathrm{ii}$ of Proposition~8.7 in~\cite{schneider2011p}.   
\end{proof}
\begin{remark}
From Proposition~\ref{prop.2.9} it follows that any second countable $p$-adic Lie group $G$ can always be endowed with an atlas consisting of pairwise disjoint charts. Indeed, since every atlas is an open covering, it admits a refinement consisting of pairwise disjoint open sets. Then, the restriction of the coordinate maps of the initial atlas to the sets in the refinement provides a new system of charts for $G$.
\end{remark}
\begin{remark}\label{rem.2.31}
It is well known that a LC group $G$ is \emph{Polish} iff it admits a second countable topology  (see Theorem~5.3 in~\cite{kechris95set}). Therefore, a second countable $p$-adic Lie group is also a Polish group.
\end{remark}

In this work, $p$-adic Lie groups will always be assumed to be second countable (as so are the most important examples); hence, they are LCSC Polish groups.

%%%%%%%%%%%%%%%%%%%%%%%%%%%%%%%Subsection 2.2 %%%%%%%%%%%%%%%%%%%%%%%%%%%%%%%

\subsection{\texorpdfstring{$p$}{Lg}-Adic rotation groups}\label{subsec.3.2n}
A noteworthy class of  $p$-adic Lie groups is given by the special orthogonal groups over the $p$-adic fields. We devote this subsection to recall some of their  basic properties~\cite{our2,our1st}. 

The general definition of special orthogonal group is given in terms of \emph{quadratic forms} $Q\colon \mathrm{V}\rightarrow \mathbb{F}$, for $\mathrm{V}$ a vector space over a field $\mathbb{F}$ (see~\cite{ serre2012course,Cassels} for a thorough discussion). Quadratic forms, up to linear equivalence and scaling, lead to isomorphic special orthogonal groups~\cite{serre2012course}. In this work, we always assume that the quadratic
forms $Q$ are \emph{non-degenerate} (i.e. they have maximum rank).

If the characteristic of $\mathbb{F}$ is different from $2$ (as it is for $\mathbb{R}$ and $\mathbb{Q}_p$), a bilinear form $b(\mathbf{x},\mathbf{y})$, $\mathbf{x},\,\mathbf{y}\in \mathrm{V}$, induces a quadratic form $Q(\mathbf{x}) = b(\mathbf{x}, \mathbf{x})$ and, \emph{vice versa}, a quadratic form induces a bilinear form, i.e.,
\begin{equation}\label{eq:sclprodQbil} b(\mathbf{x},\mathbf{y})\coloneqq \frac{1}{2}\big(Q(\mathbf{x}+\mathbf{y})-Q(\mathbf{x})-Q(\mathbf{y})\big).
\end{equation}
Therefore, we are allowed to use interchangeably quadratic forms and bilinear forms.

The unique non-degenerate definite quadratic form on $\mathbb{R}^n$, for every $n\geq2$, is given (up to linear equivalence and scaling) by $Q_{\mathbb{R}}(\mathbf{x})=\sum_{i=0}^{n-1}x_i^2$. This is represented in the canonical basis by the $n$-dimensional identity matrix $\mathrm{I}_n$. 
Thus, the (compact) special orthogonal group over $\mathbb{R}$ of degree $n$ is 
\begin{align}
\so(n,\mathbb{R})&=\{L\in \mathsf{M}_n(\mathbb{R})\;\;|\;\;L^\top L=\mathrm{I}_n,\ \det(L)=1\}\\
&=\{L\in \mathsf{M}_n(\mathbb{R})\;\;|\;\;\langle L\mathbf{x},L\mathbf{y}\rangle=\langle \mathbf{x}, \mathbf{y}\rangle\ \textup{for every}\ \mathbf{x}, \mathbf{y}\in\mathbb{R}^n,\;\;\det(L)=1\},
\end{align}
where $\langle\hspace{0.3mm}\cdot\hspace{0.6mm},\cdot\hspace{0.3mm}\rangle\colon\mathbb{R}^n\times\mathbb{R}^n\rightarrow\mathbb{R}$ is the Euclidean scalar product on $\mathbb{R}^n$, and $\mathsf{M}_n(\mathbb{R})$ denotes the associative algebra of $n\times n$ matrices over the field of real numbers $\mathbb{R}$.

\medskip

The following theorem characterizes the definite quadratic forms over $\mathbb{Q}_p$ in every dimension, as explicitly derived in~\cite{our1st} (see also~\cite{serre2012course,Cassels}).
\begin{theorem}\label{quadform}
For every prime $p>2$, let $u\in\mathbb{U}_p$ be a non-square --- with $\mathbb{U}_p$ denoting the group of $p$-adic units, i.e., the group of all invertible elements of $\mathbb{Z}_p$ --- and let $v\in\mathbb{U}_p$ be defined by
\begin{equation}\label{eq:vdef}
v\coloneqq
\begin{cases}
 -1 &\text{if } p\equiv3 \mod4,\\
 -u &\text{if } p\equiv1 \mod4.
\end{cases}
\end{equation}
In the case where $p>2$, there are (precisely) three definite quadratic forms on $\mathbb{Q}_p^2$, up to linear equivalence and scaling,
\begin{equation}\label{eq:quadrform2}
 Q_{-v}(\mathbf{x}) =x_0^2-vx_1^2,\quad 
 Q_{p}(\mathbf{x})    =x_0^2+px_1^2,\quad
 Q_{\frac{p}{u}}(\mathbf{x}) =u x_0^2+px_1^2,
\end{equation}
and there are seven on $\mathbb{Q}_2^2$, namely,
\begin{alignat}{2}       Q_1(\mathbf{x})   &=x_0^2+x_1^2,\quad&       Q_{\pm2}(\mathbf{x})  &=x_0^2\pm2x_1^2,\nonumber\\       Q_{\pm5}(\mathbf{x})  &=x_0^2\pm5x_1^2,\quad&       Q_{\pm10}(\mathbf{x}) &=x_0^2\pm10x_1^2.     \label{eq:quadrform22}
\end{alignat}
There is a unique definite quadratic form on $\mathbb{Q}_p^3$ (depending on $p$), up to linear equivalence and scaling, i.e.,
\begin{equation}\label{eq:quadrform3}
Q_+(\mathbf{x})
 =\begin{cases}
    x_0^2-vx_1^2+px_2^2 &\textup{if } p>2, \\     x_0^2+x_1^2+x_2^2   &\textup{if } p=2,   \end{cases}
\end{equation}
as well as on $\mathbb{Q}_p^4$, i.e.,
\begin{equation}\label{eq:quadrform4}
  Q_{(4)}(\mathbf{x}) 
         = \begin{cases}
             x_0^2 - v x_1^2 + p x_2^2 - pv x_3^2 & \textup{ if } p>2, \\              x_0^2 + x_1^2 + x_2^2 + x_3^2      & \textup{ if } p=2.          \end{cases}
\end{equation}
No quadratic form on $\rat_p^n$ is definite for $n\geq 5$.
\end{theorem}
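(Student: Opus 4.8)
The plan is to derive Theorem~\ref{quadform} from the classical classification of non-degenerate quadratic forms over $\mathbb{Q}_p$ (see~\cite{serre2012course}, Chapter~IV, and~\cite{Cassels}), keeping in mind that over $\mathbb{Q}_p$ ``definite'' means nothing more than ``anisotropic'' (there is no order relation, hence no notion of sign). Recall that such a form $Q$ is determined up to linear equivalence by its dimension $n$, its discriminant $d(Q)\in\mathbb{Q}_p^\ast/(\mathbb{Q}_p^\ast)^2$, and its Hasse--Witt invariant $\varepsilon(Q)\in\{\pm1\}$; that $\mathbb{Q}_p^\ast/(\mathbb{Q}_p^\ast)^2$ has order $4$ for $p>2$, with $\{1,u,p,up\}$ a transversal, and order $8$ for $p=2$, with $\{\pm1,\pm2,\pm5,\pm10\}$ a transversal; and Serre's criterion for $Q$ to represent $0$: this happens exactly when $n=2$ and $d(Q)=-1$; or $n=3$ and $\varepsilon(Q)=(-1,-d(Q))_p$; or $n=4$ and either $d(Q)\neq1$ or ($d(Q)=1$ and $\varepsilon(Q)=(-1,-1)_p$); or $n\geq5$. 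The last clause already yields the final sentence of the theorem: for $n\geq5$ no form on $\mathbb{Q}_p^n$ is definite.

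For $n=2$, scaling $\langle a,b\rangle$ by $a^{-1}$ produces $\langle 1,a^{-1}b\rangle$, where $a^{-1}b$ lies in the square class $d(Q)=ab$, while for even rank the discriminant is itself scaling-invariant (a square factor $t^{n}$); hence the binary forms, taken up to linear equivalence \emph{and} scaling, are in bijection with $\mathbb{Q}_p^\ast/(\mathbb{Q}_p^\ast)^2$ via $Q\mapsto d(Q)$, and by the criterion exactly those with $d(Q)\neq-1$ are definite --- that is, $4-1=3$ of them for $p>2$, and $8-1=7$ for $p=2$. It then remains to match the listed representatives: the forms in~\eqref{eq:quadrform2} have discriminants $-v,\,p,\,up$, and using~\eqref{eq:vdef} together with the fact that $-1\in(\mathbb{Q}_p^\ast)^2$ iff $p\equiv1\bmod4$ one checks in each residue case that $\{-v,p,up\}$ is precisely $\{1,u,p,up\}\setminus\{-1\}$; likewise the discriminants of the forms in~\eqref{eq:quadrform22} run through $\{\pm1,\pm2,\pm5,\pm10\}\setminus\{-1\}$.

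For $n=3$, scaling multiplies the discriminant by $t^3\equiv t$ modulo squares, so scaling acts transitively on discriminant classes of ternary forms; for each fixed discriminant $d$ the criterion flags the Hasse invariant $(-1,-d)_p$ as isotropic and the other value as anisotropic (both being realized), whence there is a unique definite ternary form up to linear equivalence and scaling. It then suffices to verify that the explicit $Q_+$ in~\eqref{eq:quadrform3} is anisotropic, i.e.\ that $\varepsilon(Q_+)\neq(-1,-d(Q_+))_p$; computing $d(Q_+)$ and $\varepsilon(Q_+)$ from the diagonal entries, this reduces for $p>2$ to the Legendre-symbol identity $\bigl(\tfrac{v}{p}\bigr)=-1$ (immediate from the definitions of $u$ and $v$) and for $p=2$ to $(-1,-1)_2=-1$. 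For $n=4$ the discriminant is scaling-invariant ($t^4$ is a square), and Serre's criterion forces every definite quaternary form to have $d(Q)=1$ and $\varepsilon(Q)=-(-1,-1)_p$; these invariants determine $Q$ up to linear equivalence, so the definite quaternary form is unique up to linear equivalence --- a fortiori up to scaling --- and one checks directly that $Q_{(4)}$ in~\eqref{eq:quadrform4} has $d=1$ and the required Hasse invariant.

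The genuinely substantial inputs --- Serre's isotropy criterion and the structure of $\mathbb{Q}_p^\ast/(\mathbb{Q}_p^\ast)^2$ --- are cited rather than reproved. Within the argument itself the only real obstacle is computational: evaluating discriminants and Hasse--Witt invariants (hence Hilbert symbols) for the explicit representatives, with the prime $p=2$ demanding extra care because $\mathbb{Q}_2^\ast/(\mathbb{Q}_2^\ast)^2$ is larger and the Hilbert symbol of $2$-adic units depends on residues modulo $8$. A secondary point to handle cleanly is the passage from ``up to linear equivalence'' to ``up to linear equivalence and scaling'': one must keep track of how scaling by $t$ acts on the pair $(d,\varepsilon)$ --- trivially on $d$ for even $n$, transitively on $d$ for odd $n$ --- which is precisely what makes the counts $3$, $7$, $1$, $1$ come out as stated.
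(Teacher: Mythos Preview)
Your argument is correct and is precisely the standard derivation via discriminant and Hasse--Witt invariant from Serre's Chapter~IV; note that the paper does not actually prove this theorem in the text but simply states it ``as explicitly derived in~\cite{our1st} (see also~\cite{serre2012course,Cassels})'', so there is no in-paper proof to compare against beyond these citations, and your write-up is exactly the kind of argument those references contain.
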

\begin{remark}\label{rem:equivrestricQ}
Note that all restrictions of the definite quadratic form $Q_{(4)}$ on $\mathbb{Q}_p^4$ to any three variables, and indeed to any three-dimensional subspace, are  equivalent to $Q_+$ on $\mathbb{Q}_p^3$.
\end{remark} 
We can now characterize a relevant class of special orthogonal groups over $\rat_p$. 
\begin{corollary}\label{cor:compunigru}
The $p$-adic special orthogonal groups associated with the definite quadratic forms on $\rat_p^2$ are (up to isomorphism)
\begin{equation}
    \so(2,\mathbb{Q}_p)_\kappa=\{L\in \mathsf{M}_2(\mathbb{Q}_p)\;\;|\;\; A_\kappa=L^\top A_\kappa L,\ \det(L)=1\},
\end{equation}
where $A_\kappa$ are the matrix representations, in the canonical basis of $\mathbb{Q}_p^2$, of the quadratic forms in~\eqref{eq:quadrform2} and~\eqref{eq:quadrform22}. Index $\kappa$ ranges in $\{-v,p,\frac{p}{u}\}$ whenever $p>2$, while $\kappa\in\{1,\pm2,\pm5,\pm10\}$ when $p=2$.\\
For every $p\geq2$, the special orthogonal group associated with the definite quadratic form on $\mathbb{Q}_p^3$ is (up to isomorphisms)
\begin{equation}
    \so(3,\mathbb{Q}_p)=\{L\in \mathsf{M}_3(\mathbb{Q}_p)\;\;|\;\; A=L^\top AL,\ \det(L)=1\},
\end{equation}
while the one on $\mathbb{Q}_p^4$ is
\begin{equation}
    \so(4,\mathbb{Q}_p)=\{L\in \mathsf{M}_4(\mathbb{Q}_p)\;\;|\;\; A'=L^\top A'L,\ \det(L)=1\}.
\end{equation}
$A$ and $A'$ are the matrix representations in the canonical basis of $\mathbb{Q}_p^3$ and $\mathbb{Q}_p^4$ of the quadratic forms~\eqref{eq:quadrform3} and~\eqref{eq:quadrform4} respectively.
\end{corollary}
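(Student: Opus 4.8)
The plan is to deduce the statement from the classification of definite quadratic forms already established in Theorem~\ref{quadform}, together with the general fact (recorded in the text above) that the special orthogonal group of a non-degenerate quadratic form depends on the form only through its class under linear equivalence and rescaling. Concretely, for a non-degenerate $Q$ on $\rat_p^n$ write $Q(\mathbf{x})=\mathbf{x}^\top A\mathbf{x}$ with $A$ the (symmetric, invertible) Gram matrix in a fixed basis, and set $\so(Q)\coloneqq\{L\in\mathsf{M}_n(\rat_p)\mid L^\top A L=A,\ \det(L)=1\}$. I would first prove that $\so(Q)$ is insensitive to equivalence-and-scaling of $Q$, and then simply match the distinguished representatives furnished by Theorem~\ref{quadform}.

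For the first step, rescaling is immediate: for $c\in\rat_p$ with $c\neq 0$ one has $L^\top(cA)L=cA$ iff $L^\top A L=A$, so $\so(cQ)=\so(Q)$ literally as subsets of $\mathsf{M}_n(\rat_p)$. For equivalence, suppose $A'=T^\top A T$ with $T\in\mathrm{GL}_n(\rat_p)$ (equivalently $Q'(\mathbf{x})=Q(T\mathbf{x})$). I claim the conjugation $\phi_T\colon L\mapsto T^{-1}LT$ restricts to a group isomorphism $\so(Q)\xrightarrow{\sim}\so(Q')$. Indeed a direct computation gives $(T^{-1}LT)^\top A'(T^{-1}LT)=T^\top L^\top A L T$, which equals $A'=T^\top A T$ exactly when $L^\top A L=A$; moreover $\det(T^{-1}LT)=\det(L)$, so the constraint $\det=1$ is preserved as well. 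Since $\phi_T$ is a bijective group homomorphism of $\mathsf{M}_n(\rat_p)$ given by a polynomial map whose inverse $\phi_{T^{-1}}$ is again polynomial, its restriction is an isomorphism of topological groups, and --- regarding these matrix groups as $p$-adic Lie groups (closed subgroups of $\mathrm{GL}_n(\rat_p)$ cut out by polynomial equations) --- an isomorphism of $p$-adic Lie groups. Combining the two observations, $\so(Q)\cong\so(Q')$ whenever $Q$ and $Q'$ agree up to equivalence and scaling.

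It then remains to invoke Theorem~\ref{quadform}. For $n=2$: every definite quadratic form on $\rat_p^2$ is, up to equivalence and scaling, one of $Q_{-v},Q_p,Q_{p/u}$ in~\eqref{eq:quadrform2} when $p>2$, and one of the seven forms in~\eqref{eq:quadrform22} when $p=2$; hence its special orthogonal group is isomorphic to $\so(2,\rat_p)_\kappa$, with $A_\kappa$ the canonical-basis Gram matrix of the relevant form and $\kappa$ running over $\{-v,p,\frac{p}{u}\}$ (resp.\ $\{1,\pm 2,\pm 5,\pm 10\}$) --- precisely the asserted list. For $n=3$ and $n=4$, Theorem~\ref{quadform} provides a \emph{unique} definite form up to equivalence and scaling, namely $Q_+$ of~\eqref{eq:quadrform3} and $Q_{(4)}$ of~\eqref{eq:quadrform4}, so the associated special orthogonal group is, up to isomorphism, the single group $\so(3,\rat_p)$ built from the Gram matrix $A$ of $Q_+$ (resp.\ $\so(4,\rat_p)$ built from the Gram matrix $A'$ of $Q_{(4)}$); and since no form on $\rat_p^n$ is definite for $n\geq 5$, nothing further need be said. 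I do not expect a genuine obstacle here --- the argument is essentially bookkeeping layered on Theorem~\ref{quadform} --- the only points deserving care being the precise sense of ``up to isomorphism'' (isomorphism of topological, indeed $p$-adic Lie, groups, as delivered by $\phi_T$) and the routine check that $A_\kappa$, $A$, $A'$ are the canonical-basis Gram matrices of the forms in~\eqref{eq:quadrform2}--\eqref{eq:quadrform4}.
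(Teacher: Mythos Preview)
Your proposal is correct and follows exactly the approach the paper intends: the paper does not supply a separate proof of this corollary, treating it as an immediate consequence of Theorem~\ref{quadform} together with the remark (stated just before the theorem, with a reference to~\cite{serre2012course}) that quadratic forms up to linear equivalence and scaling yield isomorphic special orthogonal groups. You have simply made that remark explicit via the scaling and conjugation arguments, which is precisely the content behind the cited fact.
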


 It is not difficult to prove that the special orthogonal groups $\so(n,\mathbb{Q}_p)$, $n=2,3,4$, in Corollary~\ref{cor:compunigru} are compact as subsets in $\rat_p^{n^2}$.  Indeed, we can introduce a $p$-adic (non-Archimedean) norm on $\so(n,\mathbb{Q}_p)$ by setting $\|M\|_p=\|(M_{ij})_{ij}\|_p\coloneqq \max_{i,j=1,\dots,n}|M_{ij}|_p$. Clearly, $\so(n,\mathbb{Q}_p)$, $n=2,3,4$, turn into topological groups, whenever they are endowed with the natural topology generated by the open balls of the $p$-adic norm. We recall that a set $K \subset \mathbb{Q}_p^m$ is \emph{compact} if and only if it is \emph{closed} and \emph{bounded} w.r.t.\ the ultrametric topology generated by (the open balls of) the $p$-adic norm $\mathrm{N}_p(\mathbf{x})\coloneqq\max_{i=1,\dots,m}|x_i|_p$
of $\mathbb{Q}_p^m$~\cite{vladimirov1994, rooij78}. $\so(n,\mathbb{Q}_p),\ n=2,3,4,$ are closed, as they are groups of solutions of a system of continuous (polynomial) equations. On the other hand, every matrix in the groups of Corollary~\ref{cor:compunigru} has bounded entries (see Theorem~5 and Remark~14 in \cite{our1st} for the details); specifically, we have:\footnote{To be precise, the statements concerning $p=2$ are not quite right in Ref.~\cite{our1st}, and are corrected here. 
}
\begin{align}  & \mathrm{SO}(2,\mathbb{Q}_p)_\kappa= \mathrm{SO}(2,\mathbb{Z}_p)_\kappa\quad \textup{except}\quad  \mathrm{SO}(2,\mathbb{Q}_2)_{-5}= \mathrm{SO}(2,2^{-1}\mathbb{Z}_2)_{-5};\nonumber\\ 
&\mathrm{SO}(3,\mathbb{Q}_p)= \mathrm{SO}(3,\mathbb{Z}_p);\label{eq:inclsSO4SO3SO2SLint}\\ 
& \mathrm{SO}(4,\mathbb{Q}_p)= \mathrm{SO}(4,\mathbb{Z}_p)\quad \textup{except}\quad \mathrm{SO}(4,\mathbb{Q}_2)= \mathrm{SO}(4,2^{-1}\mathbb{Z}_2).\nonumber \end{align}
This entails that $\|M\|_p\leq p$ for every $M\in \so(n,\mathbb{Q}_p)\subset \mathsf{M}_n(\mathbb{Q}_p)\cong \mathbb{Q}_p^{n^2}$, i.e., $\so(n,\mathbb{Q}_p)$, $n=2,3,4$, is a bounded subset of $\mathbb{Q}_p^{n^2}$. 
\begin{remark}
We have used definite quadratic forms to define the $p$-adic special orthogonal groups. It turns out that those groups defined on indefinite quadratic forms are not bounded, whence, not compact.
\end{remark}
In the light of the discussion above, the following result is now clear:
\begin{proposition}\label{prop:compatti}
The groups $\so(n,\rat_p)$, $n=2,3,4$ of Corollary~\ref{cor:compunigru} are \emph{all and the only compact} $p$-adic special orthogonal groups. 
\end{proposition}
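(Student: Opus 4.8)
The plan is to assemble the result from the pieces already laid out in the excerpt, treating Proposition~\ref{prop:compatti} as essentially a bookkeeping statement. Two things must be checked: that each $\so(n,\rat_p)$ with $n=2,3,4$ from Corollary~\ref{cor:compunigru} is compact, and that no other $p$-adic special orthogonal group is compact. The first half is already fully argued in the paragraph preceding the proposition: the groups are closed (being zero-sets of polynomial equations) and bounded (their entries lie in $\pint$, since $\so(n,\rat_p)=\so(n,\pint)$), hence compact by the non-Archimedean Heine--Borel property in $\rat_p^{n^2}$. So that half of the proof is a one-sentence recollection.

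For the second half I would organize the argument by dimension. By Theorem~\ref{quadform}, any non-degenerate quadratic form over $\rat_p$ in dimension $n\ge 5$ is \emph{indefinite}, i.e.\ isotropic (it represents zero nontrivially). For $n=2,3,4$, the forms are classified up to linear equivalence and scaling into two types: the definite ones listed in~\eqref{eq:quadrform2}--\eqref{eq:quadrform4} and the indefinite (isotropic) ones. Since equivalent or scaled quadratic forms give isomorphic special orthogonal groups, it suffices to show that the special orthogonal group of an isotropic form over $\rat_p$ is never compact. I would exhibit the non-compactness explicitly: if $Q$ is isotropic, it contains a hyperbolic plane $\rat_p e\oplus\rat_p f$ with $Q(e)=Q(f)=0$, $b(e,f)=1$, on which the one-parameter family $t\mapsto (e\mapsto te,\ f\mapsto t^{-1}f)$ (extended by the identity on the orthogonal complement) is a family of isometries of determinant $1$. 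Taking $t=p^{-k}$, $k\to\infty$, produces a sequence in $\so(Q,\rat_p)$ whose matrix entries have $p$-adic absolute value $p^{k}\to\infty$; the group is therefore unbounded, hence not compact. This matches the remark in the excerpt that ``groups defined on indefinite quadratic forms are not bounded, whence, not compact.''

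Combining the two halves: for $n\le 1$ the special orthogonal group is trivial (and the statement concerns genuine rotation groups anyway); for $n=2,3,4$ the compact ones are exactly those attached to definite forms, which by Theorem~\ref{quadform} are precisely the forms in Corollary~\ref{cor:compunigru} up to isomorphism; and for $n\ge 5$ there are no definite forms, so every $\so(n,\rat_p)$ is non-compact. This exhausts all cases and yields the proposition.

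The only genuinely non-routine point is the reduction ``indefinite $\Rightarrow$ non-compact,'' and within it the claim that one may reduce to the hyperbolic-plane case. If one wants to avoid invoking Witt decomposition, the cleanest route is: isotropic $Q$ in dimension $\ge 2$ automatically splits off a hyperbolic plane (standard over any field of characteristic $\ne 2$), and the remaining summand contributes a commuting factor, so the explicit unbounded one-parameter subgroup above already lives inside $\so(Q,\rat_p)$. I expect this to be the step the authors spell out; everything else is a citation to Theorem~\ref{quadform}, to the displayed identity $\so(n,\rat_p)=\so(n,\pint)$, and to the Heine--Borel characterization of compactness in $\rat_p^m$ already recalled in the text.
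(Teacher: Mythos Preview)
Your proposal is correct and follows the same route as the paper: the proposition is presented there as an immediate consequence of the preceding discussion (closed and bounded in $\rat_p^{n^2}$ gives compactness for the definite cases, the remark that indefinite forms yield unbounded groups, and Theorem~\ref{quadform} excluding definite forms for $n\ge 5$). You actually go further than the paper by spelling out the hyperbolic-plane argument for ``indefinite $\Rightarrow$ unbounded,'' which the paper only asserts without proof.
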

The next theorem provides a parameterization of the compact $p$-adic special orthogonal groups in dimension two~\cite{our1st}.
\begin{theorem}\label{th.23}
Any element of $\so(2,\mathbb{Q}_p)_\kappa$ takes the following matrix form in the canonical basis of $\mathbb{Q}_p^2$:
\begin{equation}\label{genelso2}
R_\kappa(\alpha)=\begin{pmatrix}
\frac{1-\kappa\alpha^2}{1+\kappa\alpha^2} & -\frac{2\kappa\alpha}{1+\kappa\alpha^2}\\ \frac{2\alpha}{1+\kappa\alpha^2} & \frac{1-\kappa\alpha^2}{1+\kappa\alpha^2}
\end{pmatrix},\quad \alpha\in\mathbb{Q}_p\cup\{\infty\},
\end{equation}
where $R_\kappa(\infty)=-R_\kappa(0)=-\mathrm{I}_2$, and $\kappa\in\{-v,p,\frac{p}{u}\}$ for $p>2$, while $\kappa\in\{1,\pm2,\pm5,\pm10\}$ for $p=2$.
The composition of two elements in $\so(2,\mathbb{Q}_p)_\kappa$, for any fixed $\kappa$, is given by
\begin{equation}\label{comphar}
R_\kappa(\alpha)R_\kappa(\beta)=R_\kappa\left(\frac{\alpha+\beta}{1-\kappa\alpha\beta}\right),
\end{equation}
for every $\alpha,\beta\in\mathbb{Q}_p\cup\{\infty\}$.
\end{theorem}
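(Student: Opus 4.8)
The plan is to exploit that an orthogonal group depends on its Gram matrix only up to an overall scalar, since $L^{\top}(cA)L=cA$ is equivalent to $L^{\top}AL=A$ for every $c\neq 0$; thus in every case of Corollary~\ref{cor:compunigru} we may replace $A_\kappa$ by the diagonal matrix $\mathrm{diag}(1,\kappa)$ and work with the binary form $q(x_0,x_1)=x_0^2+\kappa x_1^2$ (for $\kappa=p/u$ this is the rescaling of $ux_0^2+px_1^2$ by $1/u$; in all other listed cases $A_\kappa$ is already $\mathrm{diag}(1,\kappa)$). The structural fact driving the whole argument is that definiteness of $q$ is \emph{exactly} the statement that $-\kappa$ is not a square in $\mathbb{Q}_p^{\times}$: indeed $q$ represents $0$ nontrivially iff $(x_0/x_1)^2=-\kappa$ for some $x_1\neq 0$. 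This is what will guarantee below that the denominators $1+\kappa\alpha^2$ never vanish.

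First I would determine the matrix shape. Writing $L=\begin{pmatrix} a & b\\ c & d\end{pmatrix}$, the relation $L^{\top}\mathrm{diag}(1,\kappa)L=\mathrm{diag}(1,\kappa)$ becomes $a^2+\kappa c^2=1$, $ab+\kappa cd=0$, $b^2+\kappa d^2=\kappa$. The vectors $(b,d)$ and $(-\kappa c,a)$ both lie in the orthogonal complement of the anisotropic vector $(a,c)$ (its norm is $1$) with respect to the bilinear form of $\mathrm{diag}(1,\kappa)$; since this form is nondegenerate (as $\kappa\neq 0$) that complement is one-dimensional, and both vectors are nonzero (each has norm $\kappa$), so $(b,d)=\lambda(-\kappa c,a)$ for some $\lambda\in\mathbb{Q}_p$. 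Imposing $\det L=ad-bc=\lambda(a^2+\kappa c^2)=\lambda=1$ yields $b=-\kappa c$, $d=a$. Hence every $L\in\so(2,\mathbb{Q}_p)_\kappa$ has the form $\begin{pmatrix} a & -\kappa c\\ c & a\end{pmatrix}$ with $a^2+\kappa c^2=1$, and conversely any such matrix lies in $\so(2,\mathbb{Q}_p)_\kappa$.

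Next I would parametrize the conic $a^2+\kappa c^2=1$ by stereographic projection from the rational point $(-1,0)$. For $(a,c)\neq(-1,0)$, set $\alpha\coloneqq c/(1+a)\in\mathbb{Q}_p$; then $\kappa\alpha^2=\kappa c^2/(1+a)^2=(1-a)/(1+a)$, which solves to $a=(1-\kappa\alpha^2)/(1+\kappa\alpha^2)$ and $c=\alpha(1+a)=2\alpha/(1+\kappa\alpha^2)$, i.e.\ $L=R_\kappa(\alpha)$. Here definiteness enters: $1+\kappa\alpha^2=0$ would force $-\kappa=(1/\alpha)^2$, which is excluded, so $R_\kappa(\alpha)$ is well defined for \emph{all} $\alpha\in\mathbb{Q}_p$, and $\alpha\mapsto R_\kappa(\alpha)$ is injective (recover $\alpha$ as $c/(1+a)$). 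The omitted point $(-1,0)$ corresponds to the matrix $-\mathrm{I}_2$, which is assigned the label $R_\kappa(\infty)$; this establishes the first assertion.

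Finally, for the composition law I would identify $R_\kappa(\alpha)$ with the matrix, in the basis $\{1,\sqrt{-\kappa}\}$, of multiplication by $z_\alpha\coloneqq(1+\alpha\sqrt{-\kappa})/(1-\alpha\sqrt{-\kappa})$ inside the quadratic field $K=\mathbb{Q}_p(\sqrt{-\kappa})$ (a field precisely because $-\kappa$ is a non-square); the identity $(1+\alpha\sqrt{-\kappa})^2/(1+\kappa\alpha^2)=z_\alpha$ makes the match explicit, and $\det R_\kappa(\alpha)=N_{K/\mathbb{Q}_p}(z_\alpha)=1$. Since $z\mapsto(\text{multiplication by }z)$ is a ring homomorphism, $R_\kappa(\alpha)R_\kappa(\beta)$ is the matrix of $z_\alpha z_\beta$, and clearing denominators gives $z_\alpha z_\beta=\big((1-\kappa\alpha\beta)+(\alpha+\beta)\sqrt{-\kappa}\big)/\big((1-\kappa\alpha\beta)-(\alpha+\beta)\sqrt{-\kappa}\big)=z_\gamma$ with $\gamma=(\alpha+\beta)/(1-\kappa\alpha\beta)$, which is exactly~\eqref{comphar}; as a bonus the group is abelian. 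The part that needs genuine care is the bookkeeping of the $\infty$-conventions: when $1-\kappa\alpha\beta=0$ one checks $\alpha+\beta\neq 0$ (else $1+\kappa\alpha^2=0$) and $z_\alpha z_\beta=-1=z_\infty$; and $R_\kappa(\infty)R_\kappa(\beta)=-R_\kappa(\beta)=R_\kappa(-1/(\kappa\beta))$, matching the limiting value of $(\alpha+\beta)/(1-\kappa\alpha\beta)$ as $\alpha\to\infty$. Everything in this last paragraph can alternatively be obtained by a brute-force $2\times 2$ multiplication, checking $c_\gamma/(1+a_\gamma)=\gamma$, albeit less transparently.
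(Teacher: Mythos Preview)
Your proof is correct and complete. Note, however, that the paper does not actually prove this theorem: it is stated with a citation to~\cite{our1st} and no argument is given here, so there is no ``paper's proof'' to compare against in this text.

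That said, your approach is the natural one and almost certainly matches what is done in the cited source: reduce to the Gram matrix $\mathrm{diag}(1,\kappa)$, solve the orthogonality relations to get $L=\begin{pmatrix} a & -\kappa c\\ c & a\end{pmatrix}$ with $a^2+\kappa c^2=1$, and parametrize the resulting conic by stereographic projection from $(-1,0)$. The identification of $R_\kappa(\alpha)$ with multiplication by $z_\alpha=(1+\alpha\sqrt{-\kappa})/(1-\alpha\sqrt{-\kappa})$ in the quadratic extension $\mathbb{Q}_p(\sqrt{-\kappa})$ is a particularly clean way to obtain the composition law~\eqref{comphar} and the abelianness of the group without a brute-force matrix multiplication; it also makes transparent why definiteness (i.e.\ $-\kappa\notin(\mathbb{Q}_p^\times)^2$) is exactly the hypothesis needed for the denominators $1+\kappa\alpha^2$ never to vanish and for $K$ to be a field. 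Your handling of the boundary cases ($1-\kappa\alpha\beta=0$ and $\alpha=\infty$) is also correct.
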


\begin{remark}\label{oss:reductio}
By choosing $\kappa=1$, and taking $\alpha=\tan\left(\theta/2\right)$, a generic $p$-adic rotation, as given in~\eqref{genelso2}, assumes the form
\begin{equation}
\begin{pmatrix}
\cos\theta&-\sin\theta\\\sin\theta&\cos\theta
\end{pmatrix},
\end{equation}
i.e., it `formally’ reduces to a real planar rotation by an angle $\theta$. 
\end{remark}

%%%%%%%%%%%%%%%%%%%%%%%%%%%%%%%Subsection 2.4 %%%%%%%%%%%%%%%%%%%%%%%%%%%%%%%

\subsection{Integration on \texorpdfstring{$p$}{Lg}-adic manifolds}\label{subsec.3.2}
This subsection deals with integration theory on $p$-adic manifolds~\cite{igusa2000}. For our purposes, we will need a $p$-adic counterpart of the well known change-of-variables formula for multiple integrals on $\mathbb{R}^n$. Therefore, we start with the following:
\begin{theorem}[Change-of-variables formula]\label{changeofvarint}
Let $a\in\mathbb{Q}_p^n$ and let $\xi=(\xi_1,\ldots,\xi_n)\colon U\subset \mathbb{Q}_p^n\rightarrow V\subset\rat_p^n$ be a $\mathbb{Q}_p$-analytic isomorphism between an open neighborhood $U$ of $a$, and an open neighborhood $V$ of $\xi(a)$, such that 
\begin{equation}\label{Jacdet}
\det\left(\frac{\partial \xi_i}{\partial x_j}(a)\right)\neq0.   
\end{equation}
Then, for every integrable function $f$ on $V$, the following formula holds:
\begin{equation}\label{eq.29}
\int_V f(x)\de\lambda_{|V}(x)=\int_U f(\xi(x))\left\lvert\det\left(\frac{\partial \xi_i}{\partial x_j}(x)\right)\right\rvert_p\de\lambda_{|U}(x),  
\end{equation}
where $\lambda$ is the Haar measure on $\rat_p^n$.
\end{theorem}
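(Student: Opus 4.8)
The plan is to reduce~\eqref{eq.29} to the case where $f$ is the indicator function of a ball, and then to analyse $\xi$ locally near a point where the Jacobian is invertible. As a first step I would note that the collection of $f\in\mathrm{L}^1(V,\lambda_{|V})$ for which~\eqref{eq.29} holds is a complex-linear space which is stable under monotone limits of nonnegative functions (apply the monotone convergence theorem to both sides); by Remark~\ref{rem:totdiscimopencom}, every compact open subset of $V$ is a finite disjoint union of balls, and balls generate the Borel $\sigma$-algebra of $V$, so a monotone-class argument reduces everything to proving~\eqref{eq.29} for $f=\mathbf{1}_{\overline{B}}$ with $\overline{B}\subset V$ a ball. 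For such $f$ the right-hand integrand is supported on the compact open set $\xi^{-1}(\overline{B})$, on which $x\mapsto|\det\mathrm{D}\xi(x)|_p$ (the absolute value of a continuous, nowhere-vanishing function, hence locally constant) is bounded, so both sides are finite; and the reduction back to a general $f\in\mathrm{L}^1(V)$ will at the same time show that $(f\circ\xi)\,|\det\mathrm{D}\xi|_p\in\mathrm{L}^1(U)$.

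The core of the argument is a local model for $\xi$. I would first record the purely linear fact that $\lambda(AE)=|\det A|_p\,\lambda(E)$ for every $A\in\mathrm{GL}_n(\rat_p)$ and every Borel $E\subset\rat_p^n$: the set function $E\mapsto\lambda(AE)$ is again a translation-invariant Radon measure on $(\rat_p^n,+)$, hence equals $c(A)\lambda$ for a homomorphism $c\colon\mathrm{GL}_n(\rat_p)\to\mathbb{R}^+_{\ast}$, and $c$ is seen to be $|\det(\cdot)|_p$ by evaluating it on diagonal and elementary matrices. Next, fixing $a\in U$ and putting $A\coloneqq\mathrm{D}\xi(a)$ — which is invertible for every choice of $a$, by the chain rule applied to $\xi^{-1}\circ\xi=\mathrm{id}_U$ — I would expand $\xi(x)=\xi(a)+A(x-a)+R(x-a)$ near $a$, with $R$ a $\rat_p$-analytic map satisfying $R(0)=0$ and $\mathrm{D}R(0)=0$, i.e.\ whose power series has neither a constant nor a linear term. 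A short estimate on this power series gives, for all $h,h'$ in a small enough ball $\overline{B}(r,0)$, the inequality $|A^{-1}R(h)-A^{-1}R(h')|_p<|h-h'|_p$; by the ultrametric inequality this forces the map $\psi\colon h\mapsto h+A^{-1}R(h)$ to be an isometry of $\overline{B}(r,0)$, and its surjectivity onto $\overline{B}(r,0)$ follows from the convergent iteration $h_0=y$, $h_{k+1}=y-A^{-1}R(h_k)$, using completeness of $\rat_p$. A bijective isometry of a ball preserves $\lambda$ on that ball (isometries carry balls onto balls, and $\lambda$ of a ball depends only on its radius). Shrinking $r$ further so that $|\det\mathrm{D}\xi(\cdot)|_p$ is additionally constant, equal to $|\det A|_p$, on $B\coloneqq\overline{B}(r,a)$, one obtains the factorization $\xi|_B=\tau_{\xi(a)}\circ A\circ\psi\circ\tau_{-a}$ (with $\tau_c$ translation by $c$) into translations, the linear map $A$, and the measure-preserving map $\psi$; combining these ingredients yields $\xi_{\ast}(\lambda_{|B})=|\det A|_p^{-1}\,\lambda_{|\xi(B)}$.

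To conclude, I would patch the local pieces together. Since in $\rat_p^n$ any two balls are disjoint or nested and $U$ is Lindel\"of, $U$ decomposes as a countable disjoint union $U=\bigsqcup_j B_j$ of balls, each small enough that the previous step applies, with $c_j\coloneqq|\det\mathrm{D}\xi|_p$ constant on $B_j$; by injectivity of $\xi$ the images $\xi(B_j)$ are pairwise disjoint and cover $V$. Then, for $f=\mathbf{1}_{\overline{B}}$ and each $j$, the abstract change-of-variables formula~\eqref{eq.30} gives
\[
\int_{\xi(B_j)}f\,\de\lambda
=c_j\int_{\xi(B_j)}f\,\de\big(\xi_{\ast}(\lambda_{|B_j})\big)
=c_j\int_{B_j}(f\circ\xi)\,\de\lambda
=\int_{B_j}(f\circ\xi)\,|\det\mathrm{D}\xi|_p\,\de\lambda ,
\]
and summing over $j$ proves~\eqref{eq.29} for indicators of balls, hence for all $f\in\mathrm{L}^1(V)$ by the first step.

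The main obstacle is the local model in the second step — in particular the estimate that makes $\psi$ an isometry and the completeness/iteration argument for its surjectivity, which together amount to a $p$-adic inverse-function-theorem-type statement about $\rat_p$-analytic maps with invertible Jacobian; the remaining ingredients are bookkeeping with the ultrametric topology and routine measure-theoretic reductions. A self-contained treatment along these lines is given in Chapter~7 of~\cite{igusa2000}.
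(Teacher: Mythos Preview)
Your proposal is correct and ultimately aligns with the paper's treatment: both reduce~\eqref{eq.29} to the abstract change-of-variables formula~\eqref{eq.30} and point to Chapter~7 of~\cite{igusa2000} for the analytic input. The difference is only one of detail: the paper's proof is a two-line sketch (``this is a special case of the abstract C.O.V.F.\ once the pushforward $\xi_\ast\lambda_{|U}$ is identified; see Igusa, Proposition~7.4.1''), whereas you have actually unpacked that identification --- the linear case $\lambda(AE)=|\det A|_p\,\lambda(E)$, the ultrametric inverse-function-theorem factorization $\xi|_B=\tau\circ A\circ\psi\circ\tau'$ with $\psi$ an isometry, and the disjoint-ball patching --- which is precisely the content of Igusa's argument.
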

\begin{proof}
Formula~\eqref{eq.29} is actually a special case of the abstract C.O.V.F. (see~\eqref{eq.30} in Remark~\ref{rem.3.5}) specialized to the case where the pushforward of the measure on $U$ is realized via a $\rat_p$-analytic map. See Proposition~$7.4.1$  in~\cite{igusa2000} for the technical details. 
\end{proof}
%\begin{remark}\label{rem.3.3} It is worth noting that the determinant in~\eqref{Jacdet} is required to be different from zero only at point $a$, and \emph{not} at every point of the given neighborhood $U$ of $a$. \end{remark}

Let $\h$ be a second countable $n$-dimensional $\rat_p$-manifold, and $\Omega$ a differential $n$-form on $\h$. If $\mathcal{A}=\{(\U_\alpha,\varphi_\alpha)\}_{\alpha\in A}$ is an atlas on $\h$, $\Omega$ is expressed as in~\eqref{eq:formamaxloca} in the local coordinates of each chart in $\mathcal{A}$.
Then, we can associate a Radon measure $\mu_{\Omega}$ with $\Omega$ by setting
\begin{equation}\label{eq.32}
\mu_\Omega(\sC)\coloneqq\int_\sC|\omega_\alpha(u)|_p\de\big((\ha\vf_\alpha)_\ast\lambda\big)(u),
\end{equation}
for every compact (open) subset $\sC\subset\U_\alpha$ of $\h$, and where $\de((\ha\vf_\alpha)_\ast\lambda)(u)$ denotes the pushforward of the Haar measure $\lambda$ by $\ha\vf_\alpha$. It is not difficult to verify that this measure is well-defined:
if $(\U_\beta,\varphi_\beta)$ is another chart in $\mathcal{A}$ containing 
$\sC$ (i.e., $\sC\subset \U_\beta\cap\U_\alpha$), then \begin{equation}\label{eq.5oh}
\int_\sC|\omega_\alpha(u)|_p\de\big((\ha\vf_\alpha)_\ast\lambda\big)(u)=\int_\sC|\omega_\beta(u)|_p\de\big((\ha\vf_\beta)_\ast\lambda\big)(u),
\end{equation}
that is, $\mu_\Omega(\sC)$ does not depend on the considered chart containing $\sC$. We shall give the proof of this result in Remark~\ref{rem:dimoanticipata} below. 
\begin{remark}\label{rem.3.19}
Since a second countable $\rat_p$-manifold is $\sigma$-compact, the measure~\eqref{eq.32} is 
\emph{regular} (cf.\ Theorem~7.8 in~\cite{folland99}). Then, the measure of a Borel set $\E\subset\U_\alpha$ of $\h$ is given, by \emph{inner regularity}, by the supremum of the measures of the compact (open) sets contained in $\E$. 
\end{remark}
If $f\in \mathrm{C}_c(\h)$ is such that $\supp(f)\subset\sC\subset\U_\alpha$, its integral w.r.t.\ $\mu_\Omega$ is also well-defined, and is given by 
\begin{equation}\label{eq.34s}
\int_{\h}f\Omega\coloneqq\int_{\U_\alpha}f(u)|\omega_\alpha(u)|_p\de\big((\ha\vf_\alpha)_\ast\lambda\big)(u).
\end{equation}

%\begin{remark}\label{rem.3.6} It is possible to prove that, for every compact (open) subset $\sC\subset \U_\alpha$, and for any prime $p\in \mathbb{N}$, the following series expansion holds: \begin{align} \mu_{\Omega}(\sC)&=\int_\sC|\omega_\alpha(u)|_p\de\big((\ha\vf_\alpha)_\ast\lambda\big)(u)\nonumber\\ &=\sum_{s\in \mathbb{Z}}p^{-s}\lambda\Big\{\vf_\alpha\big(\omega_\alpha^{-1}(p^s\mathbb{U}_p)\cap\sC\big)\Big\}, \end{align}where $\mathbb{U}_p$ is the set of $p$-adic units, i.e., the set of invertible elements in $\mathbb{Z}_p$. Note that the series is convergent since $\omega_\alpha(\sC)$ is a compact subset in~$\rat_p^n$~\cite{igusa2000}. \end{remark}

Let now $\sC$ be an arbitrary compact (open) subset of $\h$. Its measure w.r.t.\ $\mu_\Omega$ can be defined as follows. First, we can decompose $\sC$ as
\begin{equation}
\sC=\bigsqcup_{i}\sC_i,\qquad\mbox{$\sC_i\subset\U_\alpha$, for some $\alpha\in A$},
\end{equation}
i.e., as a disjoint union of compact (open) subsets $\sC_i$, each contained in some $\U_\alpha$. Then, the measure of $\sC$ is given by 
\begin{equation}\label{eq:35comp}
\mu_\Omega(\sC)=\sum_i\mu_\Omega(\sC_i).
\end{equation}
Similarly, we can then extend the measure~\eqref{eq:35comp} to arbitrary Borel sets $\E$ in $\h$ (see Remark~\ref{rem.3.19}). Exploiting~\eqref{eq.34s}, it is then not difficult to define the integral of an arbitrary function $f\in \mathrm{C}_c(\h)$ w.r.t.\ $\mu_\Omega$, as well.

We can consider the pushforward of the measure $\mu_\Omega$ via $\vf_\alpha$ to a measure on $\rat_p^n$. This allows us to treat the integration theory on a manifold $\h$ via integrals on $\mathbb{Q}_p^n$.
Indeed, using formula~\eqref{eq.2}, we have: \begin{equation}\label{eq.6} 
\de\big((\vf_\alpha)_\ast\mu_{\Omega}\big)(x)=(\vf_\alpha)_\ast\Big(|\omega_\alpha(u)|_p\de\big((\ha\vf_\alpha)_\ast\lambda\big)(u)\Big)=|(\omega_\alpha\circ\ha\vf_\alpha)(x)|_p\de\lambda(x), \end{equation}
where $\varphi_\alpha(u)=(x_1,\ldots,x_n)\eqqcolon x$ denotes the coordinate representation of the point $u\in \h$. Hence, using the (abstract) C.O.V.F. (cf.\ relation~\eqref{eq.30} in Remark~\ref{rem.3.5}) with $f=\chi_\sC$, %and w.r.t.\ the measure $|\omega_\alpha(u)|_p\de\big((\ha\vf_\alpha)_\ast\lambda\big)(u)$
we obtain
\begin{equation}\label{eq.37}     
\mu_{\Omega}(\sC)=\int_\sC|\omega_\alpha(u)|_p\de\big((\ha\vf_\alpha)_\ast\lambda\big)(u)=\int_{\vf_\alpha(\sC)}|(\omega_\alpha\circ\ha\vf_\alpha)(x)|_p\de\lambda(x). \end{equation} 
Furthermore, if $f\in \mathrm{C}_c(\h)$ is a function with $\supp(f)\subset\sC\subset\U_\alpha$, it is clear that 
\begin{equation}\label{eq.s38} 
\int_{\U_\alpha}f(u)|\omega_\alpha(u)|_p\de\big((\ha\vf_\alpha)_\ast\lambda\big)(u)=\int_{\vf_\alpha(\U_\alpha)}(f\circ\ha\vf_\alpha)(x)|(\omega_\alpha\circ\ha\vf_\alpha)(x)|_p\de\lambda(x). 
\end{equation}

With the above discussion, we get to the following two conclusions. First, from~\eqref{eq.32}, we see that
\begin{equation}
\mu_{\Omega}|_{\U_\alpha}\ll(\ha\vf_\alpha)_\ast\lambda, \qquad \frac{\de\mu_{\Omega}|_{\U_\alpha}}{\de(\ha\vf_\alpha)_\ast\lambda}=|\omega_\alpha|_p,
\end{equation}
i.e., $\mu_{\Omega}|_{\U_\alpha}$ is an \emph{absolutely continuous measure} w.r.t.\ $(\ha\vf_\alpha)_\ast\lambda$, with \emph{Radon-Nikodym derivative} given by $|\omega_\alpha|_p\colon\U_\alpha\rightarrow \mathbb{R}^+_{\ast}$. Secondly,~\eqref{eq.37} entails that $(\vf_\alpha)_\ast\mu_{\Omega}\ll\lambda|_{\vf_\alpha(\U_\alpha)}$ as well, with Radon-Nikodym derivative $|\omega_\alpha\circ \ha\vf_\alpha|_p\colon \vf_\alpha(\U_\alpha)\rightarrow\mathbb{R}^+_{\ast}$. The latter condition means that the pushforward of the measure $\mu_\Omega$ on $\h$ via the maps $\vf_\alpha$, for every $\alpha\in A$, provides an absolutely continuous measure w.r.t.\ the Haar measure $\lambda$ on $\mathbb{Q}_p^n$. Their Radon-Nikodym derivative --- which,  for notational convenience, hereafter we will simply denote by $\eta$ --- is globally defined, and it is \emph{uniquely defined} up to a set of points of null measure (any other Radon-Nikodym derivative is equal to $\eta$ \emph{almost everywhere}). Accordingly, we shall denote by  $\eta_\alpha\coloneqq|\omega_\alpha\circ\ha\vf_\alpha|_p$ the restriction of $\eta$ on $\vf_\alpha(\U_\alpha)\subset \mathbb{Q}_p^n$, for every $\U_\alpha$ in the covering atlas  $\mathcal{A}$ of $\h$. Exploiting this notation, and recalling condition~\eqref{eq.37}, we can then write
\begin{equation}\label{eq.s40}
\mu_{\Omega}(\sC)=\int_{\vf_\alpha(\sC)}\eta_\alpha(x)\de\lambda(x),
\end{equation}
for every compact (open) subset $\sC\subset \U_\alpha\subset \h$. Similarly, we can express the integral w.r.t.\ $\Omega$ of every function $f\in \mathrm{C}_c(\h)$ --- with $\supp(f)\subset\sC\subset \U_\alpha$ --- as 
\begin{equation}
\int_{\h}f\Omega=\int_{\U_\alpha}f(u)|\omega_\alpha(u)|_p\de\big((\ha\vf_\alpha)_\ast\lambda\big)(u)=\int_{\vf_\alpha(\U_\alpha)}(f\circ\ha\vf_\alpha)(x)\eta_\alpha(x)\de\lambda(x).  
\end{equation}
\begin{remark}\label{rem:dimoanticipata}
Using the local representation~\eqref{eq.s40}, it is now not difficult to prove the equality of integrals in~\eqref{eq.5oh}. Indeed, let $\sC\subset \U_\alpha\cap\U_\beta$ be a compact (open) set in $\h$. Then, we want to show that 
\begin{equation}\label{eq:prequivincord}
\int_{\vf_\alpha(\sC)}\eta_\alpha(x)\de\lambda(x)=\int_{\vf_\beta(\sC)}\eta_\beta(y)\de\lambda(y).
\end{equation}
We first consider the change of variable $y=(\vf_\beta\circ\ha\vf_\alpha)(x)$ in the r.h.s.\ of~\eqref{eq:prequivincord}. Then, using Theorem~\ref{changeofvarint}, we see that~\eqref{eq:prequivincord} holds iff
\begin{equation}\label{eq.44}
\eta_\alpha(x)=(\eta_\beta\circ\vf_\beta\circ\ha\vf_\alpha)(x)|\det\mathrm{D}(\vf_\beta\circ\ha\vf_\alpha)(x)|_p,
\end{equation}
where $\det\mathrm{D}(\vf_\beta\circ\ha\vf_\alpha)(x)$ denotes the Jacobian of the transformation $\vf_\beta\circ\ha\vf_\alpha$. 
On the other hand, the pullback formula~\eqref{eq:diffformpull} also shows that
\begin{align}\label{eq.44s}
(\omega_\alpha\circ\ha\vf_\alpha)(x)\de x_1\wedge\ldots\wedge\de x_n&=(\ha\vf_\alpha)^\ast\Omega\nonumber\\
&=(\vf_\beta\circ\ha\vf_\alpha)^\ast(\ha\vf_\beta)^\ast\Omega\nonumber\\
&=(\vf_\beta\circ\ha\vf_\alpha)^\ast(\omega_\beta\circ\ha\vf_\beta)(y)\de y_1\wedge\ldots\wedge\de y_n\nonumber\\
&=(\omega_\beta\circ\ha\vf_\beta\circ\vf_\beta\circ\ha\vf_\alpha)(x)\det\mathrm{D}(\vf_\beta\circ\ha\vf_\alpha)(x)\de x_1\wedge\ldots\wedge\de x_n.
\end{align}
Therefore, taking the $p$-adic absolute value of the l.h.s.\ and of the last equality in~\eqref{eq.44s} entails %\begin{align} |\omega_\alpha\circ\ha\vf_\alpha(x)|_p=\eta_\alpha(x)&=|(\omega_\beta\circ\ha\vf_\beta\circ\vf_\beta\circ\ha\vf_\alpha)(x)\det\mathrm{D}(\vf_\beta\circ\ha\vf_\alpha)(x)|_p\nonumber\\ &=(\eta_\beta\circ\vf_\beta\circ\ha\vf_\alpha)(x)\,|\det\mathrm{D}(\vf_\beta\circ\ha\vf_\alpha)(x)|_p. \end{align} 
that~\eqref{eq.44} (and, hence,~\eqref{eq:prequivincord}) holds. 
\end{remark}
To conclude this subsection, we prove
that it is always possible to construct an essentially unique --- i.e., uniquely defined up to a multiplicative constant --- (left-)invariant differential $n$-form on every $n$-dimensional $p$-adic Lie group. We will then show that it is naturally  associated with the (left) Haar measure on the group. This will draw a parallel with the standard theory of (real) Lie groups~\cite{knappLie, helgason01}.

Let us first note that, also in the $p$-adic setting the tangent space $\mathrm{T}_eG$ to $G$ at $e\in G$ has a natural structure of \emph{Lie algebra} $\mathfrak{g}$, whenever the elements $X\in \mathrm{T}_eG$ are identified with the corresponding \emph{left-invariant vector fields} $\widetilde{X}$ on $G$~\cite{glockner}. Let $X_1,\dots, X_n$ be a basis of $\mathrm{T}_eG$, and let $\widetilde{X}_1,\dots,\widetilde{X}_n$ be the corresponding left-invariant vector fields in $\mathfrak{g}$. We can now define, for all $g$ in $G$, the $1$-forms 
 $\omega_1,\dots,\omega_n$ on $G$ via the condition 
 \begin{equation}
(\omega_i)_g\left((\widetilde{X}_j)_g\right)=\delta_{ij},\quad j=1,\ldots,n.
 \end{equation}
 By construction, $\omega_1,\dots,\omega_n$ are left-invariant $1$-forms on $G$, as follows by observing that
\begin{equation}\label{eq.47s}
(\ell_g^\ast \omega_i)(\tilde{X}_j)=\omega_i(\ell_g^\ast\tilde{X}_j)=\omega_i(\tilde{X}_j).
\end{equation}
In particular, this also entails that $\omega_1,\ldots,\omega_n$ form a basis of the dual space of $\mathrm{T}_gG$ for every $g\in G$. Therefore, the differential form $\Oinv$ defined as
\begin{equation}
\Oinv\coloneqq\omega_1\wedge\dots\wedge \omega_n,
\end{equation}
is a (nowhere vanishing) left-invariant $n$-form  on $G$. Indeed,  since the pullback $\ell_g^\ast$ commutes with $\wedge$, we have:
\begin{equation}\label{eq.48s}
\ell_g^\ast\Oinv=\ell_g^\ast(\omega_1\wedge\ldots\wedge\omega_n)=\ell_g^\ast\omega_1\wedge\ldots\wedge\ell_g^\ast\omega_n=\omega_1\wedge\ldots\wedge\omega_n=\Oinv,
\end{equation}
that is, $\Oinv$ is left-invariant. It is clear that any constant multiple of $\Oinv$ is a left-invariant $n$-form as well. Conversely, if $\breve{\Omega}$ is another left-invariant $n$-form on $G$, there must exist $c\in\rat_p$ such that $\breve{\Omega}(e)=c\Oinv(e)$. But then, the left-invariance condition~\eqref{eq.73s} entails that $\breve{\Omega}(g)=c\Oinv(g)$ for every $g$ in $G$. 

We want now to show that if $\Oinv$ is the left-invariant differential $n$-form on $G$, its induced measure, $\mu_{\Oinv}$, is the left Haar measure on $G$ (up to multiplicative constants). Indeed, we already know that $\mu_{\Oinv}$ is a Radon measure. To conclude that it is a Haar measure, we have to show that it is left-invariant. Let $\sC$ be a compact (open) set in $G$. From the left-invariance of $\Oinv$ we see that
\begin{equation}
\omega_\alpha\circ\ha\vf_\alpha=(\ha\vf_\alpha)^\ast\Oinv=(\ha\vf_\alpha)^\ast L_{g^{-1}}^\ast\Oinv,
\end{equation}
for every $g\in G$. This entails that $\mu_{\Oinv}(g\sC)=\mu_{\Oinv}(\sC)$, for every compact (open) set $\sC\subset G$, and $g$ in $G$ (see~\eqref{eq.37}). Moreover, since $G$ is second countable, $\mu_{\Oinv}$ is regular. In particular, inner regularity entails that
\begin{align}
\mu_{\Oinv}(\E)&=\sup\{\mu_{\Oinv}(K)\mid \mbox{$K\subset \E$ compact}\}\nonumber\\
&=\sup\{\mu_{\Oinv}(gK)\mid \mbox{$K\subset\E$ compact}\}\nonumber\\
&=\mu_{\Oinv}(g\E),
\end{align}
for every Borel set $\E$ in $G$, and every $g\in G$. Concluding, we proved that $\mu_{\Oinv}$ is a left-invariant Radon measure on $G$, and since the Haar measure is essentially uniquely defined, it must coincide with the Haar measure on $G$ up to a multiplicative constant.
%%%%%%%%%%%%%%%%%%% Second Section %%%%%%%%%%%%%%%%%%%%%%%%%%

\section{The Haar measure on \texorpdfstring{$p$}{Lg}-adic Lie groups}\label{cosHaarmes}
In this section, we show how to construct a left Haar measure $\mu$ on a (second countable) $p$-adic Lie group $G$. Our approach exploits the peculiar topological features of $p$-adic Lie groups, and relies on the possibility to construct a \emph{quasi-invariant measure} for $G$. Eventually, we will prove that the measure thus constructed coincides with the measure induced by the left-invariant differential $n$-form $\Oinv$ on $G$ (see Subsection~\ref{subsec.3.2}).

We begin by recalling the notion of a quasi-invariant measure~\cite{folland2016course}. Let $G$ be a $p$-adic Lie group, and let $\nu$ be a Radon measure on it. For $h\in G$, we can define the left translation $\nu^h$, of $\nu$ by $h$, as
\begin{equation}\label{eq.46}
\nu^h(\E)\coloneqq \nu(h\E),
\end{equation}
for every Borel set $\E\in\mathcal{B}_G$. We say that $\nu$ is \emph{quasi-invariant} if the measures $\nu^h$ are all equivalent, i.e., mutually absolutely continuous~\cite{folland99}. In such a case, we have:
\begin{equation}\label{eq.48}
\de\nu^h(g)=\eta(h,g)\de\nu(g),
\end{equation}
where $\eta\colon G\times G\rightarrow \mathbb{R}^+_{\ast}$ is a positive map on $G\times G$. The function $\eta$ is the Radon-Nikodym derivative $\de\nu^h/\de\nu$. For $h,h'\in G$, since $\nu^{hh'}=(\nu^h)^{h'}$, the chain rule for the Radon-Nikodym derivative entails the following \emph{cocycle formula}:
\begin{equation}\label{eq.47}
\eta(hh',g)=\eta(h,h'g)\eta(h',g),
\end{equation}
for every $g\in G$. In particular, using~\eqref{eq.47} it is not difficult to prove the following result.
\begin{lemma}\label{lemm.4.1}
Let $G$ be a $p$-adic Lie group, and let $\nu$ be a quasi-invariant measure on it. The measure defined as
\begin{equation}\label{eq.4}  \de\mu(g)\coloneqq\eta(g,e)^{-1}\de\nu(g)
\end{equation}
--- where $e$ denotes the identity element in $G$ --- is a left Haar measure on $G$.
\end{lemma}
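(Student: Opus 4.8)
The plan is to verify directly that the measure $\mu$ defined by $\de\mu(g) = \eta(g,e)^{-1}\de\nu(g)$ satisfies the left-invariance condition $\mu(h\E) = \mu(\E)$ for every Borel set $\E \subset G$ and every $h \in G$, and that it is a (non-zero) Radon measure. The Radon property should be essentially immediate: $\mu$ is absolutely continuous with respect to the Radon measure $\nu$ with strictly positive, locally bounded Radon-Nikodym derivative $\eta(\cdot,e)^{-1}$, so it is again a Radon measure; non-triviality follows since $\nu$ is non-zero and the density never vanishes. So the real content is left-invariance.

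For the invariance computation I would work at the level of the Radon-Nikodym derivatives. By definition $\mu(h\E) = \int_{h\E} \eta(g,e)^{-1}\de\nu(g)$. The plan is to change variables: writing $\de\mu(g) = \eta(g,e)^{-1}\de\nu(g)$ and recalling that $\nu^h(\E) = \nu(h\E)$ with $\de\nu^h(g) = \eta(h,g)\de\nu(g)$, I would express $\mu(h\E)$ as an integral over $\E$ against a translated density. Concretely, performing the substitution $g \mapsto hg$ inside the integral turns $\int_{h\E}\eta(g,e)^{-1}\de\nu(g)$ into $\int_{\E}\eta(hg,e)^{-1}\de\nu^h(g) = \int_{\E}\eta(hg,e)^{-1}\eta(h,g)\de\nu(g)$. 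Then the whole thing reduces to checking the pointwise identity
\begin{equation}\label{eq.keyidentity}
\eta(hg,e)^{-1}\,\eta(h,g) = \eta(g,e)^{-1}
\end{equation}
for all $h,g \in G$, i.e.\ $\eta(h,g) = \eta(hg,e)\,\eta(g,e)^{-1}$. But this is exactly the cocycle formula~\eqref{eq.47} with the substitution $h \to h$, $h' \to g$, $g \to e$: indeed $\eta(hg,e) = \eta(h,ge)\eta(g,e) = \eta(h,g)\eta(g,e)$, which rearranges precisely to~\eqref{eq.keyidentity}. This closes the argument.

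The step I expect to need the most care is the change of variables $g \mapsto hg$ in $\int_{h\E}\eta(g,e)^{-1}\de\nu(g)$ — making sure it is legitimate at the level of Radon measures and that it produces the claimed integrand $\eta(hg,e)^{-1}\eta(h,g)$ against $\de\nu(g)$. The cleanest way to phrase this rigorously is via pushforward measures: if $\ell_h\colon G\to G$ is left translation by $h$, then $\nu^h = (\ell_h^{-1})_*\nu = (\ell_{h^{-1}})_*\nu$ up to care with conventions, and $\de\nu^h = \eta(h,\cdot)\de\nu$ by~\eqref{eq.48}; one then applies the abstract change-of-variables formula~\eqref{eq.30} (Remark~\ref{rem.3.5}) together with~\eqref{eq.2} to rewrite $\mu(h\E)$. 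Once the bookkeeping of which measure carries which density is set up correctly, everything collapses onto the algebraic identity~\eqref{eq.keyidentity}, which is nothing more than a special instance of the cocycle relation~\eqref{eq.47}. It is worth remarking that no $p$-adic structure is used here at all: the lemma holds verbatim for any locally compact group admitting a quasi-invariant Radon measure, and indeed this is the observation the authors allude to in the introduction when they say that "every quasi-invariant measure on a group immediately yields a Haar measure."
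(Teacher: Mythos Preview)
Your proof is correct and follows essentially the same route as the paper: change variables $g\mapsto hg$ to rewrite $\mu(h\E)$ as $\int_\E\eta(hg,e)^{-1}\de\nu^h(g)=\int_\E\eta(hg,e)^{-1}\eta(h,g)\de\nu(g)$, then apply the cocycle identity~\eqref{eq.47} with $h'\to g$, $g\to e$ to cancel $\eta(h,g)$. Your added remarks on the Radon property and on the absence of any genuinely $p$-adic input are accurate and worth keeping.
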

\begin{proof}
Let $\mu^h$ be the left translation, by $h$ in $G$, of the measure $\mu$, as defined in~\eqref{eq.46}. For every Borel set $\E$ in $\mathcal{B}_G$, we have:
\begin{equation}\label{eq.50'}
\mu^h(\E)=\int_{h\E}\eta(g,e)^{-1}\de\nu(g)=\int_\E\eta(hg,e)^{-1}\de\nu^h(g),
\end{equation}
where in the last equality we have used the change of variable $h^{-1}g\mapsto g$. Then, taking into account condition~\eqref{eq.48} for quasi-invariant measures, and exploiting the cocycle formula~\eqref{eq.47}, we have: 
\begin{equation}
\eta(hg,e)=\eta(h,g)\eta(g,e),\qquad \de\nu^h(g)=\eta(h,g)\de\nu(g),
\end{equation}
which yield
\begin{equation}\label{eq.52}
\int_\E\eta(hg,e)^{-1}\de\nu^h(g)=\int_\E\eta(h,g)^{-1}\eta(g,e)^{-1}\eta(h,g)\de\nu(g)=\int_{\E}\eta(g,e)^{-1}\de\nu(g)=\mu(\E).
\end{equation}
Therefore, the first equality in~\eqref{eq.50'} and the last one in~\eqref{eq.52} give the desired result.
\end{proof}
From Lemma~\ref{lemm.4.1} we see that it is always possible to construct a left Haar measure on a $p$-adic Lie group $G$ once known a quasi-invariant measure on it. Hence, our next step is to show how to explicitly construct a quasi-invariant measure on $G$. 

Let $\mathcal{A}=\{(\U_\alpha,\vf_\alpha)\}_{\alpha\in A}$ be a \emph{disjoint atlas} on $G$ (cf.\ Proposition~\ref{prop.2.9}). We can construct a (regular) Radon measure $\nu$ on $G$ as follows. First, in every chart $(\U_\alpha,\vf_\alpha)$ in $\mathcal{A}$, we define a measure $\nu_\alpha$ on $\U_\alpha$ by setting
\begin{equation}\label{eq.56}
\nu_\alpha\coloneqq (\ha \vf_\alpha)_\ast\lambda_\alpha,\qquad\lambda_\alpha=\lambda\rvert_{\vf_\alpha(\U_\alpha)}, 
\end{equation}
that is, $\nu_\alpha$ is the pushforward measure, via $\ha\vf_\alpha\colon \vf_\alpha(\U_\alpha)\rightarrow \U_\alpha$, of the restricted Haar measure $\lambda\rvert_{\vf_\alpha(\U_\alpha)}$ on $\rat_p^n$. 
Note that since $\nu_\alpha$ is finite on compact sets, it is a Radon measure. In this way, we have constructed a Radon measure on every chart $(\U_\alpha,\vf_\alpha)$ in $\mathcal{A}$. To obtain a Radon measure $\nu$ on the whole group $G$, we can then act as follows. Given any Borel set $\E$ in $\mathcal{B}_G$, we express it as the disjoint union $\E=\bigsqcup_{\alpha\in A} \E_\alpha$, where $\E_\alpha\coloneqq \E\cap \U_\alpha$, and set
\begin{equation}\label{eq.54}
\nu(\E)\coloneqq\sum_{\alpha\in A}\nu_\alpha(\E_\alpha).
\end{equation}
Since $A$ is countable, the series in~\eqref{eq.54} contains a countable number of non-null terms. 
It is now easily proved that the measure defined in~\eqref{eq.54} is a (regular) Radon measure on $G$. Indeed, $\nu$ takes values in $[0,+\infty]$ as so do all the $\nu_\alpha$s, and $\nu(\emptyset)=0$. If $\{\E_i\}_i$ is a countable family of Borel sets in $G$, then $\nu(\cup_i \E_i)=\sum_i\nu(\E_i)$, as follows by observing that the $\nu_\alpha$s are $\sigma$-additive, and that the summation order can be exchanged in the double series $\sum_\alpha\sum_i\nu_\alpha(\E_i\cap\U_\alpha)$ by positivity of the $\nu_\alpha$s. Moreover, $\nu$ is clearly finite on compact sets, and since $G$ is second countable, we can conclude that $\nu$ is a regular, and hence Radon, measure on $G$ (cf.\ Theorem~7.8 in~\cite{folland99}).

Our next step is to show that this measure is quasi-invariant. To this end, let $h\in G$ be some fixed point, and let us set, for any $\alpha,\beta\in A$,
\begin{equation}
\U_{\alpha,\beta}^h\coloneqq\{g\in\U_\alpha\mid hg\in\U_\beta\}=h^{-1}\Big((h\U_\alpha)\cap \U_\beta)\Big)=\U_\alpha\cap(h^{-1}\U_\beta).
\end{equation}
Note that $\U_{\alpha,\beta}^h\subset \U_\alpha$ is an \emph{open} set and 
\begin{equation}
\U_\alpha=\bigsqcup_{\beta}\U_{\alpha,\beta}^h.
\end{equation}
Assuming that $\U_{\alpha,\beta}^h\neq\emptyset$, for every $j=1,\dots,n$, and at given $h\in G$, we put
\begin{equation}\label{eq.55}
\vf_\alpha(\U_{\alpha,\beta}^h)\ni x\mapsto \zeta_{\beta,j}(h;x)\coloneqq\vf_{\beta,j}(h\ha\vf_\alpha(x))\in\mathbb{Q}_p,
\end{equation}
where $\vf_{\beta,j}$ is the $j$-th vector component of $\varphi_\beta\colon\U_\beta\rightarrow\mathbb{Q}_p^n$, i.e. $\varphi_\beta=\left(\vf_{\beta,1},\dots,\vf_{\beta,j},\dots,\vf_{\beta,n}\right)$. Moreover, the definition of $\zeta_{\beta,j}(h;\,\cdot\,)$ can be extended to the whole open set $\vf_\alpha(\U_\alpha)=\bigsqcup_\beta\vf_\alpha(\U_{\alpha,\beta}^h)$ by varying $\beta$ in $A$. In this way, we obtain a map $\zeta_{\beta,j}(h;\,\cdot\,)\colon \vf_\alpha(\U_\alpha)\rightarrow \rat_p$ (for suitable labels $\beta$ depending on the charts as in \eqref{eq.55}), for any given $h\in G$. We can then define a function 
\begin{equation}
 \rho_\beta(h;\,\cdot\,)\colon \vf_\alpha(\U_\alpha)\rightarrow \mathbb{R}^+_{\ast},\quad\rho_\beta(h;x)\coloneqq\left\vert\det\left[\frac{\partial \zeta_{\beta,j}}{\partial x_k}(h;x)\right]_{1\leq j,k\leq n}\right\vert_p.
\end{equation}
Eventually, we obtain a function $\eta\colon G\times G\rightarrow \mathbb{R}^+_{\ast}$, defined as follows:
\begin{equation}\label{eq.62s}
 \eta(h,g)\coloneqq \rho_\beta(h;\vf_\alpha(g)),\quad g\in\U_\alpha,\,\,\alpha\in A.   
\end{equation}
Let us now define a (regular) Radon measure $\mu_h$ on $G$ by setting 
\begin{equation}
    \de\mu_h(g)=\eta(h,g)\de\nu(g).
\end{equation}
We want to prove that $\nu$ is quasi-invariant and, moreover, $\mu_h=\nu^h$, so that 
\begin{equation}
\frac{\de\nu^h}{\de\nu}(g)=\eta(h,g).    
\end{equation}
Since $\nu$ is a regular measure, then $\nu^h$ and $\mu_h$ are \emph{regular} measures. Hence, by outer regularity, it is sufficient to show that 
\begin{equation}\label{eq.65}
    \mu_h(\Os)=\nu^h(\Os),
\end{equation}
for every open set $\Os\subset G$. Actually, since 
\begin{equation}
\Os=\bigsqcup_\alpha(\Os\cap\U_\alpha),
\end{equation}
it is sufficient to prove~\eqref{eq.65}
on every open subset $\Os\equiv \Os_\alpha$ of $\U_\alpha$, for $\alpha\in A$. Moreover, since for any open $\Os_\alpha\subset\U_\alpha$,
\begin{equation}
\Os_\alpha=\bigsqcup_\beta(\Os_\alpha\cap\U_{\alpha,\beta}^h)= \bigsqcup_\beta\Os_{\alpha,\beta}^h,    \qquad \Os_{\alpha,\beta}^h\coloneqq \Os_\alpha\cap\U_{\alpha,\beta}^h,
\end{equation}
it is enough to prove~\eqref{eq.65} on every open subset $\Os\equiv \Os_{\alpha,\beta}^h$ of $\U_{\alpha,\beta}^h$. Assuming that $\Os_{\alpha,\beta}^h\neq \emptyset$ (otherwise there is nothing to prove), we have:
\begin{align}
\nu^h(\Os_{\alpha,\beta}^h)&=\int_{h\Os_{\alpha,\beta}^h}\de\nu(g)&\nonumber\\
&=\int_{\vf_\beta(h\Os_{\alpha,\beta}^h)}\de\lambda(x)&(\mbox{since  $h\Os_{\alpha,\beta}^h\subset\U_\beta$})\nonumber\\
&=\int_{\vf_\alpha(\Os_{\alpha,\beta}^h)}\rho(h;x)\de\lambda(x)& (\mbox{by C.O.V.F.~\eqref{eq.29}})\nonumber\\
&=\int_{\Os_{\alpha,\beta}^h}\eta(h,g)\de\nu(g)&(\mbox{by~\eqref{eq.56}-\eqref{eq.54} and~\eqref{eq.62s}})\nonumber\\
&=\mu_h(\Os_{\alpha,\beta}^h).
\end{align}
In conclusion, we have $\nu^h=\mu_h$. Therefore, $\nu^h$ and $\nu$ are mutually absolutely continuous, for every $h\in G$ --- namely, $\nu$ is quasi-invariant --- and $\frac{\de\nu^h}{\de\nu}(g)=\eta(h,g)$.

As a direct consequence of Lemma~\ref{lemm.4.1}, the left Haar measure $\mu$ on $G$ is of the form
\begin{equation}
\de\mu(g)=\eta(g,e)^{-1}\de\nu(g).
\end{equation}
With the above construction, we have proved the following result:
\begin{theorem}\label{theo.4.3}
Let $G$ be a $p$-adic Lie group, and let $\mathcal{A}=\{(\U_\alpha,\vf_\alpha)\}_{\alpha\in A}$ be a disjoint atlas on $G$. If $\mu$ is the left Haar measure on $G$ then, for every Borel set $\E$ in $\mathcal{B}_G$, and any $\U_\alpha$ in $\mathcal{A}$, the following equality holds:
\begin{equation}\label{eq:harrr}
\mu(\E\cap\U_\alpha)=\int_{\varphi_\alpha(\E\cap\U_\alpha)}\left|\det\left[\frac{\partial \zeta_{\alpha,j}}{\partial x_{k}}\big( \ha\vf_\alpha(y);\varphi_0(e)\big)\right]_{1\leq j,k\leq n}\right|_{p}^{-1}\de\lambda(y),
\end{equation}
where $(\U_0,\vf_0)$ is the chart around the identity $e\in G$, $(x_k)_{k=1}^n$ denotes a system of local coordinates w.r.t.\ $(\U_0,\vf_0)$, and $\zeta_j$ is the map defined in~\eqref{eq.55}.
\end{theorem}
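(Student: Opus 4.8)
The plan is to unwind the constructions carried out immediately before the statement, so that the theorem reduces to a bookkeeping exercise with the chart indices. By Lemma~\ref{lemm.4.1}, the left Haar measure on $G$ is $\de\mu(g)=\eta(g,e)^{-1}\de\nu(g)$, where $\nu$ is the quasi-invariant Radon measure built from the disjoint atlas $\mathcal{A}$ in~\eqref{eq.54}, and $\eta$ is the cocycle of~\eqref{eq.62s}, shown above to be the Radon--Nikodym derivative $\de\nu^{h}/\de\nu$. First I would fix a chart $(\U_\alpha,\vf_\alpha)\in\mathcal{A}$ and a Borel set $\E\in\mathcal{B}_G$; since $\mathcal{A}$ is disjoint, $\E\cap\U_\alpha\subset\U_\alpha$, and by~\eqref{eq.56} the restriction of $\nu$ to $\U_\alpha$ is $\nu_\alpha=(\ha\vf_\alpha)_\ast\lambda_\alpha$. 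Applying the (abstract) change-of-variables formula~\eqref{eq.30} with the Borel map $\ha\vf_\alpha$ then yields
\begin{equation}
\mu(\E\cap\U_\alpha)=\int_{\E\cap\U_\alpha}\eta(g,e)^{-1}\de\nu(g)=\int_{\vf_\alpha(\E\cap\U_\alpha)}\eta\big(\ha\vf_\alpha(y),e\big)^{-1}\de\lambda(y).
\end{equation}

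The remaining step is to make $\eta(\ha\vf_\alpha(y),e)$ explicit through~\eqref{eq.55}--\eqref{eq.62s}. By definition~\eqref{eq.62s}, for $g\in G$ one has $\eta(g,e)=\rho_{\beta}\big(g;\vf_{\gamma}(e)\big)$, where $\gamma$ is the index of the chart containing the second argument $e$, and $\beta$ the index of the chart containing the product $g\cdot e=g$. Since $\mathcal{A}$ is disjoint, $e\in\U_0$ forces $\gamma=0$, while $g\in\U_\alpha$ forces $\beta=\alpha$; moreover $e\in\U_{0,\alpha}^{g}$ (because $g\cdot e=g\in\U_\alpha$), so $\vf_0(e)$ lies in the open set $\vf_0(\U_{0,\alpha}^{g})$ on which $\zeta_{\alpha,j}(g;\,\cdot\,)$ is defined and $\rat_p$-analytic, and its Jacobian at $\vf_0(e)$ is meaningful. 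Specialising the map $\zeta$ of~\eqref{eq.55} to source chart $(\U_0,\vf_0)$ and target chart $(\U_\alpha,\vf_\alpha)$, that is $\zeta_{\alpha,j}(h;x)=\vf_{\alpha,j}\big(h\,\ha\vf_0(x)\big)$, and taking $g=\ha\vf_\alpha(y)$, I would conclude
\begin{equation}
\eta\big(\ha\vf_\alpha(y),e\big)=\left|\det\left[\frac{\partial\zeta_{\alpha,j}}{\partial x_k}\big(\ha\vf_\alpha(y);\vf_0(e)\big)\right]_{1\le j,k\le n}\right|_p,
\end{equation}
with $(x_k)_{k=1}^n$ the local coordinates of $(\U_0,\vf_0)$. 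Substituting this identity into the previous display gives precisely~\eqref{eq:harrr}.

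The only delicate point is exactly this chart bookkeeping: one has to check that evaluating the cocycle $\eta$ at the pair $(\ha\vf_\alpha(y),e)$ forces the \emph{source} chart to be the distinguished chart $(\U_0,\vf_0)$ around the identity --- whence the evaluation point $\vf_0(e)$ --- and the \emph{target} chart to be $(\U_\alpha,\vf_\alpha)$ itself, these two facts following respectively from $e\in\U_0$ and from the identity $g\cdot e=g$ together with disjointness of the atlas. All the analytic substance --- quasi-invariance of $\nu$, the cocycle identity~\eqref{eq.47}, Lemma~\ref{lemm.4.1}, and the change-of-variables formula~\eqref{eq.30} --- is already in place, so no further input is needed beyond collecting these ingredients and evaluating $\eta(\,\cdot\,,e)$ within a single fixed chart.
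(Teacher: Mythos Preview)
Your proposal is correct and follows essentially the same approach as the paper: the paper presents Theorem~\ref{theo.4.3} as a direct consequence of the preceding construction (``With the above construction, we have proved the following result''), and you are unwinding exactly that construction --- the quasi-invariant measure $\nu$ from~\eqref{eq.56}--\eqref{eq.54}, its cocycle $\eta$ from~\eqref{eq.62s}, and Lemma~\ref{lemm.4.1} --- together with the chart bookkeeping that identifies $\eta(\ha\vf_\alpha(y),e)$ as the stated Jacobian. Your explicit verification that $e\in\U_0$ forces the source chart to be $(\U_0,\vf_0)$ and that $g\cdot e=g\in\U_\alpha$ forces the target chart to be $(\U_\alpha,\vf_\alpha)$ is precisely the point the paper leaves implicit and then clarifies in the remark following the theorem.
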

\begin{remark}
In~\eqref{eq:harrr}, the functions $\zeta_{\alpha,j}$ are correctly labelled by $\alpha$. In fact, their derivatives are performed in a neighborhood of $x=\varphi_0(e)$, and, thus, $\zeta_{\alpha,j}(\ha\vf_\alpha(y);x)=\varphi_{\alpha,j}(\ha\vf_\alpha(y)\ha\vf_0(x))$ whenever $\ha\vf_0(x)\in\U_0$ is `sufficiently
close to $e$' such that $\ha\vf_\alpha(y)\ha\vf_0(x)\in\U_\alpha$.
% \begin{equation}     \left.\frac{\partial}{\partial x_k}\zeta_j^\alpha(\ha\vf_\alpha(y);x)\right\rvert_{x=\varphi_0(e)} = \left.\frac{\partial}{\partial x_k}\varphi_{\alpha,j}(\ha\vf_\alpha(y)\ha\vf_0(x))\right\rvert_{\ha\vf_0(x)=e} \end{equation} as long as $\ha\vf_0(x)\in\U_0$ is `sufficiently close to $e$' such that $\ha\vf_\alpha(y)\ha\vf_0(x)\in\U_\alpha$.
\end{remark}

Now, we prove that Theorem~\ref{theo.4.3} still holds in the case of an atlas including possibly overlapping charts. Indeed, let  $\mathcal{A}=\{(\U_\alpha,\vf_\alpha)\}_{\alpha\in A}$ be an arbitrary atlas on $G$. Since $G$ is strictly paracompact (see Proposition~\ref{prop.2.9}), we can always find a refinement $\mathcal{A}'$ of $\mathcal{A}$ consisting of pairwise disjoint charts. Then, Theorem~\ref{theo.4.3} provides us with a left Haar measure on (every chart of) $\mathcal{A}'$. To show that this measure is well-defined on $\mathcal{A}$ as well, we have to prove that for every Borel set $\E$ in $\mathcal{B}_G$ contained in the intersection of two charts in $\mathcal{A}$, the value of the integral in~\eqref{eq:harrr} is the \emph{same} w.r.t.\ the local coordinates of  the two charts; that is, we want to prove
\begin{equation}\label{eq.75n}
\mu(\E\cap\U_\alpha)=\mu(\E\cap\U_\beta),
\end{equation}
for every Borel set $\E$ in $G$ such that $\E\subset\U_\alpha\cap\U_\beta$, $\U_\alpha,\U_\beta\in\mathcal{A}$. To start with, the r.h.s.\ of~\eqref{eq.75n} explicitly is
\begin{equation}
\mu(\E\cap\U_\beta)=\int_{\vf_\beta(\E\cap\U_\beta)}\left|\det\left[\frac{\partial \zeta_{\beta,j}}{\partial x_{k}}\big( \ha\vf_\beta(z);\varphi_0(e)\big)\right]_{1\leq j,k\leq n}\right|_{p}^{-1}\de\lambda(z),
\end{equation}
where we have denoted by $y$ the local coordinates in the chart $(\U_\beta,\varphi_\beta)$. Then, the change of variable $z=\vf_\beta\circ\ha\vf_\alpha(y)$ immediately yields 
\begin{align}\label{eq.77}
&\int_{\vf_\beta(\E\cap\U_\beta)}\left|\det\left[\frac{\partial \zeta_{\beta,j}}{\partial x_{k}}\big( \ha\vf_\beta(z);\varphi_0(e)\big)\right]\right|_{p}^{-1}\de\lambda(z)\nonumber\\
&=\int_{\vf_\alpha(\E\cap\U_\alpha)}\left|\det\left[\frac{\partial\zeta_{\beta,j}}{\partial x_k}\Big((\ha\vf_\beta\circ\vf_\beta\circ\ha\vf_\alpha)(y);\vf_0(e)\Big)\right]\right|_p^{-1}\,\bigg|\det\bigg[\frac{\partial(\vf_\beta\circ\ha\vf_\alpha)_j}{\partial y_k}(y)\bigg]\bigg|_p\de\lambda(y)\nonumber\\
&=\int_{\vf_\alpha(\E\cap\U_\alpha)}\left|\det\left[\frac{\partial\zeta_{\alpha,j}}{\partial x_k}(\ha\vf_\alpha(y);\vf_0(e))\right]\right|_p^{-1}\,\bigg|\det\bigg[\frac{\partial(\vf_\beta\circ\ha\vf_\alpha)_j}{\partial y_k}(y)\bigg]\bigg|_p^{-1}\,\bigg|\det\bigg[\frac{\partial(\vf_\beta\circ\ha\vf_\alpha)_j}{\partial y_k}(y)\bigg]\bigg|_p\de\lambda(y)\nonumber\\
&=\int_{\vf_\alpha(\E\cap\U_\alpha)}\left|\det\left[\frac{\partial\zeta_{\alpha,j}}{\partial x_k}(\ha\vf_\alpha(y);\vf_0(e))\right]\right|_p^{-1}\de\lambda(y)=\mu(\E\cap\U_\alpha),
\end{align}
where, for notational convenience, we have omitted $1\leq j,k\leq n$ in the Jacobians. Note that 
in the second equality of~\eqref{eq.77}, we have used the C.O.V.F.\ for multiple integrals in $\rat_p^n$ (cf.\ Theorem~\ref{changeofvarint}). 
Moreover, in the third equality, we have used the fact that $\zeta_{\beta,j}$ and $\zeta_{\alpha,j}$ are related 
%--- as functions in the local coordinates of $\U_\alpha$ and $\U_\beta$ --- 
via the condition $\zeta_{\alpha,j}=\vf_{\alpha,j}\circ\ha\vf_{\beta,j}\circ\zeta_{\beta,j}$, and then we have exploited the usual chain rule for the Jacobian of a composite function. Therefore,~\eqref{eq.75n} shows that the (left) Haar measure in Theorem~\ref{theo.4.3} is well-defined over overlapping charts; that is, it does not depend on the particular chosen chart in $\mathcal{A}$. Concluding, we have the following Corollary of Theorem~\ref{theo.4.3}.
\begin{corollary}\label{cor.4.3}
Let $G$ be a $p$-adic Lie group, and let $\mathcal{A}=\{(\U_\alpha,\vf_\alpha)\}_{\alpha\in A}$ be a (not necessarily disjoint) 
atlas on $G$. The left Haar measure $\mu$ on $G$ is expressed, in the local coordinates of any given chart $(\U_\alpha,\varphi_\alpha)$ in $\mathcal{A}$, as
\begin{equation}\label{eq.71}
\mu(\E\cap\U_\alpha)=\int_{\varphi_\alpha(\E\cap\U_\alpha)}\left|\det\left[\frac{\partial \zeta_{\alpha,j}}{\partial x_{k}}\big( \ha\vf_\alpha(y);\varphi_0(e)\big)\right]_{1\leq j,k\leq n}\right|_{p}^{-1}\de\lambda(y),
\end{equation}
for every Borel set $\E\in\mathcal{B}_G$, where $(\U_0,\vf_0)$ is the chart around $e\in G$, and $(x_k)_{k=1}^n$ denotes a system of local coordinates w.r.t.\ $(\U_0,\vf_0)$.
\end{corollary}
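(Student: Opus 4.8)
The plan is to deduce the statement from Theorem~\ref{theo.4.3}, which already produces the left Haar measure $\mu$ on $G$ when the covering atlas is \emph{disjoint}, and then to verify that the local formula~\eqref{eq.71} is insensitive to the particular chart in which it is evaluated. First I would invoke Proposition~\ref{prop.2.9}: since $G$ is strictly paracompact, the given atlas $\mathcal{A}=\{(\U_\alpha,\vf_\alpha)\}_{\alpha\in A}$ admits a refinement $\mathcal{A}'$ made of pairwise disjoint charts, obtained by restricting the coordinate maps $\vf_\alpha$ to the members of a disjoint open refinement of $\{\U_\alpha\}_{\alpha\in A}$. Theorem~\ref{theo.4.3} applied to $\mathcal{A}'$ then equips $G$ with a left Haar measure $\mu$, whose value on an arbitrary Borel set is recovered by countable additivity over the disjoint pieces of $\mathcal{A}'$. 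What remains is to show that, for a Borel set $\E\subset\U_\alpha\cap\U_\beta$ with $\U_\alpha,\U_\beta\in\mathcal{A}$, the integral~\eqref{eq.71} written in the $\vf_\alpha$-coordinates coincides with the one written in the $\vf_\beta$-coordinates, i.e.\ that~\eqref{eq.75n} holds; granting this, the value of~\eqref{eq.71} in any chart of $\mathcal{A}$ must agree with the value coming from the refined disjoint charts contained in it, and the general Borel case follows from the decomposition $\E=\bigsqcup_\alpha(\E\cap\U_\alpha)$.

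For the chart-independence I would argue exactly along the computation preceding the statement: in the $\vf_\beta$-integral perform the substitution $z=(\vf_\beta\circ\ha\vf_\alpha)(y)$ and apply the $p$-adic change-of-variables formula (Theorem~\ref{changeofvarint}), which introduces the factor $|\det\mathrm{D}(\vf_\beta\circ\ha\vf_\alpha)(y)|_p$. The key algebraic input is the relation between the maps $\zeta_{\alpha,j}$ and $\zeta_{\beta,j}$ of~\eqref{eq.55}, namely $\zeta_{\alpha,j}=\vf_{\alpha,j}\circ\ha\vf_{\beta,j}\circ\zeta_{\beta,j}$ near $x=\vf_0(e)$: differentiating it with the chain rule for Jacobians of composite maps shows that the determinant of $[\partial\zeta_{\beta,j}/\partial x_k]$ evaluated along $\ha\vf_\beta\circ\vf_\beta\circ\ha\vf_\alpha$ differs from that of $[\partial\zeta_{\alpha,j}/\partial x_k]$ evaluated along $\ha\vf_\alpha$ precisely by the transition Jacobian $\det\mathrm{D}(\vf_\beta\circ\ha\vf_\alpha)(y)$. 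Taking $p$-adic absolute values and inverses, this factor cancels against the one produced by the substitution, and the residual integral is exactly $\int_{\vf_\alpha(\E\cap\U_\alpha)}|\det[\partial\zeta_{\alpha,j}/\partial x_k(\ha\vf_\alpha(y);\vf_0(e))]|_p^{-1}\de\lambda(y)$, as required.

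The substitution and the chain-rule manipulation are routine; the part demanding genuine care — and the main obstacle — is the bookkeeping of the base point $\vf_0(e)$ and of which composite transition maps are being differentiated. One must restrict to a neighbourhood of the identity small enough that all products $\ha\vf_\alpha(y)\,\ha\vf_0(x)$ (and likewise with $\beta$) land inside the appropriate chart domain, so that $\zeta_{\alpha,j}(\,\cdot\,;x)$ and $\zeta_{\beta,j}(\,\cdot\,;x)$ are genuinely defined there and their partial derivatives at $x=\vf_0(e)$ are meaningful; only once such a neighbourhood has been fixed is the chain-rule identification of the two Jacobians legitimate. With this in place the cancellation is purely formal, $\mu$ as built from $\mathcal{A}'$ agrees with~\eqref{eq.71} in every chart of $\mathcal{A}$, and the statement follows.
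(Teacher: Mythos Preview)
Your proposal is correct and follows essentially the same argument as the paper: refine to a disjoint atlas via Proposition~\ref{prop.2.9}, invoke Theorem~\ref{theo.4.3}, and then verify chart-independence of~\eqref{eq.71} by the substitution $z=(\vf_\beta\circ\ha\vf_\alpha)(y)$, the $p$-adic change-of-variables formula, and the chain rule applied to the relation $\zeta_{\alpha,j}=\vf_{\alpha,j}\circ\ha\vf_{\beta,j}\circ\zeta_{\beta,j}$. Your attention to the neighbourhood-of-identity bookkeeping is apt and mirrors the paper's own caveat.
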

To conclude this section, we now show that the Haar measure~\eqref{eq.71} \emph{coincides} with the measure on $G$ associated with the left-invariant differential $n$-form $\Oinv$ on $G$, as constructed in Subsection~\ref{subsec.3.2}. Indeed, let us denote with $\breve{\Omega}$ the differential $n$-form on $G$ whose local expression  $\breve{\Omega}_\alpha$, in every chart $\U_\alpha$ in $\mathcal{A}$, is given by
\begin{equation}\label{eq.71s}
\breve{\Omega}_\alpha(g)=\det[\mathrm{D}\zeta_\alpha(g;\vf_0(e))]^{-1}\de x_1\wedge\ldots\wedge\de x_n,    
\end{equation}
where, as usual,  $\mathrm{D}\zeta_\alpha$ denotes the Jacobian matrix of $\zeta_\alpha=(\zeta_{\alpha,j})_{j=1}^{n}$, and where we set $\vf_\alpha(g)=(x_1,\ldots,x_n)$. It is clear that the measure $\mu_{\breve{\Omega}}$, associated with $\breve{\Omega}$ via relation~\eqref{eq.37},  coincides with the Haar measure in~\eqref{eq.71}. (It is worth noting that, from Corollary~\ref{cor.4.3}, it follows that $\breve{\Omega}$ does not depend on the particular chosen chart on $G$, i.e. it is a well-defined differential $n$-form on $G$). To prove that the form~\eqref{eq.71s} \emph{coincides} with the left-invariant differential $n$-form $\Oinv$ on $G$, it is enough to show that condition~\eqref{eq.72s} holds, i.e., $\ell_h^\ast\breve{\Omega}(hg)=\breve{\Omega}(g)$, for every $h,g$ in $G$. Indeed, this will prove that $\breve{\Omega}$ is a left-invariant differential $n$-form on $G$, and due to its essential uniqueness,  we can then conclude that it coincides with $\Oinv$ (up to a multiplicative constant). In fact, we have:
\begin{align}
\ell_h^\ast\breve{\Omega}(hg)&=\ell_h^\ast\Big(\det[\mathrm{D}\zeta_\beta(\cdot\,;\vf_0(e))]^{-1}(hg)\de y_1\wedge\ldots\wedge\de y_n\Big)\nonumber\\
&=\det[\mathrm{D}(\zeta_\beta(\cdot\,;\vf_0(e))\circ\ell_h)]^{-1}(hg)\det[\mathrm{D}\ell_h]\de x_1\wedge\ldots\wedge\de x_n\nonumber\\
&=\det[\mathrm{D}\zeta_\alpha(h^{-1}hg;\vf_0(e))]^{-1}\det[\mathrm{D}\ell_h]^{-1}\det[\mathrm{D}\ell_h]\de x_1\wedge\ldots\wedge\de x_n\nonumber\\
&=\breve{\Omega}(g),
\end{align}
where we set $\vf_\alpha(hg)=(y_1,\ldots,y_n)$. Note that, in the second equality we have used the pullback formula~\eqref{eq:diffformpull} for differential forms, while in the third equality we have used the formula for the Jacobian of a composite function, taking into account the relation $\zeta_{\alpha,j}=\vf_{\alpha,j}\circ\ha\vf_{\beta,j}\circ\zeta_{\beta,j}$ between $\zeta_{\alpha,j}$ and $\zeta_{\beta,j}$. Hence, since the Haar measure is essentially uniquely defined,  we get to the conclusion that the left Haar measure~\eqref{eq.71} must \emph{coincide} (up to a multiplicative constant) with the measure $\mu_{\Oinv}$ induced by the left-invariant differential $n$-form $\Oinv$ on $G$.
%\begin{remark}\label{rem:0nonzero} In deriving the local representation of the left Haar measure on a $p$-adic Lie group, we have made use of the result in Theorem~\ref{changeofvarint}.  Accordingly, the Jacobian of the change of variables is required to be \emph{non-zero} only at a single point of the considered neighborhood. In agreement to this, formula~\eqref{eq.71} becomes singular at the points where the Jacobian is zero: These points form a set of zero measure and, therefore, can be disregarded in the construction of the Haar measure on $G$. \end{remark}
%abbiamo tolto questi remark, avendo potuto scrivere che la derivata di Radon-Nikodym ha valori in (0,+\infty), ed è definita a mano di set di misura nulla, in cui può essere zero, che è quello che questi remark enfatizzavano
\begin{remark}\label{intonG}
Let us clarify how the local formula~\eqref{eq.71} for the Haar measure $\mu$ on $G$ allows us to globally integrate a function on $G$.
Given $f\in \mathrm{C}_c(G)$, its Haar integral $\int_Gf(g)\de\mu(g)$ can be computed by splitting $f$ as a sum of its components on local supports contained in the domains of the charts in an atlas for $G$.
This is done by making use of a partition of unity $\{\chi_\alpha\}_{\alpha\in A}$ under an atlas $\{(\U_\alpha,\varphi_\alpha)\}_{\alpha\in A}$ of $G$. Then, the following relations hold:
\begin{equation}\label{eq.38}
\int_Gf(g)\de\mu(g)=\int_G\sum_{\alpha\in A} \chi_\alpha f(g)\de\mu(g) = \sum_{\alpha\in A}\int_{\U_\alpha}\chi_\alpha f(g)\de\mu(g).
\end{equation}
Each integral in the summation can be computed by using the local formulas~\eqref{eq.71}.
\end{remark}

%%%%%%%%%%%%%%%%%%%% Third Section %%%%%%%%%%%%%%%%%%%%%%%%%%%
\section{Applications}\label{sec5}
As previously observed (Proposition~\ref{prop:compatti}), the groups $\so(n,\mathbb{Q}_p)$, $n=2,3,4$, are compact. Hence, they admit a (left and right) Haar measure, which is essentially uniquely defined, i.e., unique up to a normalization constant factor. The construction of the Haar measure on $\so(2,\rat_p)_\kappa$ immediately follows by formula~\eqref{eq.71}. On the other hand, we will explicitly construct the Haar integrals on $\so(3,\rat_p)$ and $\so(4,\rat_p)$.  A fruitful approach is to introduce a suitable \emph{$p$-adic quaternion algebra}, $\pqal$, and exploit its relations with the $p$-adic special orthogonal groups in dimension three and four. In particular, we will prove that the latter groups can be realized as suitable quotients of the quaternion groups $\mathbb{H}_p^\times$ and $\mathbb{P}(\mathbb{H}_p^\times)$ respectively (c.f.\ Theorems~\ref{theo:SO3quat} and~\ref{theo:SO4quat}), whose Haar measures are determined, once again, by means of a direct application of~\eqref{eq.71}. Then, exploiting the Weil-Mackey-Bruhat formula introduced in Subsection~\ref{sec2n}, we will express the Haar integrals on $\mathrm{SO}(3,\mathbb{Q}_p)$ and $\mathrm{SO}(4,\mathbb{Q}_p)$ as lifts to the Haar integrals on the covering quaternion groups (see the forthcoming Theorems~\ref{theoso3} and~\ref{theoso4}).

\subsection{The Haar measure on \texorpdfstring{$\so(2,\mathbb{Q}_p)_\kappa$}{Lg}}\label{subsec.5.1}
In this subsection, we explicitly construct a left and right Haar measure on every $\so(2,\rat_p)_\kappa$, as in Corollary~\ref{cor:compunigru}. 

According to parameterization~\eqref{genelso2}, $\so(2,\rat_p)_\kappa$ is homeomorphic to the $p$-adic projective line, and it is covered by two disjoint charts. One coordinate map, say $\varphi_{(\kappa)}$, is defined on $\so(2,\rat_p)_\kappa\backslash\{-\mathrm{I}\}$ to $\rat_p$, and it is such that $\ha\varphi_{(\kappa)}(x)\equiv R(\alpha)$ (cf.\ Theorem~\ref{th.23}); the other one maps $-\mathrm{I}\in \so(2,\rat_p)_\kappa$ to $\infty$. Since the groups $\so(2,\rat_p)_\kappa$ are compact and infinite (uncountable), the singleton $\{-\mathrm{I}\}$ has \emph{zero} Haar measure. The Jacobian in~\eqref{eq.71} is now easily computed: by recalling the composition law~\eqref{comphar}, we find
\begin{equation}
\Big[\frac{\partial \zeta_{(\kappa)}}{\partial \beta}(\ha\vf_{(\kappa)}(\alpha);\beta)\Big]_{1\leq j,k\leq n}\equiv \frac{\de}{\de\beta}\bigg(\frac{\alpha+\beta}{1-\kappa\alpha\beta}\bigg)=\frac{1+\kappa\alpha^2}{(1-\kappa\alpha\beta)^2}.
\end{equation}
(Note that $-\kappa$ is never a square~\cite{our1st}, i.e., $1+\kappa\alpha^2\neq0$ for every $\alpha\in\mathbb{Q}_p$). Therefore, an application of~\eqref{eq.71} --- with $\beta=\varphi_{(\kappa)}(\mathrm{I})=0$ --- immediately yields the Haar measure of every Borel subset $\mathcal{E}$ in $\so(2,\mathbb{Q}_p)_\kappa$:
\begin{equation}\label{haarmes2}
\mu_2^{(\kappa)}(\mathcal{E})=\int_{\varphi_{(\kappa)}(\mathcal{E})}\frac{1}{|1+\kappa\alpha^2|_{p}}\de\lambda(\alpha),
\end{equation}
with $\de\lambda(\alpha)$ the Haar measure on $\rat_p$.
\begin{remark}
One can directly verify that the measure in~\eqref{haarmes2} is a Haar measure, i.e., left- and right-invariant. Indeed, let us consider the functional invariance condition in~\eqref{lefinv}:
\begin{equation}
\int\limits_{\so(2,\rat_p)_\kappa}\hspace{-4mm}  L_gf(x)\de\mu_2^{(\kappa)}(x)=\hspace{-2mm} \int\limits_{\alpha\in\rat_p}\hspace{-2mm}L_{R_\kappa(\beta)}f(R_\kappa(\alpha))\frac{\de\lambda(\alpha)}{|1+\kappa\alpha^2|_p}=\hspace{-2mm} \int\limits_{\alpha\in\rat_p}\hspace{-2mm}f(R_\kappa(-\beta)R_\kappa(\alpha))\frac{\de\lambda(\alpha)}{|1+\kappa\alpha^2|_p},
\end{equation}
for $f\in \mathrm{C}\big(\so(2,\rat_p)_\kappa\big)$ a compactly supported function on $\so(2,\rat_p)_\kappa$ (recall that $\mathrm{C}_c(X)=\mathrm{C}(X)$, whenever $X$ is compact), and where $g=R_\kappa(\beta)$, for some $\beta\in\rat_p$. In the last integral, we have also used the fact that $L_gf(x)=f(g^{-1}x)$ (i.e., the left translation of functions on $\so(2,\rat_p)_\kappa$), together with $R_\kappa(\beta)^{-1}=R_\kappa(-\beta)$. Recalling  formula~\eqref{comphar}, we have:
\begin{equation}\label{intver}
\int\limits_{\so(2,\rat_p)_\kappa}\hspace{-2mm}L_gf(x)\de\mu_2^{(\kappa)}(x) = \int\limits_{\alpha\in\rat_p}f\left(R_\kappa\left(\frac{\alpha-\beta}{1+\kappa\alpha\beta}\right)\right)\frac{1}{|1+\kappa\alpha^2|_p}\de\lambda(\alpha).
\end{equation}
Let us now set $\varpi=(\alpha-\beta)/(1+\kappa\alpha\beta)$. We have:
\begin{equation}\label{changever}
\alpha=\frac{\varpi+\beta}{1-\kappa\beta\varpi},\qquad \de\lambda(\alpha)=\frac{1+\kappa\beta^2}{(1-\kappa\varpi\beta)^2}\de\varpi,
\end{equation}
and, by inserting~\eqref{changever} into~\eqref{intver}, we obtain
\begin{align}
&\int\limits_{\varpi\in\rat_p}\hspace{-2mm}f(\varpi)\frac{|1-\kappa\varpi\beta|_p^2}{|(1-\kappa\varpi\beta)^2+\kappa(\varpi+\beta)^2|_p}\frac{|1+\kappa\beta^2|_p}{|1-\kappa\varpi\beta|_p^2}\de\varpi=\hspace{-2mm}\int\limits_{\varpi\in\rat_p}\hspace{-1.2mm}f(\varpi)\frac{|1+\kappa\beta^2|_p}{|1+\kappa\varpi^2+\kappa\beta^2+\kappa^2\varpi^2\beta^2|_p}\de\varpi\nonumber\\
&=\int\limits_{\varpi\in\rat_p}f(\varpi)\frac{|1+\kappa\beta^2|_p}{|(1+\kappa\varpi^2)(1+\kappa\beta^2)|_p}\de\varpi=\int\limits_{\varpi\in\rat_p}f(\varpi)\frac{1}{|1+\kappa\varpi^2|_p}\de\varpi=\int\limits_{\so(2,\rat_p)_\kappa}\hspace{-2mm}f(x)\de\mu_2^{(\kappa)}(x).
\end{align}
This shows the left-invariance of the measure. On the other hand, since the group is compact, this also entails the right-invariance of the measure~\eqref{haarmes2}.
\end{remark}

\begin{remark}\label{rem:Haarptreal}
The Haar measure of any Borel subset $\mathcal{F}$ of $\so(2,\mathbb{R})$ is given by
\begin{equation}
\mu(\mathcal{F})=\lambda(\varphi(\mathcal{F}))
%\frac{1}{2\pi}\int_{\varphi(\mathcal{F})}\de\lambda(\theta)
,
\end{equation}
where $\lambda$ denotes the Haar measure on $\mathbb{R}$, and the coordinate map on $\so(2,\mathbb{R})$ is given by $\varphi
\begin{pmatrix}
\cos\theta & -\sin\theta\\\sin\theta&\cos\theta
\end{pmatrix}=\theta\in[0,2\pi[$. On the other hand, with $\kappa=1$ and $\alpha=\tan\big(\frac{\theta}{2}\big)$, an element of $\so(2,\mathbb{Q}_p)_1$ becomes formally identical to an element of $\so(2,\mathbb{R})$ (cf.\ Remark~\ref{oss:reductio}). Therefore, one may expect that such a `reduction’ applies also for the Haar measure. Indeed, using the C.O.V.F.\ for $p$-adic integrals (see Theorem~\ref{changeofvarint}) we have:
\begin{equation}
\mu_2^{(\kappa)}(\mathcal{E})=\int_{\varphi_{(\kappa)}(\mathcal{E})}\frac{1}{|1+\kappa\alpha^2|_{p}}\de\lambda(\alpha)\rightarrow \int_{\varphi(\mathcal{F})}\left\lvert\frac{1}{1+\tan^{2}(\theta/2)}\right\rvert_{p}\left\lvert\frac{1}{\cos^{2}(\theta/2)}\right\rvert_{p}\de\lambda(\theta)=\int_{\varphi(\mathcal{F})}\de\lambda(\theta),
\end{equation}
i.e., the Haar measure on $\so(2,\mathbb{Q}_p)_\kappa$ reduces to that on $\so(2,\mathbb{R})$, up to the normalization constant factor.
\end{remark}

\subsection{The quaternion algebra \texorpdfstring{$\pqal$}{Lg}}\label{sec:quatalp}
The study of real quaternions was originally motivated by their property to model Euclidean orthogonal transformations of $\mathbb{R}^3$ and $\mathbb{R}^4$~\cite{voight2021}. It turns out that this familiar picture keeps some of its main futures — but also requires some 
essential modifications — when switching from the real to the $p$-adic setting.  In what follows, we will describe the \emph{quaternion algebra} $\pqal$ over the field  $\rat_p$ of $p$-adic numbers~\cite{kochubei2001pseudo}, in a way that closely mimics its real counterpart (briefly reminded in Appendix~\ref{sec.1}); later, (cf.\ Subsection~\ref{sec:relquatrotp}), we shall clarify its relations with the $p$-adic special orthogonal groups in dimension three and four. The cases where $p>2$ and $p=2$ will be discussed separately.

\subsubsection{Case \texorpdfstring{$p>2$}{Lg}}
In the standard real case, the quaternion algebra $\qal$ is the vector space $\mathbb{R}^4\cong \mathbb{R}\times\mathbb{R}^3$ equipped with a suitable standard basis, namely, the one consisting of the vectors $1,\I,\J,\K$ in $\mathbb{R}^4$ satisfying the commutation rules~\eqref{comrel} of Appendix~\ref{sec.1}. From this, one can then define an isomorphism which realizes $\qal$ as a subalgebra of $\mathsf{M}_2(\mathbb{C})$.  Switching to the $p$-adic setting, it is then natural to set the following
\begin{definition}\label{quatalgodd}
Let $p>2$ be an odd prime. By a \emph{$p$-adic quaternion algebra} we mean a four-dimensional vector space $\pqal\cong \mathbb{Q}_p\times\mathbb{Q}_p^3$ over $\rat_p$ which is a $\rat_p$-algebra, and satisfies the following conditions:
\begin{enumerate}[label=\tt{(\alph*)}]
\item There exist $\I,\J$ in $\pqal$ such that, denoting by $1$ the multiplicative identity in $\pqal$, the set $\{1,\I,\J,\K\coloneqq\J\I\}$ is a $\rat_p$-basis in $\pqal$.
\item The basis vectors $\I,\J,\K$ in $\pqal$ satisfy the following commutation rules:
\begin{equation}
\I^2=v,\quad \J^2=-p,\quad \K^2=pv,\quad \J\I=-\I\J,\quad
 \K\I=-\I\K=v\J,\quad \K\J=-\J\K=p\I,
\end{equation}
for $v\in\rat_p$ a non-quadratic $p$-adic unit.
\end{enumerate}
\end{definition}
\begin{remark}\label{rem.5.2}
By means of a direct calculation, one verifies that the centre of the quaternion algebra $\pqal$ coincides with the base field $\rat_p$. This is reminiscent, to some extent, of the standard real case where, similarly, one shows that the field of real numbers $\mathbb{R}$ is the centre of the real quaternion algebra $\mathbb{H}$.
\end{remark}
On the quaternion algebra $\pqal$, we can define a natural \emph{involutive anti-automorphism} %i.e. a map $\pqal\ni\xi\mapsto\overline{\xi}\in\pqal$, satisfying the conditions: $\rm{(i)}$ $\overline{1}=1$, $\rm{(ii)}$  $\overline{\overline{\xi}}=\xi$, and $\rm{(iii)}$ $\overline{\xi\rho}=\overline{\rho}\overline{\xi}$, for every $\xi,\rho\in\pqal$.
by setting
\begin{equation}\label{conjquat}
\pqal\ni\xi=q_0+\I q_1+\J q_2+\K q_3\mapsto \overline{\xi}\coloneqq q_0-\I q_1-\J q_2-\K q_3,\quad \xi\in\pqal.
\end{equation}
Then, it is easily checked that, for every $\xi\in\pqal$, the product of $\xi$ and $\overline{\xi}$ results into 
\begin{equation}
\xi\overline{\xi}=Q_{(4)}(q_0,q_1,q_2,q_3)=q_0^2-vq_1^2+pq_2^2-pvq_3^2, 
\end{equation}
that is, the unique (up to linear equivalence and scaling) four-dimensional definite quadratic form over $\mathbb{Q}_p$, for $p>2$ (cf.~\eqref{eq:quadrform4} in Theorem~\ref{quadform}). Therefore, we can express the inverse $\xi^{-1}$ of every (non-null) $p$-adic quaternion as
\begin{equation}\label{eq.40}
\xi^{-1}=\frac{\overline{\xi}}{Q_{(4)}(q_0,q_1,q_2,q_3)}.
\end{equation}
In what follows, we shall denote by 
\begin{equation}
\pqal^\times\coloneqq\{\xi\in\pqal\mid\xi\neq 0\}=\{\xi=q_0+\I q_1+\J q_2+\K q_3\in\pqal\mid Q_{(4)}(q_0,q_1,q_2,q_3)\neq 0\}
\end{equation}
the multiplicative group of \emph{invertible quaternions}.
\begin{remark}\label{redborm}
In the literature (e.g., see~\cite{voight2021}), the \emph{reduced norm} is defined as the map 
\begin{equation}\label{eq.50}
\mathbb{H}_p\ni\xi\mapsto\rn(\xi)\coloneqq\xi\overline{\xi}=Q_{(4)}(q_0,q_1,q_2,q_3)\in\mathbb{Q}_p.
\end{equation}
It is easily checked that $\rn$ is a multiplicative map; namely, $\rn(\xi\eta)=\xi\eta\overline{\xi\eta}=\xi\eta\overline{\eta}\overline{\xi}=\xi\rn(\eta)\overline{\xi}=\rn(\eta)\xi\overline{\xi}=\rn(\xi)\rn(\eta)$, for every $\xi,\eta\in \mathbb{H}_p$. Moreover, for every $\alpha\in\mathbb{Q}_p$ and $\xi\in\pqal$, 
$\rn(\alpha \xi)=\alpha^2\rn(\xi)$, and $\rn(\overline{\xi})=Q_{(4)}(q_0,-q_1,-q_2,-q_3)=Q_{(4)}(q_0,q_1,q_2,q_3)=\rn(\xi)$. In what follows, we shall denote by $\overline{\xi}/\rn(\xi)$ the inverse element~\eqref{eq.40} of a quaternion  $\xi\in\mathbb{H}_p^\times$.
\end{remark}

In the group of invertible quaternions $\pqal^\times$, it is possible to single out the subgroup of the so-called \emph{unit quaternions}, namely, the group:
\begin{equation}
\un(\pqal)\coloneqq\{\xi\in\pqal^\times\mid \xi^{-1}=\overline{\xi}\}\equiv\{\xi\in\pqal\mid\rn(\xi)=1\}.
\end{equation}

We want now to show that, as in the standard real case, $\pqal$ can be realized as a suitable matrix algebra. To begin with, we recall that in the quadratic form $Q_{(4)}(x)=x_0^2-vx_1^2+px_2^2-pvx_3^2$ on $\rat_p$,   $v\in\rat_p$ is a non-quadratic $p$-adic unit, i.e., $v\notin(\rat_p^\times)^2$ and $|v|_p=1$.  
Accordingly, we set $\mathbb{Q}_p(\sqrt{v})$ to denote the \emph{quadratic field extension} of $\mathbb{Q}_p$ by $\sqrt{v}$. Let $\mathsf{M}_2(\rat_p(\sqrt{v}))$ denote the algebra of two-dimensional matrices over $\rat_p(\sqrt{v})$, and let $\mpqal$ be the subalgebra of the matrices $M$ in $\mathsf{M}_2(\mathbb{Q}_p(\sqrt{v}))$ of the form
\begin{equation}
M=\begin{pmatrix}
x_0+\sqrt{v}x_1 & -x_2+\sqrt{v}x_3\\
p(x_2+\sqrt{v}x_3) & x_0-\sqrt{v}x_1
\end{pmatrix},
\end{equation}
where $x_i\in\mathbb{Q}_p$, $i=0,1,2,3$. It is easily checked that $\mpqal$ is a (unital) \emph{$\rat_p$-division algebra}, where the inverse of every non-null element $M\in\mpqal$ is 
\begin{equation}
M^{-1}=\frac{1}{\det(M)}
\begin{pmatrix}
x_0-\sqrt{v}x_1 & x_2-\sqrt{v}x_3\\
-p(x_2+\sqrt{v}x_3) & x_0+\sqrt{v}x_1
\end{pmatrix},
\end{equation}
and where $\det(M)=Q_{(4)}(x_0,x_1,x_2,x_3)$.
Let us now introduce the matrices $\I,\J,\K$ in $\mathsf{M}_2(\mathbb{Q}_p(\sqrt{v}))$ defined as
\begin{equation}\label{maquat}
\I\coloneqq
\begin{pmatrix}
\sqrt{v} & 0\\
0 &-\sqrt{v}
\end{pmatrix},
\quad\J\coloneqq
\begin{pmatrix}
0 & -1\\
p & 0
\end{pmatrix},
\quad\K\coloneqq
\begin{pmatrix}
0 & \sqrt{v}\\
p\sqrt{v} & 0
\end{pmatrix}.
\end{equation}
It is clear that every $M$ in $\mpqal$ can be expressed as follows:
\begin{equation}
M=x_0+\I x_1+\J x_2+ \K x_3,\quad x_0,x_1,x_2,x_3\in\rat_p,
\end{equation}
(here, we are omitting the identity matrix $\mathrm{I}_2$ multiplying $x_0$); that is, $\mpqal$ coincides with the \emph{$\mathbb{Q}_p$-linear span} of the set $\{\mathrm{I}_2,\I,\J,\K\}$. Moreover, $\I, \J, \K$ satisfies the following commutation rules:
\begin{align}\label{commrule} \I^2=v\mathrm{I}_2,\quad \J^2=-p\mathrm{I}_2,\quad \K^2=pv\mathrm{I}_2,\quad \J\I=-\I\J=\K,\quad
 \K\I=-\I\K=v\J,\quad \K\J=-\J\K=p\I,
 \end{align} 
from which we can argue that $\mpqal$ is a \emph{non-commutative} $\rat_p$-division algebra.
\begin{remark}\label{rem4.1}
As in the complex case, the subset of invertible elements in $\mpqal$ forms a group
\begin{equation}
\mpqal^\times \coloneqq \{M\in \mpqal\mid M\neq 0_{2}\}=\{M\in \mpqal\mid \det(M)\neq0\},
\end{equation}
where $0_2$ denotes the null $2\times 2$ matrix on $\rat_p(\sqrt{v})$. Moreover, we can single out the subgroup $\un(\mpqal)$ of elements in $\mpqal^\times$ having unit determinant, i.e.,
\begin{equation}
\un(\mpqal)\coloneqq\{M\in\mpqal^\times\mid\det(M)=1\},
\end{equation}
which provides the $p$-adic counterpart of~\eqref{eq:uintquat2} in Appendix~\ref{sec.1}.
\end{remark}
In the light of the discussion above, it is now not difficult to prove the following result:
\begin{proposition}\label{prop.4.7q}
For every prime $p>2$, the $p$-adic quaternion algebra $\pqal$ is isomorphic to the $\rat_p$-division subalgebra $\mpqal$ of $M_2(\rat_p(\sqrt{v}))$.
\end{proposition}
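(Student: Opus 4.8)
The plan is to write down the evident candidate isomorphism and check that it is multiplicative; since the algebra $\pqal$ is specified by generators and relations, this reduces to a relations-to-relations comparison. Let $1,\I,\J,\K$ denote the $\rat_p$-basis of the abstract algebra $\pqal$ provided by Definition~\ref{quatalgodd}, and let $\mathrm{I}_2,\I,\J,\K$ denote the matrices in $\mpqal$ introduced in~\eqref{maquat} (I tolerate the clash of notation, as no confusion will arise). I define $\Phi\colon\pqal\to\mpqal$ to be the unique $\rat_p$-linear map with $1\mapsto\mathrm{I}_2$, $\I\mapsto\I$, $\J\mapsto\J$, $\K\mapsto\K$. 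Both $\{1,\I,\J,\K\}$ and $\{\mathrm{I}_2,\I,\J,\K\}$ are $\rat_p$-bases of four-dimensional $\rat_p$-vector spaces, so $\Phi$ is automatically a well-defined $\rat_p$-linear isomorphism, and it sends the identity to the identity.

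First I would observe that the multiplication on $\pqal$ is completely determined by condition \texttt{(b)} of Definition~\ref{quatalgodd}: those rules fix $\I^2$, $\J^2$, $\K^2$, and, together with $\K=\J\I$, $\J\I=-\I\J$, $\K\I=-\I\K=v\J$, $\K\J=-\J\K=p\I$, they express every product of two distinct basis vectors among $\I,\J,\K$ as a $\rat_p$-multiple of a single basis vector; by bilinearity of the product, this pins down $\xi\eta$ for all $\xi,\eta\in\pqal$. The key step is then merely to compare this table with the one the matrices obey. By the computation recorded in~\eqref{commrule}, the matrices $\I,\J,\K\in\mpqal$ satisfy \emph{exactly} the same relations, so $\Phi(\xi\eta)=\Phi(\xi)\Phi(\eta)$ for $\xi,\eta\in\{1,\I,\J,\K\}$, and $\rat_p$-bilinearity of both products propagates this identity to all $\xi,\eta\in\pqal$. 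Hence $\Phi$ is a unital $\rat_p$-algebra isomorphism; since $\mpqal$ is a $\rat_p$-division algebra (as already noted in the paragraph preceding Remark~\ref{rem4.1}), so is $\pqal$, and $\Phi$ restricts to an isomorphism $\un(\pqal)\xrightarrow{\ \sim\ }\un(\mpqal)$ of the unit-quaternion subgroups.

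There is no genuine obstacle here: the argument is a routine check, the only points requiring a little care being the verification — essentially already performed in~\eqref{commrule} — that the concretely given matrices reproduce the abstract commutation table, and the bookkeeping that this table does determine the full multiplication, so that matching it suffices. Equivalently, one may package the proof as follows: $\mpqal$ is a four-dimensional $\rat_p$-algebra satisfying conditions \texttt{(a)} and \texttt{(b)} of Definition~\ref{quatalgodd}, hence \emph{a} model of $\pqal$; as the presentation determines the algebra up to isomorphism, any two models are isomorphic, and in particular $\pqal\cong\mpqal$. This formulation has the added merit of exhibiting, concretely, that an algebra as in Definition~\ref{quatalgodd} exists.
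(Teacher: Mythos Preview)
Your proof is correct and is essentially the same as the paper's: both write down the linear map sending $1,\I,\J,\K$ to $\mathrm{I}_2,\I,\J,\K$ (the paper calls it $\theta_p$ and displays it on a general element via the matrix form~\eqref{eq:invpquad}) and then observe that multiplicativity follows from the matching of the commutation tables. Your write-up is slightly more explicit about why checking the relations on the basis suffices, but the argument is identical in substance.
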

\begin{proof}
Let us consider the map 
\begin{equation}\label{eq:invpquad} \theta_p:\mathbb{H}_p\ni
\xi=q_0+\mathbf{i}q_1+\mathbf{j}q_2+\mathbf{k}q_3\mapsto \theta_p(\xi)\coloneqq \begin{pmatrix} q_0+\sqrt{v}q_1 & -q_2+\sqrt{v}q_3\\ p(q_2+\sqrt{v}q_3) & q_0-\sqrt{v}q_1 \end{pmatrix}\in\mathbf{H}_p.
\end{equation}
It is clear that $\theta_p$ is one-one, onto and linear, i.e., it is an isomorphism of vector spaces. Also, $\theta_p$ is a ring homomorphism, since $\theta_p(\xi\eta)=\theta_p(\xi)\theta_p(\eta)$ for every $\xi,\eta\in\mathbb{H}_p$. Hence, it defines an \emph{algebra isomorphism} from $\pqal$ to $\mpqal$.
\end{proof}
The algebra isomorphism $\theta_p$ identifies the basis vectors $1,\I,\J,\K$ of $\pqal$ with $\rm{I}_2$ and the matrices~\eqref{maquat} in the spanning set of $\mpqal$, respectively. This then also justifies our abuse of notation in using the same symbols for the basis elements of both $\pqal$ and $\mpqal$.

\begin{remark}\label{rem:center}
 Exploiting the algebra isomorphism $\pqais$, one can easily check that
 \begin{equation}
\rn(\xi)=\det(\pqais(\xi))=Q_{(4)}(q_0,q_1,q_2,q_3).
\end{equation}
Therefore, we can interchangeably use $\rn(\xi)$, $\det(\pqais(\xi))$ and $Q_{(4)}(q_0,q_1,q_2,q_3)$ to denote the reduced norm of $\xi=q_0+\I q_1+\J q_2+\K q_3$ in $\pqal$. 
\end{remark}
\begin{remark}
Using the isomorphism $\theta_p$, 
it is clear that the subgroups $\un(\pqal)$ and $\mathbb{H}_p^\times$ of $\mathbb{H}_p$ are isomorphic, respectively, to the subgroups $\un(\mpqal)$ and $\mpqal^\times$ of $\mpqal$ (cf.\ Remark~\ref{rem4.1}).
\end{remark}
\subsubsection{Case \texorpdfstring{$p=2$}{Lg}}
As for the $p>2$ case, we start by giving the following
\begin{definition}
Let $p=2$. By a \emph{$2$-adic quaternion algebra} we mean a four-dimensional vector space $\mathbb{H}_2\cong\rat_2\times\rat_2^3$ over $\rat_2$ which is a $\rat_2$-algebra, and satisfies the following conditions:
\begin{enumerate}[label=\tt{(\alph*)}]
\setcounter{enumi}{2}
\item There exist $\I,\J$ in $\mathbb{H}_2$ such that, denoting by $1$ the multiplicative identity in $\mathbb{H}_2$, the set $\{1,\I,\J,\K\}$ is a $\rat_2$-basis in $\mathbb{H}_2$.
\item The basis vectors $\I,\J,\K$ satisfy the following commutation rules:
\begin{equation}
\I^2=\J^2=\K^2=-1,\quad \I\J=-\J\I=\K,\quad \J\K=-\K\J=\I,\quad 
\K\I=-\I\K=\J.
\end{equation}
\end{enumerate}
\end{definition}
We can endow $\mathbb{H}_2$ with the involution~\eqref{conjquat}, thus turning it into an  involutive algebra. Then,  the inverse $\xi^{-1}$  of every non-null $2$-adic quaternion $\xi$ can be expressed as
\begin{equation}
\xi^{-1}=\frac{\overline{\xi}}{\rn(\xi)}.
\end{equation}
Moreover, we can single out the subgroup $\mathbb{H}_2^\times\leq\mathbb{H}_2$ 
of invertible $2$-adic quaternions by putting
\begin{equation}
\mathbb{H}_2^\times=\{\xi\in\mathbb{H}_2\mid\xi\neq 0\}\equiv\{\xi\in\mathbb{H}_2\mid\rn(\xi)\neq 0\},
\end{equation}
as well as the subgroup $\un(\mathbb{H}_2)\leq \mathbb{H}_2^\times$ of unit quaternions defined as
\begin{equation}
\un(\mathbb{H}_2)=\{\xi\in\mathbb{H}_2^\times\mid \rn(\xi)=1\}.
\end{equation}

We want now prove that $\mathbb{H}_2$ can be made in a one to one correspondence with a suitable matrix algebra. To this end, we recall that the definite quadratic form of $\mathbb{Q}_2^4$ is now given by~\eqref{eq:quadrform4}; moreover, since  $-1$ is not a square in $\rat_2$, we can consider the quadratic extension $\rat_2(\sqrt{-1})$ of $\rat_2$ by $-1$. Let $\mathsf{M}_2(\rat_2(\sqrt{-1}))$ denote the algebra of two-dimensional matrices on $\rat_2(\sqrt{-1})$, and let $\mathbf{H}_2\subset\mathsf{M}_2(\mathbb{Q}_2(\sqrt{-1}))$ be the subalgebra of matrices $M$ defined by 
\begin{equation}\label{eq:matrK2}
M\coloneqq\begin{pmatrix}
x_0+\sqrt{-1}x_1 & x_2+ \sqrt{-1}x_3\\
-x_2+\sqrt{-1}x_3 & x_0-\sqrt{-1}x_1
\end{pmatrix},
\quad x_i\in\mathbb{Q}_2,\ \forall i=0,\dots,3.
\end{equation}
By construction, we have that $\det(M)=Q_{(4)}(x_0,x_1,x_2,x_3)=x_0^2+x_1^2+x_2^2+x_3^2$. Hence, every non-zero $M\in\mathbf{H}_2$ is invertible, with inverse given by
\begin{equation}
M^{-1}=\frac{1}{\det(M)}
\begin{pmatrix}
x_0-\sqrt{-1}x_1 & -x_2- \sqrt{-1}x_3\\
x_2-\sqrt{-1}x_3 & x_0+\sqrt{-1}x_1
\end{pmatrix};
\end{equation}
i.e., $\mathbf{H}_2$ is an associative (unital) $\mathbb{Q}_2$-division algebra. Next, let us introduce the matrices $\I,\J,\K$ in $\mathsf{M}_2(\rat_2(\sqrt{-1}))$ defined by
\begin{equation}
\I\coloneqq
\begin{pmatrix}
\sqrt{-1} & 0\\
0 & -\sqrt{-1}
\end{pmatrix},
\quad\J\coloneqq
\begin{pmatrix}
0 & 1\\
-1 & 0
\end{pmatrix},
\quad\K\coloneqq
\begin{pmatrix}
0 & \sqrt{-1}\\
\sqrt{-1} & 0
\end{pmatrix}.
\end{equation}
Every $M$ in $\mathbf{H}_2$ can be expressed as $M=x_0+\I x_1+\J x_2+\K x_3$ (we have omitted the identity $\mathrm{I}_2$ multiplying $x_0$); that is, $\mathbf{H}_2$ can be realized as the $\rat_2$-linear span of $\{\mathrm{I}_2,\I,\J,\K\}$.
By further noting that $\I,\J,\K$ satisfy the commutation rules
\begin{equation}\label{comrel2}
\mathbf{i}^2=\mathbf{j}^2=\mathbf{k}^2=-\mathrm{I}_2,\quad \mathbf{i}\mathbf{j}=-\mathbf{j}\mathbf{i}=\mathbf{k},\quad \mathbf{j}\mathbf{k}=-\mathbf{k}\mathbf{j}=\mathbf{i},\quad 
\mathbf{k}\mathbf{i}=-\mathbf{i}\mathbf{k}=\mathbf{j},
\end{equation}
we also see that $\mathbf{H}_2$ is a \emph{non-commutative} $\rat_2$-algebra.
\begin{remark}
As in the $p>2$ case, we can introduce the group 
\begin{equation}
\mathbf{H}_2^\times\coloneqq\left\{M\in \mathbf{H}_2\mid M\neq 0_2\right\}\equiv\left\{M\in \mathbf{H}_2\mid \det(M)\neq 0\right\}
\end{equation}
of the invertible matrices in $\mathbf{H}_2$, as well as the subgroup 
\begin{equation}
\un(\mathbf{H}_2)=\left\{M\in \mathbf{H}_2^\times\mid \det(M)=1\right\}.
\end{equation}
\end{remark}
The following result is a straightforward adaptation of Proposition~\ref{prop.4.7q}
\begin{proposition}
Let $p=2$. Then, the $2$-adic quaternion algebra $\mathbb{H}_2$ is isomorphic with the subalgebra $\mathbf{H}_2$ of $\mathsf{M}_2(\rat_2(\sqrt{-1}))$.
\end{proposition}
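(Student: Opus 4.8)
The plan is to imitate verbatim the proof of Proposition~\ref{prop.4.7q}, exhibiting an explicit algebra isomorphism $\theta_2\colon\mathbb{H}_2\to\mathbf{H}_2$ and then checking that it is $\rat_2$-linear, bijective and multiplicative. Concretely, in analogy with~\eqref{eq:invpquad} I would set
\begin{equation}
\theta_2\colon\mathbb{H}_2\ni\xi=q_0+\I q_1+\J q_2+\K q_3\mapsto\begin{pmatrix}q_0+\sqrt{-1}\,q_1 & q_2+\sqrt{-1}\,q_3\\ -q_2+\sqrt{-1}\,q_3 & q_0-\sqrt{-1}\,q_1\end{pmatrix}\in\mathbf{H}_2,
\end{equation}
so that $\theta_2$ carries the abstract basis vectors $1,\I,\J,\K$ of $\mathbb{H}_2$ to the matrices $\mathrm{I}_2,\I,\J,\K$ appearing in~\eqref{comrel2} and spanning $\mathbf{H}_2$, and is then extended by $\rat_2$-linearity.

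First I would note that $\theta_2$ is manifestly $\rat_2$-linear and that it is a bijection: its image is, by definition, the set of all matrices of the shape~\eqref{eq:matrK2}, which is exactly $\mathbf{H}_2$, while injectivity is clear since the coordinates $q_0,\dots,q_3$ of an element of $\mathbf{H}_2$ are uniquely read off from the matrix (equivalently, $\{\mathrm{I}_2,\I,\J,\K\}$ is a $\rat_2$-basis of the four-dimensional space $\mathbf{H}_2$). Hence $\theta_2$ is an isomorphism of $\rat_2$-vector spaces. To promote it to an algebra isomorphism it remains to check that it is multiplicative; since $\theta_2$ is $\rat_2$-linear and $\theta_2(1)=\mathrm{I}_2$, by bilinearity of the products on both sides it suffices to verify that $\theta_2$ respects products of basis vectors, i.e.\ that the explicit matrices $\I,\J,\K$ obey the same commutation rules as the abstract generators $\I,\J,\K$ in the definition of $\mathbb{H}_2$.

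The only computational point — and the single place where anything could conceivably go wrong — is this last verification: by a direct $2\times2$ matrix multiplication, using $(\sqrt{-1})^2=-1$ in $\rat_2(\sqrt{-1})$, one checks $\I^2=\J^2=\K^2=-\mathrm{I}_2$ together with $\I\J=-\J\I=\K$, $\J\K=-\K\J=\I$, $\K\I=-\I\K=\J$. This is precisely the content of~\eqref{comrel2}, already recorded above, so no real obstacle remains, and $\theta_2$ is a bijective algebra homomorphism, hence an algebra isomorphism $\mathbb{H}_2\cong\mathbf{H}_2$. As a byproduct, exactly as in Remark~\ref{rem:center} one gets $\det(\theta_2(\xi))=q_0^2+q_1^2+q_2^2+q_3^2=\rn(\xi)$, so $\theta_2$ also identifies $\un(\mathbb{H}_2)$ with $\un(\mathbf{H}_2)$ and $\mathbb{H}_2^\times$ with $\mathbf{H}_2^\times$.
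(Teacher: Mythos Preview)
Your proof is correct and follows exactly the same approach as the paper: it defines the identical map $\theta_2$ and verifies it is an algebra isomorphism. In fact, you supply more detail than the paper, which simply writes down $\theta_2$ and asserts that it ``provides the desired algebra isomorphism.''
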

\begin{proof}
It suffices to consider the map
\begin{equation}\label{eq:padunitquat} \theta_2\colon \mathbb{H}_2\ni\xi=q_0+\mathbf{i}q_1+\mathbf{j}q_2+\mathbf{k}q_3\mapsto \theta_2(\xi)\coloneqq \begin{pmatrix} q_0+\sqrt{-1}q_1 & q_2+ \sqrt{-1}q_3\\ -q_2+\sqrt{-1}q_3 & q_0-\sqrt{-1}q_1 \end{pmatrix} \in\mathbf{H}_2,
\end{equation}
and observe that it provides the desired algebra isomorphism.
\end{proof}
\begin{remark}\label{remqudfrpnorm}
The quaternion algebra $\mathbb{H}_2$ shares some analogies with the standard real quaternion algebra $\qal$. In particular, the matrix representation of a $2$-adic quaternion is `essentially the same’ as in the standard case (just set $\sqrt{-1}\coloneqq i$ for the square root of the non quadratic element $-1\in \rat_2$). This is what one expects upon considering the `formal equivalence’ of the real four-dimensional quadratic form $Q_{\mathbb{R}}$ with the four-dimensional quadratic form $Q_{(4)}$ on $\rat_2$. However, the analogies between standard and $p$-adic quaternion algebras cannot be pursued too far. Indeed, a fundamental difference between $\mathbb{H}_p$, for every prime $p\geq2$, and $\mathbb{H}$ is the following. For the latter, we have that $Q_{\mathbb{R}}(q_0,q_1,q_2,q_3)=\|(q_0,q_1,q_2,q_3)\|_{\mathbb{R}^4}^2$, i.e., the definite quadratic form $Q_{\mathbb{R}}$ on $\mathbb{R}^4$ \emph{coincides} with the squared Euclidean norm of $\mathbb{R}^4$. (This also entails that the reduced norm of $\qal$ is equivalent to the (square of) the Euclidean norm of $\mathbb{R}^4$. See Remarks~\ref{redborm} and~\ref{rem:reducednorm}). 
On the other hand, in the $p$-adic setting, we only have the equivalence $Q_{(4)}\equiv\rn$, i.e., the reduced norm of $\pqal$ \emph{does not coincide} with the square of the $p$-adic norm of $\mathbb{Q}_p^4$.
\end{remark}

\subsection{Relation between \texorpdfstring{$p$}{Lg}-adic quaternions and special orthogonal groups}\label{sec:relquatrotp}
This subsection clarifies the relations between $p$-adic quaternions and the $p$-adic groups of rotations $\so(3,\mathbb{Q}_p)$ and $\so(4,\mathbb{Q}_p)$, for every $p\geq 2$.
We begin with $\so(3,\mathbb{Q}_p)$. Let us consider the action by \emph{conjugation} of the group $\pqal^\times$ of invertible quaternions on $\pqal$; namely, the map 
\begin{equation}\label{eq:actionk}
\pqal\ni \eta\mapsto\xi\eta\xi^{-1}\in\pqal
,
\end{equation}
where $\xi\in\pqal^\times$, and $p\geq 2$.
This map is an \emph{isometric linear transformation} of $\pqal$, 
since it preserves the reduced norm of every quaternion $\eta$ in $\pqal$:
\begin{align}
\rn(\xi\eta\xi^{-1})
&=\rn(\xi)\rn(\eta)\rn(\xi^{-1})\nonumber\\
&=\rn(\xi)\rn(\xi^{-1})\rn(\eta)\nonumber\\&
=\rn(\xi\xi^{-1})\rn(\eta)\nonumber\\
&=\rn(\eta);\label{eq:isoQdet}
\end{align}
equivalently, the action by conjugation of $\pqal^\times$ preserves the definite quadratic form of $\mathbb{Q}_p^4$.
Mo\-reover, the operation $\eta\mapsto\xi\eta\xi^{-1}$  leaves the centre $\mathbb{Q}_p$ of $\pqal$ pointwise fixed and, hence, also leaves the orthogonal subspace $\mathbb{Q}_p^3$ invariant.

(Note: Here, we refer to the orthogonality w.r.t.\ the inner product induced by the definite quadratic form of $\mathbb{Q}_p^4$, as defined in~\eqref{eq:sclprodQbil}).

Let us now consider the restriction of~\eqref{eq:actionk} to the subset $\pqal^0\coloneqq\{\nu\in\pqal\mid \nu=\I q_1+\J q_2+\K q_3\}$ of \emph{pure imaginary quaternions} in $\pqal$; that is, let us consider the map
\begin{equation}
\pqrot(\xi)\colon \pqal^0\ni\nu\mapsto\pqrot(\xi)\nu\coloneqq \xi\nu\xi^{-1},\quad\xi\in\pqal^\times.
\end{equation}
By noting that $\pqal^0\cong\rat_p^3$, and reminding that the action~\eqref{eq:actionk} is an isometric transformation of $\pqal$, we deduce that $\pqrot(\xi)$ preserves the restriction of $Q_{(4)}$ to $\mathbb{Q}_p^3$, i.e., the (equivalent) quadratic form $Q_+$ (see Remark~\ref{rem:equivrestricQ}). Hence, we deduce that $\pqrot(\xi)\in \mathrm{O}(3,\mathbb{Q}_p)\cong \{L\in\text{End}(\mathbb{Q}_p^3)\mid Q_+(L\mathbf{x})=Q_+(\mathbf{x}),\ \forall \mathbf{x}\in \mathbb{Q}_p^3\}$ represents an orthogonal transformation in $\rat_p^3$. Next, by observing that, for every $\xi,\rho\in\pqal$ and $\nu\in\pqal^0$, the equalities $\pqrot(\xi\rho)\nu=(\xi\rho)\nu(\xi\rho)^{-1}=\xi\big(\rho \nu\rho^{-1}\big)\xi^{-1}=\pqrot(\xi)\pqrot(\rho)\nu$ hold, we can conclude that $\pqrot:\pqal^\times\rightarrow \mathrm{O}(3,\mathbb{Q}_p)$ provides a group homomorphism. 

Let us now explicitly derive its action on a pure imaginary quaternion $\nu$ in $\pqal^0$. If $\xi=q_0+\I q_1+\J q_2+\K q_3\in\pqal^\times$ and $\nu=\I s_1+\J s_2+\K s_3\in\pqal^0$, the action of $\pqrot(\xi)$ on $\nu$ is given by
\begin{equation}
\xi \nu \xi^{-1}=(q_0+\I q_1+\J q_2+\K q_3)(\I s_1+\J s_2+\K s_3)(q_0-\I q_1-\J q_2-\K q_3)\frac{1}{\rn(\xi)},
\end{equation}
where we have used the fact that $\xi^{-1}=\overline{\xi}/\rn(\xi)$ (see Remark~\ref{redborm}).
Expanding the above products, one sees that the scalar part vanishes, as expected, and, by collecting the terms in $\I,\J$ and $\K$, 
we get
\begin{equation}\label{eq:matrkxi}
\pqrot(\xi)=\frac{1}{\rn(\xi)}
\begin{pmatrix}
q_0^2-vq_1^2-pq_2^2+pvq_3^2 & 2p(q_0q_3+q_1q_2) & -2p(q_0q_2+vq_1q_3)\\
2v(q_0q_3-q_1q_2) & q_0^2+vq_1^2+pq_2^2+pv q_3^2 & -2v(q_0q_1+pq_2q_3)\\
2(q_0q_2-vq_1q_3) & 2(-q_0q_1+pq_2q_3) & q_0^2+vq_1^2-pq_2^2-pvq_3^2
\end{pmatrix}
\end{equation}
for $p>2$, and
\begin{equation}\label{eq:kappaxi2}
\kappa_2(\xi)=\frac{1}{\rn(\xi)}
\begin{pmatrix}
q_0^2+q_1^2-q_2^2-q_3^2 & 2(q_1q_2-q_3q_0) & 2(q_2q_0+q_3q_1)\\
2(q_1q_2+q_0q_3) & q_0^2-q_1^2+q_2^2-q_3^2 & 2(q_2q_3-q_1q_0)\\
2(q_1q_3-q_2q_0) & 2(q_1q_0+q_2q_3) & q_0^2-q_1^2-q_2^2+q_3^2
\end{pmatrix}
\end{equation}
for $p=2$. A direct calculation shows that the transformations~\eqref{eq:matrkxi} and~\eqref{eq:kappaxi2} have unit determinant, i.e., 
\begin{equation}\label{eq:isoQdet2}
\det(\pqrot(\xi))=\frac{1}{\rn(\xi)^3}Q_{(4)}(q_0,q_1,q_2,q_3)^3=1.
\end{equation}
Therefore, we get to the conclusion that, for every prime $p\geq2$, and every $\xi\in\mathbb{H}_p^\times$, $\pqrot(\xi)\in \so(3,\mathbb{Q}_p)$ is a \emph{three-dimensional $p$-adic rotation}. 

The above discussion shows that $\kappa_p(\mathbb{H}_p^\times)\subseteq \so(3,\mathbb{Q}_p)$.
We are now going to prove that, actually, also the reverse inclusion $\so(3,\mathbb{Q}_p)\subseteq \kappa_p(\mathbb{H}_p^\times)$ holds. Indeed, let us first introduce the map 
$\tau_\rho\colon\pqal^0\rightarrow\pqal^0$ defined, for every $\rho\in\pqal^\times\cap \pqal^0$, as
\begin{equation}\label{eq:hyperreflec}
\tau_\rho(\nu)\coloneqq\nu-\frac{2b(\nu,\rho)}{\rn(\rho)}\rho,
\end{equation}
where $b$ denotes the bilinear form associated with the quadratic form $Q_+$ in $\rat_p^3$ (cf.\ Section~\ref{sec2}). It is easily shown that this map satisfies the conditions $\tau_\rho(\rho)=-\rho$ and $Q_+(\tau_\rho(\nu))=Q_+(\nu)$, for any $\nu\in\pqal^0\cong\rat_p^3$; namely, $\tau_\rho\in \mathrm{O}(3,\rat_p)\setminus \so(3,\rat_p)$ defines a hyperplane reflection (w.r.t.\ $\rho$) in $\pqal^0$. 
Moreover, by taking into account the defining properties of $b$, $\rn$ and $\pqal^0$, and recalling that, for a pure imaginary quaternion $\nu$ in $\pqal^0$, one has $\overline{\nu}=-\nu$, we see that the reflection~\eqref{eq:hyperreflec}
is explicitly given by
$\tau_\rho(\nu)=-\rho\nu\rho^{-1}$, i.e. $\tau_\rho\equiv-\kappa_p(\rho)$. On the other hand, by a classical theorem of Cartan and Dieudonné (cf.\ Theorem~4.5.7.\ in~\cite{voight2021}),  every special orthogonal transformation in $\so(3,\rat_p)$ can be written as the composition of two such reflections, i.e. $g=\tau_{\rho_1}\tau_{\rho_2}$, for all $g\in \so(3,\rat_p)$, and suitable $\rho_1,\rho_2\in\pqal^\times\cap \pqal^0$. Therefore, every $p$-adic rotation in $\so(3,\rat_p)$ is expressed by
\begin{equation}
g=\tau_{\rho_1}\tau_{\rho_2}=(-\tau_{\rho_1})(-\tau_{\rho_2})=\kappa_p(\rho_1)\kappa_p(\rho_2)=\kappa_p(\rho_1\rho_2)=\kappa_p(\xi),
\end{equation}
for $\xi\coloneqq\rho_1\rho_2\in\pqal^\times$. This then shows that
$\pqrot(\pqal^\times)=\so(3,\rat_p)$, i.e. that $\kappa_p$ is surjective.

The following result is now straightforward and crucial for our purposes.
\begin{theorem}\label{theo:SO3quat} The group $\so(3,\mathbb{Q}_p)$ is isomorphic to the quotient of the group $\mathbb{H}_p^\times$ of invertible quaternions, and the multiplicative group $\rat_p^\times$ of non-null elements in $\rat_p$, namely
\begin{equation}\label{eq:isoquatrot}
    \so(3,\mathbb{Q}_p)\cong \mathbb{H}_p^\times/\mathbb{Q}_p^\times.
\end{equation}
\end{theorem}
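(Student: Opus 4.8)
The plan is to obtain the isomorphism from the first isomorphism theorem applied to the group homomorphism $\pqrot\colon\pqal^\times\to\so(3,\rat_p)$ constructed above. The preceding discussion already establishes that $\pqrot$ is a well-defined group homomorphism and that it is \emph{surjective} (surjectivity being exactly the content of the Cartan--Dieudonn\'e argument: writing $g=\tau_{\rho_1}\tau_{\rho_2}$ one gets $\pqrot(\rho_1\rho_2)=g$). It therefore only remains to identify $\ker\pqrot$ with the copy of $\rat_p^\times$ sitting inside $\pqal^\times$, and then to conclude.

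For the kernel: $\xi\in\ker\pqrot$ means $\xi\nu\xi^{-1}=\nu$, i.e.\ $\xi\nu=\nu\xi$, for every pure imaginary quaternion $\nu\in\pqal^0$. Now every scalar quaternion trivially commutes with $\xi$, and $\pqal=\rat_p\cdot 1\oplus\pqal^0$ as $\rat_p$-vector spaces; hence, by $\rat_p$-bilinearity of the product, $\xi$ commutes with \emph{every} element of $\pqal$, i.e.\ $\xi$ lies in the centre of $\pqal$. By Remark~\ref{rem.5.2} the centre of $\pqal$ is exactly $\rat_p$ (identified with the scalar quaternions), so $\ker\pqrot=\rat_p\cap\pqal^\times$. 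Since a scalar quaternion $q_0\cdot 1$ is invertible precisely when $q_0\neq 0$ (its reduced norm being $q_0^2$), this kernel is exactly the copy of $\rat_p^\times$ inside $\pqal^\times$; conversely every non-null scalar is central and hence lies in $\ker\pqrot$, so the identification is an equality.

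Applying the first isomorphism theorem then yields $\so(3,\rat_p)\cong\pqal^\times/\rat_p^\times$; the subgroup $\rat_p^\times$ is automatically normal in $\pqal^\times$, being central. Finally, one should check that this is not merely an abstract isomorphism of groups but an isomorphism of topological (indeed $p$-adic Lie) groups: $\pqal^\times$ is an open subset of $\pqal\cong\rat_p^4$, and $\pqrot$ is given by the manifestly $\rat_p$-analytic formulas~\eqref{eq:matrkxi}--\eqref{eq:kappaxi2} (entries being ratios of polynomials with non-vanishing denominator $\rn(\xi)$), so $\pqrot$ is continuous and open onto its image; since $\rat_p^\times$ is closed in $\pqal^\times$, the quotient is Hausdorff and the induced continuous open bijection $\pqal^\times/\rat_p^\times\to\so(3,\rat_p)$ is a homeomorphism.

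I do not foresee a genuine obstacle here; the only step requiring a little care is the passage from ``$\xi$ commutes with all pure imaginary quaternions'' to ``$\xi$ is central in $\pqal$'', which relies on the decomposition $\pqal=\rat_p\oplus\pqal^0$ together with the explicit determination of the centre recorded in Remark~\ref{rem.5.2}.
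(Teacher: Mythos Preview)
Your proof is correct and follows essentially the same route as the paper: both use the surjectivity of $\pqrot$ (established beforehand via Cartan--Dieudonn\'e) and compute $\ker\pqrot$ by passing from ``commutes with all of $\pqal^0$'' to ``central in $\pqal$'' and then invoking Remark~\ref{rem.5.2}. The final paragraph on the topological isomorphism is not part of Theorem~\ref{theo:SO3quat} itself---the paper states and proves that separately as Proposition~\ref{prop:omeoSO3}, using a more careful standard-Borel-$G$-space argument rather than the bare assertion that $\pqrot$ is open.
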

\begin{proof}
To prove the group isomorphism~\eqref{eq:isoquatrot}, we can equivalently show that the following 
\begin{equation}\label{eq:shortexact}
1\rightarrow \mathbb{Q}_p^\times \hookrightarrow \mathbb{H}_p^\times\ {\stackrel{\text{\upshape $\kappa_p$}}{\longrightarrow}}\ \so(3,\mathbb{Q}_p)\rightarrow 1
\end{equation}
is a short exact sequence. We already know that $\kappa_p$ is surjective. Furthermore, the kernel of $\pqrot$,  $\ker(\pqrot)$, coincides with the image $\rat_p^\times$ of the embedding in the short sequence:
\begin{align}
\ker(\kappa_p)&=\{\xi\in\mathbb{H}_p^\times\mid \kappa_p(\xi)=\mathrm{I}\in \so(3,\mathbb{Q}_p)\}\nonumber\\
& =\{\xi\in\mathbb{H}_p^\times\mid \kappa_p(\xi)\nu=\nu\ \textup{for every}\ \nu\in\pqal^0\}\nonumber\\
&=\{\xi\in\mathbb{H}_p^\times\mid \xi\nu=\nu\xi\ \textup{for every}\ \nu\in\pqal^0\}\nonumber\\
&= \{\xi\in\mathbb{H}_p^\times\mid \xi\rho=\rho\xi\ \textup{for every}\ \rho\in\mathbb{H}_p\}
 = \mathbb{Q}_p^\times,
\end{align}
as $\rat_p^\times$ is the centre of $\pqal^\times$ (see Remark~\ref{rem.5.2}).
\end{proof}
The exact sequence~\eqref{eq:shortexact} is reminiscent, to some extent, of the exact sequence
\begin{equation}\label{realexac}
1\rightarrow \{\pm1\}\hookrightarrow \un(\mathbb{H})\cong \mathrm{SU}(2,\mathbb{C}) \twoheadrightarrow \so(3,\mathbb{R})\rightarrow 1,
\end{equation}
of the standard real case (cf.\ the isomorphism~\eqref{eq:so3real2quat} in Appendix~\ref{sec.A2}). Here, the main difference with the sequence~\eqref{eq:shortexact} is provided by the fact that the groups $\un(\qal)$ and $\mathbb{F}_2=\{\pm 1\}$ are replaced, in the $p$-adic setting, by the groups $\pqal^\times$ and $\rat_p^\times$ respectively. The reason for this discrepancy is related to the peculiar features of the base field $\rat_p$. Indeed, it is possible to prove~\cite{voight2021,Lam} that a sequence as in~\eqref{realexac} is exact if and only if $\rn(\qal^\times)\subset (\mathbb{F}^\times)^2$, namely, iff the reduced norm of every invertible quaternion is a quadratic element of the field. In the case where $\mathbb{F}=\mathbb{R}$, this is certainly true. Instead, in the $p$-adic setting, $\rn(\pqal^\times)\subset (\rat_p^\times)^2$ is \emph{never} true.
\bigskip

We want now to show that $\so(3,\rat_p)$ and $\pqal^\times/\rat_p^\times$ are homeomorphic. This fact will indeed  play a fundamental role in our construction of the lift of the Haar integrals on $\so(3,\rat_p)$ to $\pqal^\times$. 

Let us preliminary recall that every LCSC Hausdorff space is a \emph{standard Borel space} once endowed with its Borel $\sigma$-algebra. Accordingly, one calls a space $X$ a \emph{standard Borel $G$-space} if $X$ is a $G$-space (cf.\ Subsection~\ref{sec2n}), its Borel structure is standard, and if the action of $G$ on $X$ is a \emph{Borel map}. If $X$ is a standard Borel $G$-space, and $x\in X$ is a fixed point, let $G_x\coloneqq\{g\in G\mid g[x]=x\}$ be the \emph{stability subgroup} at $x$. One can show (cf.\ Corollary~$5.8$ in~\cite{varadarajan}) that $G_x$ is a closed subgroup of $G$. Moreover, 
denoting by $q\colon G\rightarrow G/G_x$ the projection homomorphism, the map
\begin{equation}\label{eq:76} 
G/G_x\ni q(g)\mapsto g[x]\in X 
\end{equation}
is a Borel isomorphism, and it is a homeomorphism whenever $X$ is LCSC (cf.\ Theorem~$5.11$ in~\cite{varadarajan}). Therefore, in such a case, $X\cong G/G_x$ are homeomorphic spaces in a natural way. We are now ready to prove the following result.
\begin{proposition}\label{prop:omeoSO3}
The group isomorphism~\eqref{eq:isoquatrot} between $\so(3,\rat_p)$ and $\pqal^\times/\rat_p^\times$ is also an isomorphism of topological groups.
\end{proposition}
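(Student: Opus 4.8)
The plan is to recognise the group isomorphism~\eqref{eq:isoquatrot} as a special case of the homeomorphism $X\cong G/G_x$ for a transitive standard Borel $G$-space recalled just above, with $G=\pqal^\times$, $X=\so(3,\rat_p)$, and the action induced by $\kappa_p$ together with left translation.

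First I would turn $\so(3,\rat_p)$ into a transitive $\pqal^\times$-space. Since the quadratic form $Q_{(4)}$ is definite, hence anisotropic, one has $\pqal^\times=\pqal\setminus\{0\}$, an open subset of $\pqal\cong\rat_p^4$; thus $\pqal^\times$ is a LCSC (in particular $\sigma$-compact) topological group, and $\so(3,\rat_p)$ is LCSC as well, being a compact subset of $\mathsf{M}_3(\rat_p)\cong\rat_p^9$. The entries of $\kappa_p(\xi)$ in~\eqref{eq:matrkxi}--\eqref{eq:kappaxi2} are polynomials in the coordinates of $\xi$ divided by $\rn(\xi)$, which does not vanish on $\pqal^\times$; hence $\kappa_p\colon\pqal^\times\to\so(3,\rat_p)$ is continuous (indeed $\rat_p$-analytic). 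Consequently
\begin{equation}
\pqal^\times\times\so(3,\rat_p)\ni(\xi,L)\longmapsto\xi[L]\coloneqq\kappa_p(\xi)L\in\so(3,\rat_p)
\end{equation}
is a continuous left action, and it is transitive because $\kappa_p$ is surjective (given $L,L'$, pick $\xi$ with $\kappa_p(\xi)=L'L^{-1}$). Being LCSC, $\so(3,\rat_p)$ with this action is a standard Borel $\pqal^\times$-space.

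Next I would compute the stability subgroup at the identity: $(\pqal^\times)_{\mathrm{I}}=\{\xi\in\pqal^\times\mid\kappa_p(\xi)=\mathrm{I}\}=\ker\kappa_p$, which equals $\rat_p^\times$ by the kernel computation carried out in the proof of Theorem~\ref{theo:SO3quat}. Applying the homeomorphism~\eqref{eq:76} with $x=\mathrm{I}$ then yields that
\begin{equation}
\pqal^\times/\rat_p^\times\ni q(\xi)\longmapsto\xi[\mathrm{I}]=\kappa_p(\xi)\in\so(3,\rat_p)
\end{equation}
is a homeomorphism. Since this map is precisely the isomorphism $\overline{\kappa_p}$ induced by $\kappa_p$ on the quotient --- i.e.\ the isomorphism~\eqref{eq:isoquatrot} --- it follows that~\eqref{eq:isoquatrot} is simultaneously a group isomorphism and a homeomorphism, hence an isomorphism of topological groups.

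The only genuinely delicate point is checking that the constructed action satisfies the hypotheses of the cited theorem (LCSC-ness of $\pqal^\times$ and of $\so(3,\rat_p)$, standardness of their Borel structures, transitivity); the rest is bookkeeping. Alternatively, one could avoid the $G$-space language altogether and invoke the open mapping theorem for locally compact groups: $\kappa_p$ is a continuous surjective homomorphism from the $\sigma$-compact LC group $\pqal^\times$ onto the LC group $\so(3,\rat_p)$, hence open, so the induced continuous bijection $\pqal^\times/\rat_p^\times\to\so(3,\rat_p)$ is open, and therefore a homeomorphism.
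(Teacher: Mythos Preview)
Your proof is correct and follows essentially the same route as the paper: both turn $\so(3,\rat_p)$ into a transitive standard Borel $\pqal^\times$-space via the action $(\xi,R)\mapsto\kappa_p(\xi)R$, identify the stabilizer at the identity with $\ker\kappa_p=\rat_p^\times$, and then invoke the homeomorphism~\eqref{eq:76} to conclude. Your closing remark on the open mapping theorem is a valid shortcut not used in the paper's main proof; the paper instead offers, in Appendix~\ref{alternhomeo}, a different alternative argument based on the compactness of the sets $S(\epsilon)=\{\xi\in\pqal^\times\mid\rn(\xi)=\epsilon\}$.
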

\begin{proof} The proof we give here is based on general measure-theoretical arguments on $G$-spaces; for a more specific proof, involving the reduced norm of $p$-adic quaternions, see Appendix~\ref{alternhomeo}.

As a vector space, $\pqal\cong \rat_p\times\rat_p^3=\rat_p^4$, and we can provide $\pqal$ with the product topology (on $\rat_p$, we consider the natural (ultra-)metric topology generated by the $p$-adic absolute value). Similarly, $\mathbb{H}_p^\times$ and $\mathbb{Q}_p^4-\{0\}$ are homeomorphic topological spaces whenever they are equipped with the induced topology as subspaces of $\mathbb{H}_p$ and $\mathbb{Q}_p^4$ respectively. The continuity of the group operations (multiplication and inverse) of $\mathbb{H}_p^\times$ is inherited from the continuity of the addition, inner multiplication (according to the commutation relations among the basis elements) and multiplication by scalars of $\mathbb{Q}_p^4-\{0\}$; therefore $\mathbb{H}_p^\times$ is a topological group. Also, $\mathbb{H}_p^\times$ is LCSC, as $\mathbb{Q}_p^4-\{0\}$ is so (being an open 
subspace of the locally compact Hausdorff space $\mathbb{Q}_p^4$). We have already observed that $\so(3,\rat_p)$ is a compact second countable Hausdorff group, once endowed with the topology introduced in Section~\ref{sec2}. Hence, $\so(3,\rat_p)$, supplied with its Borel $\sigma$-algebra is a standard Borel space. We want now to show that $\so(3,\rat_p)$ is a standard Borel $\mathbb{H}_p^\times$-space. To this end, we have to find a Borel action of $\mathbb{H}_p^\times$ on $\so(3,\mathbb{Q}_p)$. 

Let us introduce the map from $\pqal^\times\times \so(3,\rat_p)$ to $\so(3,\rat_p)$ defined as
\begin{equation}\label{eq:azione}
\pqal^\times\times \so(3,\rat_p)  \ni  (\xi,R)\mapsto \xi[R]\coloneqq \kappa_p(\xi)R\in \so(3,\rat_p).
\end{equation}
It is easily shown that the map~\eqref{eq:azione} provides a continuous left action of $\pqal^\times$ on $\so(3,\rat_p)$. Indeed, continuity follows from 
that of $\kappa_p$ and of the matrix multiplication in $\so(3,\rat_p)$. Next, we have that $\xi[\nu[R]]=\kappa_p(\xi)\big(\kappa_p(\nu)R\big)=\big(\kappa_p(\xi)\kappa_p(\nu)\big)R = \kappa_p(\xi\nu)R=(\xi\nu)[R]$, for every $\xi,\nu\in\mathbb{H}_p^\times$, $R\in \so(3,\mathbb{Q}_p)$. Moreover $R\mapsto \xi[R]$ is a homeomorphism for every fixed $\xi\in\mathbb{H}_p^\times$, as follows by observing that  $R\mapsto \xi[R]$ is surjective (since the multiplication in $\so(3,\mathbb{Q}_p)$ by the matrix $\kappa_p(\xi)$ is so), and injective (since if $\kappa_p(\xi)R_1=\kappa_p(\xi)R_2$, then $R_1=R_2$ by the invertibility of $\kappa_p(\xi)\in \so(3,\mathbb{Q}_p)$), and both the map and its inverse are continuous (as they are just matrix multiplications and inverses). This shows that~\eqref{eq:azione} is a continuous (actually, Borel) left action of $\mathbb{H}_p^\times$ on $\so(3,\mathbb{Q}_p)$. This action is also transitive, since it exists an element $R\in \so(3,\mathbb{Q}_p)$ such that its orbit $\{\kappa_p(\xi)R\mid \xi\in\mathbb{H}_p^\times\}$
is the whole space $\so(3,\rat_p)$ (it is enough to consider $R=\rm{I}$, and the surjectivity of $\kappa_p$). Therefore, we can argue that $\so(3,\rat_p)$ is a standard Borel (transitive) $\pqal^\times$-space. On the other hand, the stability subgroup at every $R\in \so(3,\mathbb{Q}_p)$ is given by $\{\xi\in\mathbb{H}_p^\times\mid \xi[R]=R\}=\{\xi\in\mathbb{H}_p^\times\mid \kappa_p(\xi)R=R\}=\{\xi\in\mathbb{H}_p^\times\mid \kappa_p(\xi)=\mathrm{I}\}=\ker(\kappa_p)=\rat_p^\times$; hence, 
we can conclude that $\so(3,\rat_p)$ and $\pqal^\times/\rat_p^\times$ are homeomorphic spaces. In particular, the homeomorphism is as in~\eqref{eq:76} with, for instance, the stability subgroup at $\mathrm{I}\in \so(3,\mathbb{Q}_p)$. Explicitly, the homeomorphism is 
$\pqal^\times/\rat_p^\times\ni \xi\mathbb{Q}_p^\times\mapsto \kappa_p(\xi)\in\so(3,\rat_p)$. This is, indeed, the same map providing the isomorphism in Theorem~\ref{theo:SO3quat}.
\end{proof}

\bigskip

Proposition~\ref{prop:omeoSO3} concludes our discussion on the relations between $p$-adic quaternions and rotations in $\so(3,\rat_p)$.
Now, we carry out a similar analysis to clarify the relation between quaternions and the elements in $\so(4,\mathbb{Q}_p)$. To begin with, let us introduce the left action of $\mathbb{H}_p^\times\times \mathbb{H}_p^\times$ on $\mathbb{H}_p$ defined by
\begin{equation}\label{eq:2leftactio}
\mathbb{H}_p\ni\eta\mapsto \xi\eta\varrho^{-1}\in \mathbb{H}_p,\quad (\xi,\varrho)\in \mathbb{H}_p^\times\times\mathbb{H}_p^\times.
\end{equation}
This action is by \emph{similarities}, as follows by noting that
\begin{align}  \rn(\xi\eta\varrho^{-1})&=\rn(\xi)\rn(\eta)\rn(\varrho^{-1})=\frac{\rn(\xi)}{\rn(\varrho)}\rn(\eta).
\end{align}
In particular, the action is by \emph{isometries} whenever $\rn(\xi)=\rn(\varrho)$. Hence, let us introduce the group \begin{equation}\label{eq:subgrpquatpairs}
\mathbb{P}(\pqal^\times)\coloneqq\{(\xi,\varrho)\in \mathbb{H}_p^\times\times \mathbb{H}_p^\times\mid \rn(\xi)=\rn(\varrho)\}.
\end{equation}
The restriction of the action~\eqref{eq:2leftactio} to a pair $(\xi,\varrho)\in\mathbb{P}(\mathbb{H}_p^\times)\leq\pqal^\times\times\pqal^\times$ is denoted by $\kappa_p'(\xi,\varrho)$; namely, we set
\begin{equation}
\kappa_p'(\xi,\varrho):\mathbb{H}_p\ni\eta\mapsto \kappa_p'(\xi,\varrho)\eta\coloneqq \xi\eta\varrho^{-1}\in \mathbb{H}_p,\quad (\xi,\varrho)\in\mathbb{P}(\mathbb{H}_p^\times).
\end{equation}
Since this action is by isometries, and $\mathbb{H}_p\cong \mathbb{Q}_p^4$, then $\kappa_p'(\xi,\varrho)\in \mathrm{O}(4,\mathbb{Q}_p)\cong\{L\in\text{End}(\mathbb{Q}_p^4)\mid Q_{(4)}(Lx)=Q_{(4)}(x),\ \text{for every}\ x\in \mathbb{Q}_p^4\}$. It can be easily checked that, as done for the maps $\kappa_p(\xi)$ in the three-dimensional case, $\kappa_p'(\xi,\varrho)\in \so(4,\mathbb{Q}_p)$, for every $(\xi,\varrho)\in\mathbb{P}(\mathbb{H}_p^\times)$. Also, $\kappa_p'\colon\mathbb{P}(\mathbb{H}_p^\times)\rightarrow \so(4,\mathbb{Q}_p)$ is a group homomorphism, and we get to the following result:
\begin{theorem}\label{theo:SO4quat}
The group $\so(4,\mathbb{Q}_p)$ is isomorphic to the quotient of the group $\mathbb{P}(\mathbb{H}_p^\times)$ and the multiplicative group $\mathbb{Q}_p^\times$ of non-null $p$-adic numbers:
\begin{equation}\label{so4iso}
\so(4,\mathbb{Q}_p)\cong \mathbb{P}(\pqal^\times)/\rat_p^\times.
\end{equation}
\end{theorem}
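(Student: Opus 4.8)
The plan is to show that
\begin{equation*}
1\rightarrow \rat_p^\times \hookrightarrow \mathbb{P}(\pqal^\times)\ \xrightarrow{\ \kappa_p'\ }\ \so(4,\rat_p)\rightarrow 1
\end{equation*}
is a short exact sequence, where the first arrow is the diagonal embedding $\rat_p^\times\ni\lambda\mapsto(\lambda,\lambda)$ (which indeed lands in $\mathbb{P}(\pqal^\times)$, since trivially $\rn(\lambda)=\rn(\lambda)$, and is a group monomorphism). The isomorphism~\eqref{so4iso} will then follow from the first isomorphism theorem, exactly as Theorem~\ref{theo:SO3quat} followed from~\eqref{eq:shortexact}. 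Since $\kappa_p'$ has already been recognized as a well-defined group homomorphism with image contained in $\so(4,\rat_p)$, only two things remain to be established: the identification of $\ker(\kappa_p')$ and the surjectivity of $\kappa_p'$.

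The kernel is immediate. If $\kappa_p'(\xi,\varrho)=\mathrm{I}$, then $\xi\eta\varrho^{-1}=\eta$, i.e.\ $\xi\eta=\eta\varrho$, for every $\eta\in\pqal$; taking $\eta=1$ forces $\xi=\varrho$, and then $\xi$ commutes with every quaternion, hence belongs to the centre of $\pqal$, which is $\rat_p$ (Remark~\ref{rem.5.2}); being invertible, $\xi\in\rat_p^\times$. Conversely, any diagonal pair $(\lambda,\lambda)$ with $\lambda\in\rat_p^\times$ acts as the identity. Therefore $\ker(\kappa_p')=\{(\lambda,\lambda)\mid\lambda\in\rat_p^\times\}\cong\rat_p^\times$, matching the embedded copy of $\rat_p^\times$.

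For surjectivity I would reduce to the three-dimensional case already settled in Theorem~\ref{theo:SO3quat}. Given $g\in\so(4,\rat_p)$, set $a\coloneqq g(1)\in\pqal$, viewing $\pqal\cong\rat_p^4$ with the quadratic form $Q_{(4)}=\rn$. Since $g$ preserves $Q_{(4)}$ and $\rn(1)=1$, one gets $\rn(a)=1$, so $a\in\un(\pqal)$ and in particular $(a,1)\in\mathbb{P}(\pqal^\times)$; consequently $\kappa_p'(a,1)\in\so(4,\rat_p)$ (it is the "left translation by $a$"). Then $\tilde g\coloneqq\kappa_p'(a,1)^{-1}g=\kappa_p'(a^{-1},1)\,g\in\so(4,\rat_p)$ satisfies $\tilde g(1)=a^{-1}g(1)=1$. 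As $\tilde g$ is $Q_{(4)}$-orthogonal and fixes $1$, it leaves the orthogonal complement $\pqal^0\cong\rat_p^3$ of $\rat_p\,1$ invariant; writing $\tilde g$ in block form along $\pqal=\rat_p\,1\oplus\pqal^0$ gives $\det(\tilde g|_{\pqal^0})=\det(\tilde g)=1$, so $\tilde g|_{\pqal^0}\in\so(3,\rat_p)$ — reading $Q_{(4)}|_{\pqal^0}$ as $Q_+$ via Remark~\ref{rem:equivrestricQ}. By the surjectivity of $\kappa_p$ (Theorem~\ref{theo:SO3quat}), there is $b\in\pqal^\times$ with $\tilde g(\nu)=\kappa_p(b)\nu=b\nu b^{-1}$ for all $\nu\in\pqal^0$; since $\tilde g$ also fixes $1=b1b^{-1}$ and $\pqal=\rat_p\,1\oplus\pqal^0$, linearity gives $\tilde g(\eta)=b\eta b^{-1}$ for all $\eta\in\pqal$, i.e.\ $\tilde g=\kappa_p'(b,b)$. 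Hence $g=\kappa_p'(a,1)\,\kappa_p'(b,b)=\kappa_p'(ab,b)$, and $(ab,b)\in\mathbb{P}(\pqal^\times)$ because $\rn(ab)=\rn(a)\rn(b)=\rn(b)$. This proves $\kappa_p'(\mathbb{P}(\pqal^\times))=\so(4,\rat_p)$, and the argument is uniform in $p\geq2$.

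Setting aside the routine bookkeeping for the exact sequence, the genuinely substantive step will be the surjectivity argument, and within it the delicate point is ensuring that the auxiliary map $\kappa_p'(a,1)$ of left translation by the \emph{unit} quaternion $a$ actually lies in $\so(4,\rat_p)$ and not merely in $\mathrm{O}(4,\rat_p)$ — this is precisely where $(a,1)\in\mathbb{P}(\pqal^\times)$ together with the already-established inclusion $\kappa_p'(\mathbb{P}(\pqal^\times))\subseteq\so(4,\rat_p)$ is used, so that $\tilde g$ is legitimately an element of $\so(4,\rat_p)$ and its block decomposition has determinant one. The whole scheme mirrors the classical real construction $\un(\qal)\times\un(\qal)\twoheadrightarrow\so(4,\mathbb{R})$, with $\un(\qal)$ replaced by $\pqal^\times$ and the norm-matching condition built into $\mathbb{P}(\pqal^\times)$; correspondingly the real kernel $\{\pm1\}$ is enlarged to the full diagonal $\rat_p^\times$, reflecting once more that $\rn(\pqal^\times)$ is never contained in $(\rat_p^\times)^2$.
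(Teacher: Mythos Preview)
Your proof is correct, and the exact-sequence framework plus the kernel computation match the paper's verbatim. The surjectivity argument, however, follows a genuinely different route: the paper simply invokes the Cartan--Dieudonn\'e theorem once more (as it did in dimension three), whereas you reduce the four-dimensional surjectivity to the already-established three-dimensional one via the stabilizer of $1\in\pqal$. Concretely, you first peel off a left translation $\kappa_p'(a,1)$ by the unit quaternion $a=g(1)$ to land in the stabilizer of $1$, identify that stabilizer with $\so(3,\rat_p)$ through the block decomposition, and then apply Theorem~\ref{theo:SO3quat}. Your approach is more self-contained---it does not require re-running the reflection argument in dimension four---and makes the inductive relationship between the two cases transparent; the paper's approach is quicker to state but outsources the work to the cited reference.
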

\begin{proof}
Since $\mathrm{char}(\rat_p)\neq 2$, the isomorphism~\eqref{so4iso} follows from Proposition 4.5.17.\ in~\cite{voight2021}.  In particular, to prove~\eqref{so4iso}, it suffices to show that the following 
\begin{equation}\label{eq:shortexact4}
1\rightarrow\rat_p^\times\hookrightarrow\mathbb{P}(\pqal^\times)\ {\stackrel{\text{\upshape $\kappa_p'$}}{\longrightarrow}}\ \so(4,\rat_p)\rightarrow 1
\end{equation}
is a short exact sequence. This is done similarly to the proof of Theorem~\ref{theo:SO3quat}: Surjectivity of the map $\kappa_p'\colon\mathbb{P}(\pqal^\times)\rightarrow \so(4,\rat_p)$ again 
 follows by the Cartan-Dieudonné Theorem (cf.\ Theorem~4.5.7.\ in~\cite{voight2021}), and its kernel is
 \begin{align}
\ker(\kappa_p')&=\{(\xi,\varrho)\in\mathbb{P}(\mathbb{H}_p^\times)\mid \kappa_p'(\xi,\varrho)=\mathrm{I}\in \so(4,\mathbb{Q}_p)\}\nonumber\\
& =\{(\xi,\varrho)\in\mathbb{P}(\mathbb{H}_p^\times)\mid \kappa_p'(\xi,\varrho)\eta=\eta\ \textup{for every}\ \eta\in\pqal\}\nonumber\\
&=\{(\xi,\varrho)\in\mathbb{P}(\mathbb{H}_p^\times)\mid \xi\eta=\eta\varrho\ \textup{for every}\ \eta\in\pqal\}.
\end{align}
In particular, the last condition must hold for $\eta=1\in\mathbb{H}_p$, providing the necessary condition $\xi=\varrho$; hence,
\begin{align}
\ker(\kappa_p')&=\{(\xi,\xi)\in\mathbb{P}(\mathbb{H}_p^\times)\mid \xi\eta=\eta\xi\ \textup{for every}\ \eta\in\pqal\}\nonumber\\
& \cong \{\xi\in\mathbb{H}_p^\times\mid \xi\eta=\eta\xi\ \textup{for every}\ \eta\in\pqal\} = \mathbb{Q}_p^\times.
\end{align}
That is, the kernel of $\kappa_p'$ is the diagonally embedded $\mathbb{Q}_p^\times\cong \mathbb{Q}_p^\times(1,1)$ in $\mathbb{P}(\mathbb{H}_p^\times)$.
\end{proof}
\begin{remark}\label{rem:standquatcov}
The short exact sequences~\eqref{eq:shortexact4} is the $p$-adic counterpart of the following sequence for the standard real setting:
\begin{equation}\label{realseq}
\quad 1\rightarrow\{\pm1\}\hookrightarrow \un(\mathbb{H})\times \un(\mathbb{H})\twoheadrightarrow \so(4,\mathbb{R})\rightarrow 1,
\end{equation}
where $\un(\qal)$ denotes the group of unit quaternions (see~\eqref{eq:uintquat} in Appendix~\ref{sec.1}). This then entails the well known group isomorphism~\eqref{eq:so4real2quat}. The main difference with the $p$-adic case is provided by the fact that $\un(\qal)\times \un(\qal)$ and $\mathbb{F}_2=\{\pm 1\}$ are now replaced by $\mathbb{P}(\pqal^\times)$ and $\rat_p^\times$ respectively. Once again, this discrepancy is a consequence of the fact that in the $p$-adic setting, $\rn(\pqal^\times)\not\subset (\rat_p^\times)^2$.
\end{remark}

Similarly to what we did for $\so(3,\mathbb{Q}_p)$, we are interested in proving that $\so(4,\mathbb{Q}_p)$ and $\mathbb{P}(\mathbb{H}_p^\times)/\mathbb{Q}_p^\times$ are homeomorphic; this will allow us to consider the lift of the Haar integrals on $\so(4,\mathbb{Q}_p)$ to that on $\mathbb{P}(\mathbb{H}_p^\times)$.
\begin{proposition}\label{prop:omeoSO4}
The group isomorphism~\eqref{so4iso} between $\so(4,\rat_p)$ and $\mathbb{P}(\mathbb{H}_p^\times)/\mathbb{Q}_p^\times$ is also an isomorphism of topological groups.
\end{proposition}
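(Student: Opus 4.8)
The plan is to mimic, essentially verbatim, the measure-theoretic argument used in the proof of Proposition~\ref{prop:omeoSO3}, with the group $\pqal^\times$ acting on $\so(3,\rat_p)$ replaced by $\mathbb{P}(\pqal^\times)$ acting on $\so(4,\rat_p)$, and then to invoke the general result on standard Borel $G$-spaces (Theorem~$5.11$ in~\cite{varadarajan}) to upgrade the group isomorphism~\eqref{so4iso} to a homeomorphism.

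First I would verify that $\mathbb{P}(\pqal^\times)$ is a LCSC topological group. As a vector space $\pqal\cong\rat_p^4$ carries the product topology, the inner multiplication is polynomial in the coordinates, and hence $\pqal^\times$ (identified with $\rat_p^4\setminus\{0\}$, an open subspace of the locally compact Hausdorff space $\rat_p^4$) is a LCSC topological group; so is $\pqal^\times\times\pqal^\times$. Since the reduced norm $\rn$ is a polynomial map, the condition $\rn(\xi)=\rn(\varrho)$ is closed, so $\mathbb{P}(\pqal^\times)$ is a \emph{closed} subgroup of $\pqal^\times\times\pqal^\times$ and is therefore LCSC. I would single this point out at the outset precisely because it is the one hypothesis of the cited $G$-space theorem that requires a (brief) argument.

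Next I would exhibit $\so(4,\rat_p)$ as a standard Borel transitive $\mathbb{P}(\pqal^\times)$-space. Recall that $\so(4,\rat_p)$ is compact, second countable and Hausdorff (Subsection~\ref{subsec.3.2n}), hence a standard Borel space. Consider the map
\begin{equation}
\mathbb{P}(\pqal^\times)\times\so(4,\rat_p)\ni\big((\xi,\varrho),R\big)\mapsto (\xi,\varrho)[R]\coloneqq \kappa_p'(\xi,\varrho)R\in\so(4,\rat_p).
\end{equation}
Continuity of this map follows from the continuity of $\kappa_p'$ (its $4\times4$ matrix has entries polynomial in the components of $\xi,\varrho$ divided by $\rn(\varrho)\neq0$) together with continuity of matrix multiplication in $\so(4,\rat_p)$; the identity $(\xi,\varrho)\big[(\xi',\varrho')[R]\big]=(\xi\xi',\varrho\varrho')[R]$ is immediate since $\kappa_p'$ is a group homomorphism; and for fixed $(\xi,\varrho)$ the map $R\mapsto(\xi,\varrho)[R]$ is a homeomorphism of $\so(4,\rat_p)$, being left multiplication by the invertible matrix $\kappa_p'(\xi,\varrho)$. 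Transitivity holds because the orbit of $\mathrm{I}\in\so(4,\rat_p)$ equals $\{\kappa_p'(\xi,\varrho)\mid(\xi,\varrho)\in\mathbb{P}(\pqal^\times)\}=\so(4,\rat_p)$ by the surjectivity of $\kappa_p'$ proved in Theorem~\ref{theo:SO4quat}. Thus $\so(4,\rat_p)$ is a standard Borel (indeed continuous) transitive $\mathbb{P}(\pqal^\times)$-space.

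Finally, the stability subgroup at $\mathrm{I}\in\so(4,\rat_p)$ is exactly $\ker(\kappa_p')$, which was computed in the proof of Theorem~\ref{theo:SO4quat} to be the diagonally embedded $\rat_p^\times\cong\rat_p^\times(1,1)$. Since $\so(4,\rat_p)$ is LCSC, Theorem~$5.11$ in~\cite{varadarajan} applies and the canonical map~\eqref{eq:76}, namely
\begin{equation}
\mathbb{P}(\pqal^\times)/\rat_p^\times\ni(\xi,\varrho)\rat_p^\times\mapsto\kappa_p'(\xi,\varrho)\in\so(4,\rat_p),
\end{equation}
is a homeomorphism. As this is precisely the map underlying the group isomorphism~\eqref{so4iso}, that isomorphism is an isomorphism of topological groups. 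The only genuinely delicate point is checking the hypotheses of the $G$-space theorem — above all the local compactness and second countability of $\mathbb{P}(\pqal^\times)$, handled by the closedness observation in the first step; everything else is a routine transcription of the $\so(3,\rat_p)$ argument.
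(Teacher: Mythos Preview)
Your proposal is correct and follows essentially the same route as the paper's own proof: both verify that $\mathbb{P}(\pqal^\times)$ is a LCSC group, turn $\so(4,\rat_p)$ into a transitive standard Borel $\mathbb{P}(\pqal^\times)$-space via the action $((\xi,\varrho),R)\mapsto\kappa_p'(\xi,\varrho)R$, identify the stabilizer with $\ker(\kappa_p')\cong\rat_p^\times$, and then invoke Theorem~5.11 of~\cite{varadarajan}. Your justification of local compactness (closed in the locally compact group $\pqal^\times\times\pqal^\times$) is in fact slightly cleaner than the paper's phrasing.
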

\begin{proof}
Consider the group $\mathbb{P}(\mathbb{H}_p^\times)$ with the subspace topology induced by $\mathbb{Q}_p^8$ (the latter, being endowed with $p$-adic topology). The group operations are continuous, hence $\mathbb{P}(\mathbb{H}_p^\times)$ is a topological group. It is also Hausdorff and second countable, being a subspace of the Hausdorff second countable space $\mathbb{Q}_p^8$. In addition, $\mathbb{P}(\mathbb{H}_p^\times)$ is a closed subspace of the locally compact Hausdorff space $\mathbb{Q}_p^8$, hence it is locally compact as well. We are now going to show that, actually, $\so(4,\mathbb{Q}_p)$ is a standard Borel $\mathbb{P}(\mathbb{H}_p^\times)$-space. The group $\so(4,\mathbb{Q}_p)$ with $p$-adic topology is compact, second countable and Hausdorff. Thus, $\so(4,\mathbb{Q}_p)$ along with its Borel $\sigma$-algebra is a standard Borel space. Let us introduce the map
\begin{equation}\label{eq:aazionee}
\mathbb{P}(\mathbb{H}_p^\times)\times \so(4,\mathbb{Q}_p) \ni \big((\xi,\rho),\,R\big)\mapsto (\xi,\rho)[R]\coloneqq  \kappa_p'(\xi,\rho) R \in \so(4,\mathbb{Q}_p).
\end{equation}
This map is continuous, and such that $(\xi,\rho)\big[(\nu,\eta)[R]\big] =\kappa_p'(\xi,\rho)\big(\kappa_p'(\nu,\eta)R\big)=\kappa_p'\big((\xi,\rho)(\nu,\eta)\big)R= \big((\xi,\rho)(\nu,\eta)\big)[R]$, for every $(\xi,\rho),(\nu,\eta)\in \mathbb{P}(\mathbb{H}_p^\times)$, $R\in \so(4,\mathbb{Q}_p)$ (here, we have used the fact $\kappa_p'$ is a homomorphism). Moreover, the map $R\mapsto (\xi,\rho)[R]$ is a homeomorphism, for every fixed $(\xi,\rho)\in \mathbb{P}(\mathbb{H}_p^\times)$. Therefore, the map~\eqref{eq:aazionee} is an action of $\mathbb{P}(\mathbb{H}_p^\times)$ on $\so(4,\mathbb{Q}_p)$, which is transitive by surjectivity of $\kappa_p'$. Actually,  it is also a Borel map and, hence, $\so(4,\mathbb{Q}_p)$ is a standard Borel $\mathbb{P}(\mathbb{H}_p^\times)$-space. Now, we observe that the stability subgroup at any $R\in \so(4,\mathbb{Q}_p)$ is $\{(\xi,\rho)\in \mathbb{P}(\mathbb{H}_p^\times)\mid (\xi,\rho)[R]=R\}=\{(\xi,\rho)\in \mathbb{P}(\mathbb{H}_p^\times)\mid \kappa_p'(\xi,\rho)R=R\}=\{(\xi,\rho)\in \mathbb{P}(\mathbb{H}_p^\times)\mid \kappa_p'(\xi,\rho)=\mathrm{I}\}=\ker(\kappa_p')=\mathbb{Q}_p^\times$. Thus, we can argue that $\mathbb{P}(\mathbb{H}_p^\times)/\mathbb{Q}_p^\times$ and $\so(4,\mathbb{Q}_p)$ are homeomorphic, the homeomorphism being provided, once again, by~\eqref{eq:76}. In particular, if we consider the stability subgroup, for instance, at $\mathrm{I}\in \so(4,\mathbb{Q}_p)$, the homeomorphism is explicitly given by $\mathbb{P}(\mathbb{H}_p^\times)/\mathbb{Q}_p^\times\ni (\xi,\rho)\mathbb{Q}_p^\times \mapsto \kappa_p'(\xi,\rho)\in \so(4,\rat_p)$, and coincides with the isomorphism of Theorem~\ref{theo:SO4quat}.
\end{proof}

\subsection{The Haar integral on \texorpdfstring{$\so(3,\mathbb{Q}_p)$}{Lg}}\label{subsec:haarso3p}
The construction of the Haar integral on $\so(3,\rat_p)$ can be conveniently carried out by exploiting the conclusions of Theorem~\ref{th.2.17} and Proposition~\ref{prop:omeoSO3}. In particular, this will bring us along two main steps: First, we shall construct the Haar measure on $\pqal^\times$ and, hence, its associated Haar integral. Then, owing to the result in Theorem~\ref{th.2.17}, we will show that there is a natural lift of the Haar integral on $\so(3,\rat_p)$ to that of $\pqal^\times$.

To begin with, let us notice that, since $\pqal^\times$ is locally compact, it admits a left Haar measure.
\begin{proposition}\label{unih}
The group $\pqal^\times$ of invertible quaternions is unimodular.
\end{proposition}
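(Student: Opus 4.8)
The plan is to show that $\pqal^\times$ is unimodular by exhibiting a (left) Haar measure that is manifestly also right-invariant. First I would identify $\pqal^\times$ as an open subset of the four-dimensional $\rat_p$-vector space $\pqal\cong\rat_p^4$; concretely, via the isomorphism $\pqais$, it is the set of matrices of the prescribed form with $\det\neq 0$, i.e.\ the complement in $\rat_p^4$ of the zero locus of the reduced norm $\rn(\xi)=Q_{(4)}(q_0,q_1,q_2,q_3)$, a nowhere-dense closed set. In particular $\pqal^\times$ inherits from $\rat_p^4$ the structure of a $p$-adic Lie group (the multiplication is polynomial in the coordinates, hence $\rat_p$-analytic), and it carries the restricted additive Haar measure $\lambda$ on $\rat_p^4$. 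The natural candidate for the Haar measure is then
\begin{equation}\label{eq:candhaar}
\de\mu(\xi)\coloneqq\frac{\de\lambda(q_0,q_1,q_2,q_3)}{|\rn(\xi)|_p^{2}}.
\end{equation}

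Next I would verify invariance directly using the change-of-variables formula (Theorem~\ref{changeofvarint}). Fix $\zeta\in\pqal^\times$ and consider the left-translation map $\xi\mapsto\zeta\xi$ on $\pqal^\times$. Since this is an $\rat_p$-linear isomorphism of $\rat_p^4$ (left multiplication by a fixed quaternion is linear in the components of $\xi$), its Jacobian is a constant matrix whose determinant $J_\zeta$ I can compute once and for all; by multiplicativity of $\rn$ and by comparing with how $\rn$ transforms ($\rn(\zeta\xi)=\rn(\zeta)\rn(\xi)$), the absolute value of this Jacobian determinant is precisely $|\rn(\zeta)|_p^{2}$ (this is the standard fact that the regular representation of a quaternion algebra has determinant equal to the square of the reduced norm). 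Hence under $\xi\mapsto\zeta\xi$,
\begin{equation}
\de\lambda(\xi)\longmapsto |\rn(\zeta)|_p^{2}\,\de\lambda(\xi),\qquad |\rn(\xi)|_p^{2}\longmapsto |\rn(\zeta)|_p^{2}\,|\rn(\xi)|_p^{2},
\end{equation}
so the ratio in~\eqref{eq:candhaar} is invariant, i.e.\ $\mu$ is left-invariant. The identical computation for right multiplication $\xi\mapsto\xi\zeta$ — which is again $\rat_p$-linear with Jacobian of absolute value $|\rn(\zeta)|_p^{2}$ — shows $\mu$ is also right-invariant. By essential uniqueness of the Haar measure, every left Haar measure on $\pqal^\times$ is a scalar multiple of $\mu$, hence right-invariant; therefore $\pqal^\times$ is unimodular.

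The main obstacle, and the only genuinely non-routine point, is the computation that the Jacobian determinant of left (and right) multiplication by $\zeta$ equals $\rn(\zeta)^{2}$ up to sign. One clean way to see this is to work in the matrix model $\mpqal$: left multiplication of a $2\times 2$ matrix by a fixed matrix $A$ has determinant $(\det A)^{2}$ as a linear map on $\mathsf{M}_2$, and $\mpqal\subset\mathsf{M}_2(\rat_p(\sqrt v))$ is a $\rat_p$-form of a $4$-dimensional subspace on which this map restricts, with $\det A=\rn(\zeta)$; taking $p$-adic absolute values gives $|\rn(\zeta)|_p^2$. Alternatively one can argue abstractly: $\pqal^\times$ is the group of $\rat_p$-points of a connected reductive algebraic group whose character lattice is trivial modulo torsion, so it is unimodular — but the explicit Jacobian argument is more self-contained and also yields the explicit formula~\eqref{eq:candhaar} for $\mu$, which will be needed in the subsequent subsection for the lift of the Haar integral to $\so(3,\rat_p)$. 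I would also remark that the case $p=2$ is handled identically, replacing $\sqrt v$ by $\sqrt{-1}$ and using the corresponding matrix model $\mathbf H_2$.
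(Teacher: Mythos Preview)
Your proof is correct but takes a genuinely different route from the paper. The paper invokes the abstract criterion (from Gaal) that a locally compact group is unimodular whenever it admits a compact neighbourhood of the identity that is invariant under all inner automorphisms, and points to $\mathbb{Z}_p^\times$ inside the centre $\rat_p^\times\le\pqal^\times$ as such a set. You instead construct the explicit measure $\de\mu(\xi)=|\rn(\xi)|_p^{-2}\,\de\lambda$ and verify both left- and right-invariance directly via the Jacobian of the (linear) multiplication maps. The advantage of your approach is that it simultaneously yields the explicit Haar measure formula, which the paper derives separately in the paragraphs \emph{following} the proposition by applying the general machinery of Corollary~\ref{cor.4.3}; in effect you have merged the unimodularity proof with that computation. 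The paper's argument is shorter on the page but relies on the external criterion, whereas yours is self-contained and already tuned to the subsequent application. One small remark: your justification of $|J_\zeta|_p=|\rn(\zeta)|_p^{2}$ via the matrix model $\mpqal\subset\mathsf{M}_2(\rat_p(\sqrt v))$ is correct in spirit but implicitly uses that $\mpqal\otimes_{\rat_p}\rat_p(\sqrt v)\cong\mathsf{M}_2(\rat_p(\sqrt v))$, so that the $\rat_p$-determinant on $\mpqal$ agrees with the $\rat_p(\sqrt v)$-determinant on $\mathsf{M}_2$; the direct $4\times4$ computation (which the paper carries out anyway) is perhaps the path of least resistance here.
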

\begin{proof}
We exploit the well known result that a locally compact group is unimodular whenever there exists a compact neighborhood of the identity element which is invariant under the inner automorphisms of the group (see Chapter V in~\cite{gaal}). In the present case, $1\in \pqal^\times$ is an element of $\rat_p^\times\leq \pqal^\times$. Since $\rat_p^\times$ is the centre of $\mathbb{H}_p^\times$, every subset of $\rat_p^\times$ is invariant under the inner automorphisms of $\pqal^\times$. $\mathbb{Z}_p^\times$ provides an instance of such an invariant compact neighborhood of the identity $1$.
\end{proof}

As a consequence of Proposition~\ref{unih}, the left and the right Haar measures on $\pqal^\times$ coincide, and we can construct it by directly exploiting formula~\eqref{eq.71}. In particular, since $\pqal^\times\cong\rat_p^4-\{0\}$ as topological spaces, the map $\varphi$ defining this homeomorphism provides us a \emph{global} coordinate map for the elements of $\pqal^\times$. Specifically, if $\xi=q_0+\I q_1+\J q_2+\K q_3\in \pqal^\times$, its coordinates are given by $\varphi(\xi)\coloneqq (q_0,q_1,q_2,q_3)$. 
Therefore, the density function $\eta$ (cf.\ Section~\ref{cosHaarmes}) characterizing the Haar measure on $\pqal^\times$ will be globally defined on the whole $\pqal^\times$. We will discuss the cases $p>2$ and $p=2$ separately. 

Let us assume $p>2$ first.
If $\xi=q_0+\I q_1+\J q_2+\K q_3$, and $\chi=x_0+\I x_1+\J x_2+\K x_3$ are two quaternions in $\mathbb{H}_p^\times$, their composition will result in another quaternion $\zeta=z_0+\I z_1+\J z_2+\K z_3$; namely
\begin{align}\label{eq.76}
\zeta=z_0+\I z_1+\J z_2+ \K z_3&=(q_0+\I q_1+\J q_2+\K q_3)(x_0+\I x_1+\J x_2+\K x_3)\nonumber\\
&=q_0x_0+\I q_0x_1+\J q_0x_2+\K q_0x_3+\I q_1x_0+vq_1x_1\nonumber\\
&-\K q_1x_2-\J vq_1x_3+\J q_2x_0+\K q_2x_1-pq_2x_2-\I pq_2x_3\nonumber\\
&+\K q_3x_0+\J vq_3x_1+\I pq_3x_2+ pvq_3x_3,
\end{align}
from which we argue that
\begin{align}
&z_0=q_0x_0+vq_1x_1-pq_2x_2+pvq_3x_3, &&z_1=q_0x_1+q_1x_0-pq_2x_3+pq_3x_2,\nonumber\\
&z_2=q_0x_2+q_2x_0-vq_1x_3+vq_3x_1,   &&z_3=q_0x_3+q_3x_0-q_1x_2+q_2x_1. \label{eq:ugualexentrambi}
\end{align}
We can now compute the function $\frac{\partial \zeta_{j}}{\partial x_{k}}\big(\ha\vf(q);\varphi(e)\big)$, where the vectors of coordinates of $e$ and $\xi$ are $(1,0,0,0)$ and $(q_0,q_1,q_2,q_3)\coloneqq q$ respectively:
\begin{equation}\label{eq:JacperHp}
\frac{\partial \zeta_{j}}{\partial x_{k}}\big(\ha\vf(q);\varphi(e)\big)=
\left.
\begin{pmatrix}
\frac{\partial \zeta_0}{\partial x_0} & \frac{\partial \zeta_0}{\partial x_1} & \frac{\partial \zeta_0}{\partial x_2} &  \frac{\partial \zeta_0}{\partial x_3}\\
\frac{\partial \zeta_1}{\partial x_0} & \frac{\partial \zeta_1}{\partial x_1} & \frac{\partial \zeta_1}{\partial x_2} &  \frac{\partial \zeta_1}{\partial x_3}\\
\frac{\partial \zeta_2}{\partial x_0} & \frac{\partial \zeta_2}{\partial x_1} &  \frac{\partial \zeta_2}{\partial x_2} &  \frac{\partial \zeta_2}{\partial x_3}\\
\frac{\partial \zeta_3}{\partial x_0} & \frac{\partial \zeta_3}{\partial x_1} &  \frac{\partial \zeta_3}{\partial x_2} &  \frac{\partial \zeta_3}{\partial x_3}
\end{pmatrix}\right\rvert_{\substack{\hspace{-7.2mm}x_0=1,\\ x_1,x_2,x_3=0}}=
\begin{pmatrix}
q_0&vq_1&-pq_2& pvq_3\\q_1& q_0 & pq_3 & -pq_2\\
q_2&vq_3 & q_0 & -vq_1\\
q_3 & q_2 & -q_1 & q_0
\end{pmatrix}.
\end{equation}
This yields
\begin{equation}\label{eq.83}
\det\left(\frac{\partial \zeta_{j}}{\partial x_{k}}\big(\ha\vf(q);\varphi(e)\big)\right)=(q_0^2-vq_1^2+pq_2^2-pvq_3^2)^2,
\end{equation}
which, as anticipated, is globally defined on $\pqal^\times$. Then, using~\eqref{eq.71}, we can conclude that the Haar measure of any Borel subset $\mathcal{E}$ of $\mathbb{H}_p^\times$ is
\begin{equation}
\mu_{\mathbb{H}_p^\times}(\mathcal{E})=\int_{\varphi(\mathcal{E})}\frac{1}{|q_0^2-vq_1^2+pq_2^2-pvq_3^2|_p^2}\de\lambda(q),
\end{equation} 
where we recall that $\de\lambda(q)$ denotes the Haar measure on $\rat_p^4$ (cf.\ Example~\ref{exa.2.3}).

For the $p=2$ case, a similar discussion to the one leading from~\eqref{eq.76} to~\eqref{eq.83} shows that 
\begin{equation}
\det\left(\frac{\partial \zeta_{i}}{\partial x_{j}}\big(\ha\vf(q);\varphi(e)\big)\right)=\det\begin{pmatrix}
q_0&-q_1&-q_2&-q_3\\q_1& q_0 & -q_3 & q_2\\
q_2&q_3 & q_0 & -q_1\\
q_3 & -q_2 & q_1 & q_0
\end{pmatrix}=(q_0^2+q_1^2+q_2^2+q_3^2)^2.
\end{equation}
Then, using~\eqref{eq.71}, the Haar measure of any Borel subset $\mathcal{E}$ of $\mathbb{H}_2^\times$ is
\begin{equation}
\mu_{\mathbb{H}_2^\times}(\mathcal{E})=\int_{\varphi(\mathcal{E})}\frac{1}{|q_0^2+q_1^2+q_2^2+q_3^2|_2^2}\de\lambda(q).
\end{equation} 
We summarize the above results with the following
\begin{proposition}
Let $p\geq 2$ be a prime number, and let $\pqal^\times$ be the group of invertible $p$-adic quaternions. The Haar measure $\mu_{\pqal^\times}$ on $\pqal^\times$ is given by
\begin{equation}\label{mesonhp}
\mu_{\pqal^\times}(\mathcal{E})=\int_{\varphi(\mathcal{E})}\frac{1}{|Q_{(4)}(q_0,q_1,q_2,q_3)|_p^2}\de\lambda(q),
\end{equation}
for every Borel set $\mathcal{E}\subset \pqal^\times$, where $\xi=q_0+\I q_1+\J q_2+\K q_3$, $\varphi(\xi)=(q_0,q_1,q_2,q_3)$, $\de\lambda(q)=\de q_0\de q_1\de q_2\de q_3$ is the Haar measure on $\mathbb{Q}_p^4$, and $Q_{(4)}$ is the definite quadratic form of $\mathbb{Q}_p^4$ (see~\eqref{eq:quadrform4} in Theorem~\ref{quadform}).
\end{proposition}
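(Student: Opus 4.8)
The plan is to apply the general formula for the Haar measure on a $p$-adic Lie group, Corollary~\ref{cor.4.3} (equivalently Theorem~\ref{theo.4.3}), with $G=\pqal^\times$, and then simply record the Jacobian computation that makes the formula explicit. Since $\pqal^\times\cong\rat_p^4\setminus\{0\}$ as topological spaces (with the homeomorphism induced by the isomorphism $\pqais$ of Proposition~\ref{prop.4.7q}), we have a \emph{single} global chart $(\pqal^\times,\vf)$ with $\vf(\xi)=(q_0,q_1,q_2,q_3)$ for $\xi=q_0+\I q_1+\J q_2+\K q_3$. Thus the atlas consists of one chart, the chart around the identity is the same chart (with $\vf(e)=(1,0,0,0)$), and the density $\eta$ of formula~\eqref{eq.71} is globally defined. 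First I would note that $\pqal^\times$ is indeed a $p$-adic Lie group: it is an open subset of the $p$-adic manifold $\pqal\cong\rat_p^4$, hence a $\rat_p$-manifold, and the multiplication map is polynomial (see~\eqref{eq:ugualexentrambi}), hence $\rat_p$-analytic; it is also second countable, so Corollary~\ref{cor.4.3} applies.

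Next I would carry out the Jacobian computation. For fixed $\xi\in\pqal^\times$ with coordinates $q=(q_0,q_1,q_2,q_3)$, the map $\zeta(\,\ha\vf(q);\,\cdot\,)$ sends the coordinates $x=(x_0,x_1,x_2,x_3)$ of a quaternion $\chi$ to the coordinates $z=(z_0,z_1,z_2,z_3)$ of the product $\xi\chi$. These are the bilinear expressions already written out in~\eqref{eq:ugualexentrambi} for $p>2$ (and the analogous ones for $p=2$). Since each $z_j$ is linear in $x$, the Jacobian matrix $\big[\partial\zeta_j/\partial x_k\big]$ is constant in $x$ (in particular its value at $x=\vf(e)=(1,0,0,0)$ is the full matrix), and it is precisely the matrix of left multiplication by $\xi$ in the basis $\{1,\I,\J,\K\}$, displayed in~\eqref{eq:JacperHp}. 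A direct determinant computation gives $\det = (q_0^2-vq_1^2+pq_2^2-pvq_3^2)^2 = Q_{(4)}(q)^2$ for $p>2$, and $\det = (q_0^2+q_1^2+q_2^2+q_3^2)^2 = Q_{(4)}(q)^2$ for $p=2$ (this is the content of~\eqref{eq.83} and its $p=2$ analogue). A conceptual shortcut, if one prefers to avoid the brute-force expansion: left multiplication $L_\xi$ on $\pqal$ satisfies $\det(L_\xi)\cdot\det(L_\chi)=\det(L_{\xi\chi})$, so $\det(L_\xi)$ is a multiplicative polynomial in $q$ of degree $4$ vanishing exactly when $\rn(\xi)=0$, forcing $\det(L_\xi)=\rn(\xi)^2=Q_{(4)}(q)^2$ up to a constant which is seen to be $1$ by evaluating at $\xi=1$; alternatively one can use $\det(\pqais(\xi))=\rn(\xi)$ together with the fact that, over $\rat_p(\sqrt v)$, $\mpqal$ acting on itself by left multiplication is (two copies of) the regular representation of $\mathsf{M}_2$, whose determinant is $\det(M)^2$.

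Finally, plugging $\big|\det\big[\partial\zeta_j/\partial x_k(\ha\vf(q);\vf(e))\big]\big|_p^{-1} = |Q_{(4)}(q)|_p^{-2}$ into~\eqref{eq.71} of Corollary~\ref{cor.4.3} yields, for every Borel set $\mathcal E\subset\pqal^\times$,
\begin{equation*}
\mu_{\pqal^\times}(\mathcal E)=\int_{\vf(\mathcal E)}\frac{1}{|Q_{(4)}(q_0,q_1,q_2,q_3)|_p^2}\,\de\lambda(q),
\end{equation*}
which is~\eqref{mesonhp}. (By Proposition~\ref{unih} this left Haar measure is also right-invariant, so there is no ambiguity in calling it \emph{the} Haar measure, up to normalization.) I do not expect a genuine obstacle here: the only slightly delicate point is bookkeeping in the Jacobian/determinant computation, and the multiplicativity argument sketched above removes even that. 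The uniqueness up to a positive constant is automatic from the essential uniqueness of Haar measure on a locally compact group, already recalled in Subsection~\ref{sec2n}.
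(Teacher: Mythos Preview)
Your proposal is correct and follows essentially the same route as the paper: a single global chart on $\pqal^\times\cong\rat_p^4\setminus\{0\}$, explicit computation of the Jacobian of left multiplication, identification of its determinant as $Q_{(4)}(q)^2$, and application of Corollary~\ref{cor.4.3}. The multiplicativity shortcut you offer for $\det(L_\xi)=\rn(\xi)^2$ is a pleasant addition not in the paper, but otherwise the arguments coincide.
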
  

Exploiting the results of Theorem~\ref{th.2.17}, we shall now prove that there exists a one-one correspondence between the Haar integrals on $\so(3,\rat_p)$ and those of 
$\pqal^\times$. Indeed, let us consider the quotient group $\pqal^\times/\rat_p^\times$. According to the results in Subsection~\ref{sec2n}, denoting by $\lambda$ the Haar measure on $\rat_p$ (see Example~\ref{exa.2.2}), and with $\ts\colon\pqal^\times/\rat_p^\times\rightarrow\pqal^\times$ a Borel cross section, the map $\widehat{P}\colon \mathrm{L}^1(\pqal^\times)\rightarrow\mathrm{L}^1(\pqal^\times/\rat_p^\times)$ defined as
\begin{equation}
 (\widehat{P}f)(x)\coloneqq \int_{\rat_p} \de\lambda(\alpha)f(\ts(x)\alpha), \qquad x\in \pqal^\times/\rat_p^\times,\,\,f\in\mathrm{L}^1(\pqal^\times)
\end{equation}
is a well-defined \emph{surjection} of $\mathrm{L}^1(\pqal^\times)$ onto $\mathrm{L}^1(\pqal^\times/\rat_p^\times)$ (cf.\ Remark~\ref{rem.2.14s}).  
For $K$ a compact subset of $\pqal^\times$, we can then define the set (cf.\ \eqref{eq.15s2}):
\begin{equation}
\Psi_K\coloneqq\{\psi\in \mathrm{C}_c^+(\pqal^\times)\mid (P\psi)(q)=1,\,\,\forall q\in K\}.
\end{equation}
In particular, adhering to the notation used in Subsection~\ref{sec2n}, we set $\Psi\equiv\Psi_{\pqal^\times/\rat_p^\times}$. Moreover, for every $\psi\in\Psi$, we also denote by $\widehat{\mathscr{L}}_\psi\colon\mathrm{L}^1(\pqal^\times/\rat_p^\times)\rightarrow \mathrm{L}^1(\pqal^\times)$ --- i.e., the extended WMB lift --- the right inverse of $\widehat{P}$, as defined in~\eqref{eq.27s}. We are now ready to prove the following
\begin{theorem}\label{theoso3}
Let $\mu_3$ and $\mu_{\pqal^\times}$ be the Haar measures on $\so(3,\rat_p)$ and $\pqal^\times$ respectively. For every prime $p\geq 2$, and any $\phi\in \mathrm{L}^1\big(\so(3,\rat_p)\big)$, the following equality holds:
\begin{equation}\label{intonso}
\int_{\so(3,\rat_p)}\de\mu_3(R)\phi(R)=\int_{\pqal^\times}\de\mu_{\pqal^\times}(q)\big(\widehat{\mathscr{L}}_\psi\phi\big)(q),
\end{equation}
where $\widehat{\mathscr{L}}_\psi\phi\in \mathrm{L}^1(\pqal^\times)$ is the (extended) WMB lift of the map $\phi$.   
\end{theorem}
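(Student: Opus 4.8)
The plan is to obtain~\eqref{intonso} as a direct instance of the extended Weil--Mackey--Bruhat formula of Theorem~\ref{th.2.17}, applied to the pair $(G,H)=(\pqal^\times,\rat_p^\times)$. I would first fix the data and verify the hypotheses of that theorem. Set $G\coloneqq\pqal^\times$, $H\coloneqq\rat_p^\times$ and $X\coloneqq G/H=\pqal^\times/\rat_p^\times$. The group $G=\pqal^\times\cong\rat_p^4\setminus\{0\}$ is LCSC, as recalled in the proof of Proposition~\ref{prop:omeoSO3}, and carries the Haar measure $\mu_{\pqal^\times}$ computed in~\eqref{mesonhp}; it is moreover unimodular by Proposition~\ref{unih}. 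The subgroup $H=\rat_p^\times$ is the centre of $\pqal^\times$ (Remark~\ref{rem.5.2}), hence normal, and it is closed in $G$, being $\pqal^\times$ intersected with the closed linear subspace $\rat_p\cdot 1\subset\pqal$. Thus $X=G/H$ is an LCSC group, to which the machinery of Subsection~\ref{sec2n} applies.

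The crucial --- and, I expect, the only genuinely delicate --- point is to verify that $X=G/H$ is \emph{compact}, which is precisely what is needed in order to invoke Theorem~\ref{th.2.17}. This is not apparent from $G$ and $H$ alone (neither of them being compact), but it is delivered by the structural results of Subsection~\ref{sec:relquatrotp}: by Theorem~\ref{theo:SO3quat} and Proposition~\ref{prop:omeoSO3}, the assignment $\xi\rat_p^\times\mapsto\kappa_p(\xi)$ is an isomorphism of topological groups between $X=\pqal^\times/\rat_p^\times$ and $\so(3,\rat_p)$; and $\so(3,\rat_p)$ is compact by Proposition~\ref{prop:compatti}. Hence $X$ is compact, and all hypotheses of Theorem~\ref{th.2.17} are in force.

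It then remains to transport everything along this isomorphism and read off the conclusion. The homeomorphism $X\cong\so(3,\rat_p)$ induces an isometric identification $\mathrm{L}^1(X)\cong\mathrm{L}^1\big(\so(3,\rat_p)\big)$, and it carries the (essentially unique) Haar measure $\mu_X$ of $X$ onto a Haar measure of $\so(3,\rat_p)$; with $\mu_{\pqal^\times}$ and $\mu_X$ mutually normalized as required by the WMB formula (cf.\ the hypothesis of Theorem~\ref{th.2.17}), this is $\mu_3$. Fixing any $\psi\in\Psi\equiv\Psi_X$, the extended WMB lift $\widehat{\mathscr{L}}_\psi\phi\in\mathrm{L}^1(\pqal^\times)$ of a given $\phi\in\mathrm{L}^1\big(\so(3,\rat_p)\big)\cong\mathrm{L}^1(X)$ is the function defined in~\eqref{eq.27s}, and formula~\eqref{eq.20s2} of Theorem~\ref{th.2.17} states
\begin{equation*}
\int_X\de\mu_X(x)\,\phi(x)=\int_{\pqal^\times}\de\mu_{\pqal^\times}(q)\,\big(\widehat{\mathscr{L}}_\psi\phi\big)(q).
\end{equation*}
Rewriting the left-hand side as $\int_{\so(3,\rat_p)}\de\mu_3(R)\,\phi(R)$ through the identification above yields exactly~\eqref{intonso}. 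As a closing remark, one may note that the right-hand side is independent of the particular choice of $\psi\in\Psi$, since the left-hand side of~\eqref{eq.20s2} does not involve $\psi$.
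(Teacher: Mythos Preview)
Your proof is correct and follows essentially the same approach as the paper: identify $\pqal^\times/\rat_p^\times$ with $\so(3,\rat_p)$ via the topological group isomorphism of Proposition~\ref{prop:omeoSO3}, deduce compactness of the quotient from Proposition~\ref{prop:compatti}, and then apply Theorem~\ref{th.2.17}. You spell out the verification of the hypotheses (LCSC, $H$ closed and normal) in more detail than the paper does, but the argument is the same.
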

\begin{proof}
From Proposition~\ref{prop:omeoSO3}, $\so(3,\rat_p)$ is homeomorphic to $\pqal^\times/\rat_p^\times$, hence this two spaces are Borel isomorphic. In particular, by exploiting the homeomorphism between $\so(3,\rat_p)$ and $\pqal^\times/\rat_p^\times$, it is clear that, for any $\phi\in \mathrm{L}^1\big(\so(3,\rat_p)\big)$, the Haar integral in the l.h.s.\ of~\eqref{intonso}
can be expressed in terms of a Haar integral of the function $\phi$ on the homogeneous space $\pqal^\times/\rat_p^\times$. On the other hand, the same homeomorphism also entails that $\pqal^\times/\rat_p^\times$ is a compact group. But then, the equality in~\eqref{intonso} directly follows from~\eqref{eq.20s2} in Theorem~\ref{th.2.17}.
\end{proof}

\begin{remark}
We stress that the result of Theorem~\ref{theoso3} provides an \emph{equivalence of Haar integrals}. In other terms, computing the Haar integral of a function $\phi$ in $\mathrm{L}^1\big(\so(3,\rat_p)\big)$ is equivalent to performing the integration of the function $\widehat{\mathscr{L}}_\psi\phi$, namely
\begin{equation}
\int\limits_{\so(3,\rat_p)} \de\mu_3(R)\phi(R)=\int\limits_{\pqal^\times}\frac{(\widehat{\mathscr{L}}_\psi\phi)(q_0,q_1,q_2,q_3)}{|Q_{(4)}(q_0,q_1,q_2,q_3)|_p^2}\de q_0\de q_1\de q_2\de q_3,
\end{equation}
where we have used the explicit form of the Haar measure~\eqref{mesonhp}.
\end{remark}

\subsection{The Haar integral on \texorpdfstring{$\so(4,\mathbb{Q}_p)$}{Lg}}\label{subsec:haarso4p}
Here we extend to $\so(4,\rat_p)$ the results of the last subsection. In particular, in complete analogy to what was done for $\so(3,\rat_p)$, we will provide a suitable lift of the Haar integral on $\so(4,\rat_p)$ to that on $\mathbb{P}(\pqal^\times)$.

We start by observing that the group $\mathbb{P}(\pqal^\times)$
is locally compact and, hence, it admits a left Haar measure. Moreover, $\mathbb{P}(\mathbb{H}_p^\times)$ is unimodular, since it is a subgroup of the unimodular group $\mathbb{H}_p^\times\times \mathbb{H}_p^\times$ (being the direct product of the unimodular groups $\pqal^\times$)~\cite{gaal}. Since the measure on every chart covering $\mathbb{P}(\pqal^\times)$ can be obtained by translating the measure around
its identity element $e$, it is enough to explicitly provide the latter by exploiting formula~\eqref{eq.71}.

Consider the pairs of quaternions $(\alpha,\beta),\,(\gamma,\delta)\in \mathbb{P}(\mathbb{H}_p^\times)$. From the defining condition of the group $\mathbb{P} (\mathbb{H}_p^\times)$, it must be true that $\rn(\alpha)=\rn(\beta)$ and $\rn(\gamma)=\rn(\delta)$. 
Explicitly, let $\alpha,\beta,\gamma,\delta$ be given by
\begin{align}
    &\alpha=a_0+\I a_1+\J a_2+\K a_3, && \beta=b_0+\I b_1+\J b_2+\K b_3,\nonumber\\ 
    & \gamma=c_0+\I c_1+\J c_2+\K c_3, && \delta=d_0+\I d_1+\J d_2+\K d_3.\label{eq:gamadeltaPHp}
\end{align}
We shall denote the composition of the two elements $(\alpha,\beta),(\gamma,\delta)$ in $\mathbb{P}(\mathbb{H}_p^\times)$ by $\zeta\coloneqq (\zeta_1,\zeta_2)=(\alpha\gamma,\beta\delta)$, where $\zeta_1=z_0+\I z_1+\J z_2+\K z_3$ and $\zeta_2=z_0'+\I z_1'+\J z_2'+\K z_3'$. Clearly, $\zeta$ is a function of the parameters $a_i,b_i,c_i,d_i$ for $i=0,1,2,3$. Now, to construct the Haar measure on $\mathbb{P}(\pqal^\times)$, we have first to compute the Jacobian of the function $\zeta$. In particular, we shall consider $(\alpha,\beta)$ as fixed, and $(\gamma,\delta)$ as variables. In what follows, we will treat separately the cases $p>2$ and $p=2$. 

Assume $p>2$ at first. The components $z_i$ and $z_i'$, $i=0,1,2,3$, of $\zeta_1$ and $\zeta_2$, can be computed in the same way in which we found~\eqref{eq:ugualexentrambi}:
\begin{align}
&z_0=a_0c_0+va_1c_1-pa_2c_2+pva_3c_3, && z_0'=b_0d_0+vb_1d_1-pb_2d_2+pvb_3d_3, \nonumber\\
&z_1=a_0c_1+a_1c_0-pa_2c_3+pa_3c_2, && z_1'=b_0d_1+b_1d_0-pb_2d_3+pb_3d_2, \nonumber\\
&z_2=a_0c_2+a_2c_0-va_1c_3+va_3c_1, && z_2'=b_0d_2+b_2d_0-vb_1d_3+vb_3d_1, \nonumber\\
& z_3=a_0c_3+a_3c_0-a_1c_2+a_2c_1,  &&  z_3'=b_0d_3+b_3d_0-b_1d_2+b_2d_1.\label{eq:compcompZ}
\end{align}
As anticipated, $z_i$ and $z_i^\prime$ are functions of the parameters $c_i,d_i$, $i=0,1,2,3$ (we are assuming $a_i,b_i$ to be fixed). The defining condition $\rn(\gamma)=\rn(\delta)$ of the group $\mathbb{P}(\pqal^\times)$ is equivalent to $c_0^2=Q_{(4)}(d_0,d_1,d_2,d_3)+vc_1^2-pc_2^2+pvc_3^2$, and imposes a constraint on the variables $c_i,d_i$. Actually, this condition allows us to only consider $(c_1,c_2,c_3,d_0,d_1,d_2,d_3)$ as \emph{independent} parameters in a neighborhood of $e\in\mathbb{P}(\pqal^\times)$: As the forthcoming remark will clarify, in such a neighborhood $Q_{(4)}(d_0,d_1,d_2,d_3)+vc_1^2-pc_2^2+pvc_3^2$ is a quadratic element, and its square root will then provide $c_0$ up to a sign. 

\begin{remark}\label{rem:Henselrootp}The identity element of $\mathbb{P}(\mathbb{H}_p^\times)$ is identified in $\mathbb{Q}_p^8$ by  $(c_0,c_1,c_2,c_3,d_0,d_1,d_2,d_3) =(1,0,0,0,1,0,0,0)$. Now, consider an open ball in $\mathbb{Q}_p^8$ of centre $(1,0,0,0,1,0,0,0)$ and radius $1$, in the usual $p$-adic topology. Here, $d_0=1+py_0,\,d_i=py_i,\, c_i=px_i$ with $y_0,y_i,x_i\in\mathbb{Z}_p$, $i=1,2,3$. Then, $Q_{(4)}(d_0,d_1,d_2,d_3)=1+pt$ and $Q_{(4)}(d_0,d_1,d_2,d_3)+vc_1^2-pc_2^2+pvc_3^2 =1+pt'$, where $t\coloneqq 2y_0+pQ_{(4)}(y_0,y_1,y_2,y_3),\, t'\coloneqq t+p(vx_1^2-px_2^2+pvx_3^2)\in\mathbb{Z}_p$. We can now resort to Hensel’s Lemma (Chapter II, Section 2.2 of~\cite{serre2012course}) to show that, actually, $1+pz$, $z\in\mathbb{Z}_p$, is a square in $\mathbb{Z}_p$, i.e. that $f(x)\coloneqq x^2-1-pz$ admits roots in $\mathbb{Z}_p$. First, $f(x)= x^2-1\mod p$ has roots $x= \pm1\mod p$, where the derivative $\frac{df(x)}{dx}=2x$ takes values $\pm2\neq0\mod p$. Then, Hensel's Lemma allows us to (uniquely) lift each of these simple roots to a root of the same function modulo $p^n$, $n\in\mathbb{Z}_{>1}$, converging to a $p$-adic root. This proves that $Q_{(4)}(d_0,d_1,d_2,d_3)$ and $Q_{(4)}(d_0,d_1,d_2,d_3)+vc_1^2-pc_2^2+pvc_3^2$ are squares. Hence, we can write
\begin{equation}\label{eq:c0dependentp}
c_0=\pm\sqrt{Q_{(4)}(d_0,d_1,d_2,d_3)+vc_1^2-pc_2^2+pvc_3^2},
\end{equation}
at least for $(c_0,c_1,c_2,c_3,d_0,d_1,d_2,d_3)$ in a ball in $\mathbb{Q}_p^8$ centred in $(1,0,0,0,1,0,0,0)$ of radius $1$.
\end{remark}
From Remark~\ref{rem:Henselrootp} above, we know that the domain of definition of the square root~\eqref{eq:c0dependentp} is non empty, and contains a neighborhood of the coordinates of the identity element of $\mathbb{P}(\mathbb{H}_p^\times)$, where $c_1,c_2,c_3,d_0,d_1,d_2,d_3$ can be assumed as independent variables. Here, we are referring to the coordinate map on such a neighborhood of $e\in \mathbb{P}(\mathbb{H}_p^\times)$ as 
\begin{equation}\label{eq:coordmapPHp}
    \mathbb{P}(\mathbb{H}_p^\times)\ni (\gamma,\delta)\mapsto \varphi_0((\gamma,\delta))\coloneqq (c_1,c_2,c_3,d_0,d_1,d_2,d_3)\in \mathbb{Q}_p^7,
\end{equation} where $\gamma$ and $\delta$ are as in~\eqref{eq:gamadeltaPHp}. 
The same argument can be repeated for the condition $\rn(\zeta_1)=\rn(\zeta_2)$, to express $z_0$ in terms of the other independent coordinates $z_0',z_i,z_i',\,i=1,2,3$. In conclusion, we are left with the following $7\times 7$ Jacobian matrix
\begin{equation}\label{eq:JacPHp}
\left.\frac{\partial \zeta_{0,j}\big(\ha\vf_0(a_i,b_i);\varphi_0(e)\big)}{\partial x_{k}}\right\rvert_{1\leq j,k\leq7}\hspace{-0.3cm}=
\left.
\begin{pmatrix}
\frac{\partial z_1}{\partial c_1} & \frac{\partial z_1}{\partial c_2} &  \frac{\partial z_1}{\partial c_3} & \frac{\partial z_1}{\partial d_0} & \frac{\partial z_1}{\partial d_1} & \frac{\partial z_1}{\partial d_2} &  \frac{\partial z_1}{\partial d_3} \\
\frac{\partial z_2}{\partial c_1} & \frac{\partial z_2}{\partial c_2} &  \frac{\partial z_2}{\partial c_3} & \frac{\partial z_2}{\partial d_0} & \frac{\partial z_2}{\partial d_1} & \frac{\partial z_2}{\partial d_2} &  \frac{\partial z_2}{\partial d_3} \\
\frac{\partial z_3}{\partial c_1} & \frac{\partial z_3}{\partial c_2} &  \frac{\partial z_3}{\partial c_3} & \frac{\partial z_3}{\partial d_0} & \frac{\partial z_3}{\partial d_1} & \frac{\partial z_3}{\partial d_2} &  \frac{\partial z_3}{\partial d_3} \\
\frac{\partial z_0'}{\partial c_1} & \frac{\partial z_0'}{\partial c_2} &  \frac{\partial z_0'}{\partial c_3} & \frac{\partial z_0'}{\partial d_0} & \frac{\partial z_0'}{\partial d_1} & \frac{\partial z_0'}{\partial d_2} &  \frac{\partial z_0'}{\partial d_3} \\
\frac{\partial z_1'}{\partial c_1} & \frac{\partial z_1'}{\partial c_2} &  \frac{\partial z_1'}{\partial c_3} & \frac{\partial z_1'}{\partial d_0} & \frac{\partial z_1'}{\partial d_1} & \frac{\partial z_1'}{\partial d_2} &  \frac{\partial z_1'}{\partial d_3} \\
\frac{\partial z_2'}{\partial c_1} & \frac{\partial z_2'}{\partial c_2} &  \frac{\partial z_2'}{\partial c_3} & \frac{\partial z_2'}{\partial d_0} & \frac{\partial z_2'}{\partial d_1} & \frac{\partial z_2'}{\partial d_2} &  \frac{\partial z_2'}{\partial d_3} \\
\frac{\partial z_3'}{\partial c_1} & \frac{\partial z_3'}{\partial c_2} &  \frac{\partial z_3'}{\partial c_3} & \frac{\partial z_3'}{\partial d_0} & \frac{\partial z_3'}{\partial d_1} & \frac{\partial z_3'}{\partial d_2} &  \frac{\partial z_3'}{\partial d_3} \\
\end{pmatrix}\right\rvert_{\substack{c_1=c_2=c_3=0 \\ d_0=1,d_1=d_2=d_3=0}},
\end{equation}
where, in the l.h.s., $\vf_0(e)=(0,0,0,1,0,0,0)$ and $(x_k)_{k=1}^7=(c_1,c_2,c_3,d_0,d_1,d_2,d_3)$.

By using~\eqref{eq:c0dependentp}, the partial derivatives of the dependent variable $c_0$ w.r.t.\ the independent ones are
\begin{equation}\label{eq:partiaSO4p0}
\left.\frac{\partial c_0}{\partial c_i}\right\rvert_{\substack{c_1=c_2=c_3=0 \\ d_0=1,d_1=d_2=d_3=0}}=\left.\frac{\partial c_0}{\partial d_i}\right\rvert_{\substack{c_1=c_2=c_3=0 \\ d_0=1,d_1=d_2=d_3=0}}=0,\quad \textup{for}\ i=1,2,3,
\end{equation}
and
\begin{equation}\label{eq:partiaSO4p1}
\left.\frac{\partial c_0}{\partial d_0}\right\rvert_{\substack{c_1=c_2=c_3=0 \\ d_0=1,d_1=d_2=d_3=0}}=\pm1.
\end{equation}
Hence, using the expressions in~\eqref{eq:compcompZ}, the Jacobian matrix~\eqref{eq:JacPHp} is straightforwardly computed and reads: 
\begin{equation}
\left.\frac{\partial \zeta_{0,j}\big(\ha\vf_0(a_i,b_i);\vf_0(e)\big)}{\partial x_{k}}\right\rvert_{1\leq j,k\leq7}= \begin{pmatrix}
a_0&pa_3 & -pa_2 & \pm a_1 & 0 & 0 & 0\\
va_3 & a_0 & -va_1 & \pm a_2 & 0 & 0 & 0\\
a_2 & -a_1 & a_0 & \pm a_3 & 0 & 0 & 0\\
0 & 0 & 0 & b_0 & vb_1 & -pb_2 & pvb_3 \\
0 & 0 & 0 & b_1 & b_0 & pb_3 & -pb_2\\
0 & 0 & 0 & b_2 & vb_3 & b_0 & -vb_1\\
0 & 0 & 0 & b_3 & b_2 & -b_1 & b_0
\end{pmatrix}.
\end{equation}
The $p$-adic absolute value of the determinant of such a matrix is
\begin{align}
\left\lvert \det\frac{\partial \zeta_{0,j}\big(\ha\vf_0(a_i,b_i);\vf_0(e)\big)}{\partial x_{k}}\right\rvert_p &= \left\lvert a_0(a_0^2-va_1^2+pa_2^2-pva_3^2)(b_0^2-vb_1^2+pb_2^2-pvb_3^2)^2\right\rvert_p\nonumber \\
& = \left\lvert
\sqrt{Q_{(4)}(b_0,b_1,b_2,b_3)+va_1^2-pa_2^2+pva_3^2\,}\,Q_{(4)}(b_0,b_1,b_2,b_3)^3\right\rvert_p.\label{eq:haarSO4p}
\end{align}
For the last equality, we used again the condition $\rn(\xi)=\rn(\rho)$ in a suitable neighborhood for the coordinates of the identity of $\mathbb{P}(\mathbb{H}_p^\times)$, where $a_0=\pm \sqrt{Q_{(4)}(b_0,b_1,b_2,b_3)+va_1^2-pa_2^2+pva_3^2}$ is well-defined.

Let us now switch to the case where $p=2$. The components of $\zeta_1$ and $\zeta_2$ are now given by: 
\begin{align}
&z_0=a_0c_0-a_1c_1-a_2c_2-a_3c_3, && z_0'=b_0d_0-b_1d_1-b_2d_2-b_3d_3 \nonumber\\
&z_1=a_0c_1+a_1c_0+a_2c_3-a_3c_2, && z_1'=b_0d_1+b_1d_0+b_2d_3-b_3d_2 \nonumber\\
&z_2=a_0c_2+a_2c_0-a_1c_3+a_3c_1, && z_2'=b_0d_2+b_2d_0-b_1d_3+b_3d_1 \nonumber\\
& z_3=a_0c_3+a_3c_0+a_1c_2-a_2c_1,  &&  z_3'=b_0d_3+b_3d_0+b_1d_2-b_2d_1.
\end{align}
The defining condition $\rn(\nu)=\rn(\varrho)$ is equivalent to $c_0^2=Q_{(4)}(d_0,d_1,d_2,d_3)-c_1^2-c_2^2-c_3^2$. Then, analogously to the case where $p>2$, it is not difficult to prove that, in a suitable neighborhood of $(1,0,0,0,1,0,0,0)$ in $\mathbb{Q}_p^8$, $c_0$ can be expressed in terms of the independent variables $c_1,c_2,c_3,d_0,d_1,d_2,d_3$, as the forthcoming Remark will clarify.
\begin{remark}\label{rem:Henselroot2}
Consider an open ball in $\mathbb{Q}_2^8$ of centre $(1,0,0,0,1,0,0,0)$ and radius $1/2$, in which $d_0=1+4y_0,\,d_i=4y_i,\,c_i=4x_i$, with $y_0,y_i,x_i\in\mathbb{Z}_2$, $i=1,2,3$. In this case, $Q_{(4)}(d_0,d_1,d_2,d_3)=(1+4y_0)^2+(4y_1)^2+(4y_2)^2+(4y_3)^2=1+8\big(y_0+2Q_{(4)}(y_0,y_1,y_2,y_3)\big)$ and $Q_{(4)}(d_0,d_1,d_2,d_3)-c_1^2-c_2^2-c_3^2 = 1+8\big[y_0+2\big(Q_{(4)}(y_0,y_1,y_2,y_3)-x_1^2-x_2^2-x_3^2\big)\big]$ are squares in $\mathbb{Z}_2$, as they are congruent to $1$ modulo $8$. Therefore, we can write
\begin{equation}\label{eq:c0dependent2}
c_0=\pm\sqrt{Q_{(4)}(d_0,d_1,d_2,d_3)-c_1^2-c_2^2-c_3^2},
\end{equation}
at least for $(c_0,c_1,c_2,c_3,d_0,d_1,d_2,d_3)$ in an open ball in $\mathbb{Q}_2^8$ centred in $(1,0,0,0,1,0,0,0)$, and of radius $1/2$.
\end{remark}
As a consequence, the coordinate map on a suitable neighborhood of $e\in \mathbb{P}(\mathbb{H}_2^\times)$ to $\rat_2^7$ is as in~\eqref{eq:coordmapPHp}. An analogous discussion can be carried out for the condition $\rn(\zeta_1)=\rn(\zeta_2)$, to show that $z_0$ can be expressed as a function of the (independent) variables  $z_0',z_i,z_i',\,i=1,2,3$, in a suitable neighborhood of the identity. It follows that the Jacobian matrix for $p=2$ is of the same form~\eqref{eq:JacPHp} for $p>2$ and,  as one can easily check, the partial derivatives of the dependent variable $c_0$ are again given by~\eqref{eq:partiaSO4p0} and~\eqref{eq:partiaSO4p1}. Thus, we obtain
\begin{equation}
\left.\frac{\partial \zeta_{0,j}\big(\ha\vf_0(a_i,b_i);\vf_0(e)\big)}{\partial x_{k}}\right\rvert_{1\leq j,k\leq7}= \begin{pmatrix}
a_0&-a_3 & a_2 & \pm a_1 & 0 & 0 & 0\\
a_3 & a_0 & -a_1 & \pm a_2 & 0 & 0 & 0\\
-a_2 & a_1 & a_0 & \pm a_3 & 0 & 0 & 0\\
0 & 0 & 0 & b_0 & -b_1 & -b_2 & -b_3 \\
0 & 0 & 0 & b_1 & b_0 & -b_3 & b_2\\
0 & 0 & 0 & b_2 & b_3 & b_0 & -b_1\\
0 & 0 & 0 & b_3 & -b_2 & b_1 & b_0
\end{pmatrix},
\end{equation}
which then yields
\begin{align}
\left\lvert \det\frac{\partial \zeta_{0,j}\big(\ha\vf_0(a_i,b_i);\vf_0(e)\big)}{\partial x_{k}}\right\rvert_2 &= \left\lvert a_0(a_0^2+a_1^2+a_2^2+a_3^2)(b_0^2+b_1^2+b_2^2+b_3^2)^2\right\rvert_2\nonumber \\
& = \left\lvert \sqrt{Q_{(4)}(b_0,b_1,b_2,b_3)-a_1^2-a_2^2-a_3^2\,}\,Q_{(4)}(b_0,b_1,b_2,b_3)^3\right\rvert_2.\label{eq:haarSO42}
\end{align}
Once more, this expression is valid in the domain of definition of the $p$-adic square root, containing a neighborhood for the coordinates of the identity of $\mathbb{P}(\mathbb{H}_2^\times)$.

By exploiting~\eqref{eq.71}, the $p$-adic absolute values~\eqref{eq:haarSO4p} and~\eqref{eq:haarSO42} immediately yield the Haar measure on a neighborhood of the identity element of $\mathbb{P}(\mathbb{H}_p^\times)$.
\begin{proposition}\label{propHaarPhPx}
Let $p\geq2$ be a prime number, and let $\mathbb{P}(\mathbb{H}_p^\times)$ be the group of $p$-adic quaternion pairs~\eqref{eq:subgrpquatpairs}. For any Borel subset $\mathcal{E}$ of $\mathbb{P}(\mathbb{H}_p^\times)$, the following equalities hold:
\begin{equation}\label{eq.123}
\mu_{\mathbb{P}(\mathbb{H}_p^\times)}(\mathcal{E}\cap\U_0)=\int_{\varphi_0(\mathcal{E}\cap\U_0)}\left\lvert \sqrt{Q_{(4)}(b_0,b_1,b_2,b_3)+va_1^2-pa_2^2+pva_3^2\,}\,Q_{(4)}(b_0,b_1,b_2,b_3)^3\right\rvert_p^{-1} \de\lambda(q),
\end{equation}
for $p>2$, while for $p=2$,
\begin{equation}\label{eq.124}
\mu_{\mathbb{P}(\mathbb{H}_2^\times)}(\mathcal{E}\cap\U_0)=\int_{\varphi_0(\mathcal{E}\cap\U_0)}\left\lvert \sqrt{Q_{(4)}(b_0,b_1,b_2,b_3)-%Q_+(a_1,a_2,a_3)
a_1^2-a_2^2-a_3^2\,}\,Q_{(4)}(b_0,b_1,b_2,b_3)^3\right\rvert_2^{-1}\de\lambda(q).
\end{equation}
Here, $\mu_{\mathbb{P}(\mathbb{H}_p^\times)}$ and $\de\lambda(q)=da_1da_2da_3db_0db_1db_2db_3$ denote the Haar measure on $\mathbb{P}(\mathbb{H}_p^\times)$ and on $\mathbb{Q}_p^7$ respectively, $%Q_+,\, 
Q_{(4)}$ denotes the definite quadratic form of $\rat_p^4$, and $\U_0$ is a suitable neighborhood of the identity element $e\in \mathbb{P}(\mathbb{H}_p^\times)$ where the coordinate map $\varphi_0$ (cf.\ \eqref{eq:coordmapPHp}) is defined. 
\end{proposition}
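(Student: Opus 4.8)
The plan is to obtain \eqref{eq.123} and \eqref{eq.124} as a direct application of the general local formula \eqref{eq.71} of Corollary~\ref{cor.4.3} (equivalently Theorem~\ref{theo.4.3}), specialized to the chart $(\U_0,\varphi_0)$ around the identity of $\mathbb{P}(\mathbb{H}_p^\times)$, since in fact almost all of the needed computation has already been carried out above. The first step is to check that $\varphi_0$ as in \eqref{eq:coordmapPHp} is a genuine coordinate map, i.e.\ that $\mathbb{P}(\mathbb{H}_p^\times)$ is a $7$-dimensional $p$-adic Lie group with $\U_0$ a chart around $e$. This is exactly the content of Remark~\ref{rem:Henselrootp} (for $p>2$) and Remark~\ref{rem:Henselroot2} (for $p=2$): Hensel's Lemma shows that, on a sufficiently small ball around the coordinates of $e$, the relation $\rn(\gamma)=\rn(\delta)$ can be solved for $c_0$ by the branch of $\pm\sqrt{\rn(\delta)+vc_1^2-pc_2^2+pvc_3^2}$ (resp.\ the $p=2$ analogue) that is $\rat_p$-analytic and close to $1$, so that $(c_1,c_2,c_3,d_0,d_1,d_2,d_3)$ are bona fide independent local coordinates; the same argument applied to $\rn(\zeta_1)=\rn(\zeta_2)$ recovers $z_0$ analytically from the other seven components $z_1,z_2,z_3,z_0',z_1',z_2',z_3'$.

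Next I would identify the functions $\zeta_{0,j}$ of \eqref{eq.71}. Since we are computing $\mu(\mathcal{E}\cap\U_0)$ and $\U_0$ is itself the chart around $e$, the relevant $\zeta_{0,j}\big(\ha\vf_0(y);x\big)$ is simply the $j$-th coordinate of the product $\ha\vf_0(y)\cdot\ha\vf_0(x)$ in $\mathbb{P}(\mathbb{H}_p^\times)$; with $\ha\vf_0(y)=(\alpha,\beta)$ held fixed and $\ha\vf_0(x)=(\gamma,\delta)$ variable, these are precisely the seven independent components of $(\zeta_1,\zeta_2)=(\alpha\gamma,\beta\delta)$ written out in \eqref{eq:compcompZ} (and its $p=2$ counterpart). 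I would then invoke the Jacobian computation already performed: differentiating with respect to $(c_1,c_2,c_3,d_0,d_1,d_2,d_3)$ and evaluating at $\varphi_0(e)=(0,0,0,1,0,0,0)$, using \eqref{eq:partiaSO4p0}--\eqref{eq:partiaSO4p1} for the partials of the dependent variable $c_0$, one gets the block-triangular $7\times 7$ matrix of \eqref{eq:JacPHp}, whose determinant has $p$-adic absolute value $\big|a_0\,\rn(\alpha)\,\rn(\beta)^2\big|_p$. Substituting the constraint $\rn(\alpha)=\rn(\beta)=Q_{(4)}(b_0,b_1,b_2,b_3)$ valid on $\U_0$, together with $a_0=\pm\sqrt{Q_{(4)}(b_0,b_1,b_2,b_3)+va_1^2-pa_2^2+pva_3^2}$ for $p>2$ (resp.\ with $-a_1^2-a_2^2-a_3^2$ for $p=2$), reproduces exactly \eqref{eq:haarSO4p} (resp.\ \eqref{eq:haarSO42}).

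Plugging the reciprocal $p$-adic absolute value of this Jacobian into \eqref{eq.71} yields \eqref{eq.123} and \eqref{eq.124}. To conclude, I would recall the observation made just before the statement that $\mathbb{P}(\mathbb{H}_p^\times)$ is unimodular, being a closed subgroup of the unimodular group $\pqal^\times\times\pqal^\times$; hence the left Haar measure constructed this way is simultaneously a right Haar measure.

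I expect the only genuinely delicate point to be the regularity of the chart $\varphi_0$: one must be sure that both defining relations ($\rn(\gamma)=\rn(\delta)$ near $e$, and $\rn(\zeta_1)=\rn(\zeta_2)$ for the group-law components) can be solved $\rat_p$-analytically, so that $\zeta_{0,1},\dots,\zeta_{0,7}$ really are the components of the multiplication map in these coordinates rather than an over-determined system. This is settled by the two Hensel-lemma Remarks, and everything else is the bookkeeping of a $7\times7$ determinant that factors into two independent quaternion-multiplication blocks; note also that the sign ambiguity in $c_0=\pm\sqrt{\,\cdot\,}$ is immaterial, since only $|\,\cdot\,|_p$ of the Jacobian enters \eqref{eq.71}.
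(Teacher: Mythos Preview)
Your proposal is correct and follows essentially the same route as the paper: the proposition is stated immediately after the Jacobian computations \eqref{eq:JacPHp}--\eqref{eq:haarSO42}, and the paper's ``proof'' is simply the sentence that \eqref{eq:haarSO4p} and \eqref{eq:haarSO42}, plugged into \eqref{eq.71}, yield the stated local expression of the Haar measure. Your summary of the steps --- Hensel's Lemma for the chart, identification of the $\zeta_{0,j}$, the block structure of the $7\times7$ Jacobian, and the final substitution $a_0=\pm\sqrt{\cdots}$ --- matches the paper's development line by line.
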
 
Since, by translation invariance, one can `move' the measure on the fixed chart around $e$ all over the group, Proposition~\ref{propHaarPhPx} is enough to compute any Haar integral on the whole $\mathbb{P}(\mathbb{H}_p^\times)$.

\medskip

At this point, we are now ready to construct the Haar integral on $\so(4,\rat_p)$. Indeed, as done for $\so(3,\rat_p)$, we can define a suitable (surjective) map $\widehat{P}\colon \mathrm{L}^1\big(\mathbb{P}(\pqal^\times)\big)\rightarrow\mathrm{L}^1\big(\mathbb{P}(\pqal^\times)/\rat_p^\times\big)$ such that
\begin{equation}
(\widehat{P}f)(x)\coloneqq\int_{\rat_p}\de\lambda(\alpha)f(\ts(x)\alpha),\quad x\in\mathbb{P}(\pqal^\times/\rat_p^\times),\,\,f\in\mathrm{L}^1\big(\mathbb{P}(\pqal^\times)\big),
\end{equation}
where $\ts\colon\mathbb{P}(\pqal^\times)/\rat_p^\times\rightarrow\mathbb{P}(\pqal^\times)$ is any Borel cross section of $\mathbb{P}(\pqal^\times)/\rat_p^\times$ onto $\mathbb{P}(\pqal^\times)$. Again, for any compact subset $K$ of $\mathbb{P}(\pqal^\times)$, we define
\begin{equation}
\Psi_K\coloneqq\{\psi\in \mathrm{C}_c^+\big(\mathbb{P}(\pqal^\times)\big)\mid (P\psi)(q)=1,\,\,\forall q\in K\},
\end{equation}
and set $\Psi\equiv\Psi_{\mathbb{P}(\pqal^\times)/\rat_p^\times}$. Then, for every $\psi\in\Psi$, the map $\widehat{\mathscr{L}}_\psi\colon\mathrm{L}^1\big(\mathbb{P}(\pqal^\times)/\rat_p^\times\big)\rightarrow \mathrm{L}^1\big(\mathbb{P}(\pqal^\times)\big)$ --- i.e., the extended WMB lift --- provides a right inverse of $\widehat{P}$. 

In the light of the above discussion, the following result is now clear:
\begin{theorem}\label{theoso4}
Let $\mu_4$ and $\mu_{\mathbb{P}(\mathbb{H}_p^\times)}$ be the Haar measures on $\so(4,\rat_p)$, and $\mathbb{P}(\mathbb{H}_p^\times)$ respectively. For every prime $p\geq 2$, and any $\phi\in \mathrm{L}^1\big(\so(4,\rat_p)\big)$, the following equality holds:
\begin{equation}\label{intonso4}
\int\limits_{\so(4,\rat_p)}\de\mu_4(R)\phi(R) =\int\limits_{\mathbb{P}(\pqal^\times)}\de\mu_{\mathbb{P}(\mathbb{H}_p^\times)}(q)(\widehat{\mathscr{L}}_\psi\phi)(q),
\end{equation}
where $\widehat{\mathscr{L}}_\psi\phi\in \mathrm{L}^1\big(\mathbb{P}(\pqal^\times)\big)$ is the (extended) WMB lift of the map $\phi$. 
\end{theorem}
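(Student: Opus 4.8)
The plan is to follow, almost verbatim, the strategy already used for Theorem~\ref{theoso3}. By Proposition~\ref{prop:omeoSO4}, the group isomorphism $\so(4,\rat_p)\cong\mathbb{P}(\pqal^\times)/\rat_p^\times$ of Theorem~\ref{theo:SO4quat} is in fact an isomorphism of topological groups; in particular the two spaces are Borel isomorphic once endowed with their Borel $\sigma$-algebras. I would first use this identification to transport the Haar integral on the left-hand side of~\eqref{intonso4} into a Haar integral of $\phi$ --- now regarded as a function on the homogeneous space $X\coloneqq\mathbb{P}(\pqal^\times)/\rat_p^\times$ --- over $X$.

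Next I would check that the hypotheses of Theorem~\ref{th.2.17} are met with $G=\mathbb{P}(\pqal^\times)$ and $H=\rat_p^\times$ (diagonally embedded as $\rat_p^\times(1,1)$, which is the kernel of $\kappa_p'$, hence a closed normal subgroup of $G$): $G$ is a LCSC group, as established in the proof of Proposition~\ref{prop:omeoSO4}, and $X=G/H$ is \emph{compact}, precisely because it is homeomorphic to the compact group $\so(4,\rat_p)$. Therefore $\mathrm{C}_c(X)=\mathrm{C}(X)$, the surjection $\widehat{P}\colon\mathrm{L}^1(G)\rightarrow\mathrm{L}^1(X)$ and its right inverse $\widehat{\mathscr{L}}_\psi$ (for $\psi\in\Psi\equiv\Psi_X$) are available as in Remark~\ref{rem.2.14s} and Theorem~\ref{th.2.17}, and the extended WMB identity~\eqref{eq.20s2} gives
\[
\int_X\de\mu_X(x)\,\phi(x)=\int_{\mathbb{P}(\pqal^\times)}\de\mu_{\mathbb{P}(\pqal^\times)}(q)\,\big(\widehat{\mathscr{L}}_\psi\phi\big)(q),
\]
for a suitable mutual normalization of $\mu_X$ and $\mu_{\mathbb{P}(\pqal^\times)}$. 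Combining this with the transport performed in the first step yields~\eqref{intonso4}, and the normalization of $\mu_{\mathbb{P}(\pqal^\times)}$ so obtained is the one whose local expression near the identity was computed in Proposition~\ref{propHaarPhPx}; one can then make~\eqref{intonso4} fully explicit using~\eqref{eq.123} for $p>2$ and~\eqref{eq.124} for $p=2$.

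No genuine obstacle is expected: the argument is essentially bookkeeping on top of the machinery already set up in Section~\ref{sec2}. The only points requiring a modicum of care are (i) verifying that $\rat_p^\times$ really is a \emph{closed normal} subgroup of $\mathbb{P}(\pqal^\times)$ --- which follows since it is the kernel of the continuous homomorphism $\kappa_p'$ and is central, as computed in the proof of Theorem~\ref{theo:SO4quat} --- and (ii) keeping track of the mutual normalization of the Haar measures $\mu_4$, $\mu_{\mathbb{P}(\pqal^\times)}$ and the Haar measure $\lambda$ on $\rat_p^\times$ appearing in the definition of $\widehat{P}$, so that all constants in the Weil-Mackey-Bruhat formula cancel. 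The one nontrivial ingredient, namely the compactness of $\mathbb{P}(\pqal^\times)/\rat_p^\times$, has already been supplied by Proposition~\ref{prop:omeoSO4}, so the statement is in effect an immediate corollary of Theorem~\ref{th.2.17}.
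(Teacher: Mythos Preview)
Your proposal is correct and follows essentially the same approach as the paper: invoke Proposition~\ref{prop:omeoSO4} to identify $\so(4,\rat_p)$ with $\mathbb{P}(\pqal^\times)/\rat_p^\times$ as topological (hence Borel) groups, use this to transport the Haar integral and to deduce compactness of the quotient, and then apply the extended WMB identity~\eqref{eq.20s2} of Theorem~\ref{th.2.17}. The paper's proof is simply a more compressed version of what you wrote; your additional remarks on closedness/normality of $\rat_p^\times$ and on normalizations are sound but not strictly needed beyond what the paper already assumes in its general setup.
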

\begin{proof}
By Proposition~\ref{prop:omeoSO4}, $\so(4,\rat_p)$ and $\mathbb{P}(\mathbb{H}_p^\times)/\rat_p^\times$ are homeomorphic and, then, Borel isomorphic. This then entails that, for any given function $\phi\in \mathrm{L}^1\big(\so(4,\rat_p)\big)$, we can express its Haar integral (w.r.t.\ the Haar measure $\mu$ on $\so(4,\rat_p)$) as an Haar integral on $\mathbb{P}(\pqal^\times)/\rat_p^\times$.
Moreover, the same homeomorphism also implies that $\mathbb{P}(\pqal^\times)/\rat_p^\times$ is a compact group. But then, the equality in~\eqref{intonso4} directly follows from~\eqref{eq.20s2} of Theorem~\ref{th.2.17}.
\end{proof}

%%%%%%%%%% Conclusions %%%%%%%%%%%%%

\section{Conclusions}\label{sec6}
In this work, we provided a general expression of the Haar measure on a $p$-adic Lie group. 
Considering this measure as naturally induced by the invariant volume form on the group, we addressed the problem of determining the Haar measure on the $p$-adic special orthogonal groups in dimension two, three and four (for every prime number $p$). Let us briefly summarize our main achievements:
\begin{itemize}
\item The first part of this work is devoted to the discussion of the main tools and techniques used later on, in the application part of the paper. Specifically, in Subsection~\ref{sec2n}, we first recall the Weil-Mackey-Bruhat formula and the related notion of Weil-Mackey-Bruhat lift, which provide powerful tools allowing one to express the Haar integral on a quotient group $X=G/H$ (where $H$ is any closed normal subgroup of $G$) as a suitable lift of the Haar integral defined on the (LCH) group $G$. We next provide, in Subsection~\ref{subsec.2.1}, an overview of $p$-adic manifolds and $p$-adic Lie groups, suitable for our purposes, especially focusing on the relevant topological properties.  
\item In Section~\ref{cosHaarmes}, by exploiting the results discussed in Subsection~\ref{subsec.2.1} --- in particular, the total disconnectedness of $p$-adic manifolds ---  we provide a general method for constructing the invariant measure on a $p$-adic Lie group.
\item Next, in Section~\ref{sec5}, we work out various applications of the general techniques developed in the previous sections. In particular, a direct application of the general formula derived in Section~\ref{cosHaarmes} provides us with the Haar measure on $\mathrm{SO}(2,\mathbb{Q}_p)$.
\item As for the groups $\mathrm{SO}(3,\mathbb{Q}_p)$ and $\mathrm{SO}(4,\mathbb{Q}_p)$, instead, we argue that a more convenient approach consists in resorting to a quaternionic realization of these groups, eventually proving the useful group isomorphisms $\mathrm{SO}(3,\mathbb{Q}_p) \cong \mathbb{H}_p^\times/\mathbb{Q}_p^\times$ and $\mathrm{SO}(4,\mathbb{Q}_p) \cong \mathbb{P}(\mathbb{H}_p^\times)/\mathbb{Q}_p^\times$  (c.f.\ Theorems~\ref{theo:SO3quat} and~\ref{theo:SO4quat}).
\item It is precisely at this stage that the machinery previously developed in Subsection~\ref{sec2n} comes into play. Indeed, once the Haar measures on $\mathbb{H}_p^\times$ and $\mathbb{P}(\mathbb{H}_p^\times)$ (equivalently, regarding such measures as functionals, the associated Haar integrals) have been determined, we can construct the Haar integrals on the quotient groups $\mathbb{H}_p^\times/\mathbb{Q}_p^\times$ and $\mathbb{P}(\mathbb{H}_p^\times)/\mathbb{Q}_p^\times$ --- thus, on $\mathrm{SO}(3,\mathbb{Q}_p)$ and $\mathrm{SO}(4,\mathbb{Q}_p)$ respectively --- in terms of a suitable Weil-Mackey-Bruhat lift (see Theorems~\ref{theoso3} and~\ref{theoso4}). 
\end{itemize}

Several further developments can be foreseen from the present study. First, the possibility of deriving the explicit expression of the Haar measure on $\so(3,\rat_p)$ by exploiting its parameterization in terms of  Cardano --- a.k.a.\ nautical --- angles (see~\cite{our1st}). This is enlightened by the relation between $p$-adic rotations and the values modulo squares of the reduced norms of $p$-adic quaternions.  However, it can work for all primes but $p=2$, where no Cardano (or Euler) decomposition exists.

Second, since compact $p$-adic special orthogonal groups are profinite, another possible approach to their Haar measure is through a suitable notion of inverse limit of an inverse family of measure spaces on their projections modulo $p^k$, $k\in\mathbb{Z}^+_\ast$. 

Looking further ahead, the relevance of the subject treated here lies in the fact that the compactness of $\so(3,\rat_p)$ entails that all its irreducible unitary representations occur (and can be studied) as sub-representations of the regular one, according to the celebrated Peter-Weyl theorem. In turn, the Haar measure on $\so(3,\rat_p)$ plays a fundamental role in the study of its regular representation and, more generally, of its irreducible projective unitary representations. Those of dimension two can be regarded as a model of $p$-adic qubit (see~\cite{our2}) and would be the core of a quantum information processing based on $p$-adic numbers.

\section*{Acknowledgments}
IS is supported by the Istituto Nazionale di Fisica Nucleare (INFN), Sezione di Perugia, and by the Spanish MICIN (project 
PID2022-141283NB-I00) with the support of FEDER funds. AW is supported by the European Commission QuantERA grant ExTRaQT (Spanish MICIN project PCI2022-132965), by the Spanish MICIN (project PID2022-141283NB-I00) with the support of FEDER funds, by the Spanish MICIN with funding from European Union NextGenerationEU (PRTR-C17.I1) and the Generalitat de Catalunya, by the Spanish MTDFP through the QUANTUM ENIA project: Quantum Spain, funded by the European Union NextGenerationEU within the framework of the ``Digital Spain 2026 Agenda'', by the Alexander von Humboldt Foundation, and by the Institute for Advanced Study of the Technical University Munich.

%%%%%%%%%%%%%%%%%%%%%%%%%%%%%%%%%%% APPENDICI
\appendix
\section{The real quaternion algebra and its relations with \texorpdfstring{$\so(3,\mathbb{R})$}{Lg} and  \texorpdfstring{$\so(4,\mathbb{R})$}{Lg}}\label{app:A}

We devote this appendix to a brief account on the real quaternion algebra $\qal$, along with a discussion of the quaternionic realization of the elements in $\so(3,\mathbb{R})$ and $\so(4,\mathbb{R})$. This will also give us the opportunity to highlight analogies and differences with the $p$-adic case of Section~\ref{sec:quatalp}.

\subsection{The real quaternion algebra \texorpdfstring{$\qal$}{Lg}}\label{sec.1}
 There are several ways of describing the real quaternion algebra $\mathbb{H}$~\cite{folland2016course,jacobson2012basic}. As a real vector space $\mathbb{H}\cong \mathbb{R}\times\mathbb{R}^3$, and any element in $\mathbb{H}$ is written as $\xi=(a,\mathbf{x})$, with $a\in\mathbb{R}$ and $\mathbf{x}\in\mathbb{R}^3$. The multiplication law is given by:
\begin{equation}\label{eq.88}
(a,\mathbf{x})(b,\mathbf{y})=(ab-\mathbf{x}\cdot\mathbf{y}, b\mathbf{x}+a\mathbf{y}+\mathbf{x}\times\mathbf{y}),
\end{equation}
where $\mathbf{x}\cdot\mathbf{y}$ and $\mathbf{x}\times\mathbf{y}$ are  respectively the usual inner product and vector product  between vectors in $\mathbb{R}^3$.  It is easily verified that the product~\eqref{eq.88} is associative. The centre of $\qal$ is given by the subspace $\mathbb{R}\times \{\boldsymbol{0}\}\cong \mathbb{R}$. Likewise, we identify $\{0\}\times\mathbb{R}^3$ with $\mathbb{R}^3$, in such a way that every element in $\qal$ can be expressed as $\xi=a+\mathbf{x}$, $a\in\mathbb{R}$, $\mathbf{x}\in\mathbb{R}^3$. Denoting by $\mathbf{i},\mathbf{j},\mathbf{k}$ the canonical basis vectors of $\mathbb{R}^3$, $\xi$ can be expressed as
\begin{equation}
\xi=a+x_1\mathbf{i}+x_2\mathbf{j}+x_3\mathbf{k},
\end{equation}
where $a,x_1,x_2,x_3\in\mathbb{R}$. Then, the multiplication law between quaternions is given by specifying the products between the basis vectors $\I,\J,\K$~\cite{ folland2016course,smirnov2011linear}:
\begin{equation}\label{comrel}
\mathbf{i}^2=\mathbf{j}^2=\mathbf{k}^2=-1,\;\;\;\;\mathbf{i}\mathbf{j}=-\mathbf{j}\mathbf{i}=\mathbf{k},\;\;\;\; \mathbf{j}\mathbf{k}=-\mathbf{k}\mathbf{j}=\mathbf{i}\;\;\;\;
\mathbf{k}\mathbf{i}=-\mathbf{i}\mathbf{k}=\mathbf{j}.
\end{equation}
It is straightforward to realize that $\mathbb{H}$ is a non-abelian algebra.

$\mathbb{H}$ is an involutive algebra, as the map $\mathbb{H}\ni \xi=a+\mathbf{x}\mapsto \overline{\xi}\coloneqq a-\mathbf{x}\in\mathbb{H}$ is an involutive anti-automorphism.
Moreover, $\xi\overline{\xi}=|\xi|^2=a^{2}+|\mathbf{x}|^2 = Q_{\mathbb{R}}(a,x_1,x_2,x_3)$, where $Q_{\mathbb{R}}$ denotes the definite quadratic form of $\mathbb{R}^4$. Thus, every non-zero element in $\qal$ is invertible, with $\xi^{-1}=\overline{\xi}/|\xi|^2$, and so $\qal$ is a division algebra. Those elements in $\qal$ for which $|\xi|=1$ are called \emph{unit quaternions}. They form a group in $\qal$, denoted by $\un(\qal)$:
\begin{equation}\label{eq:uintquat}
\un(\qal)=\{\xi\in\qal\;|\; |\xi|=1\}=\{\xi\in\qal \;|\;  \xi^{-1}=\overline{\xi}\}.
\end{equation}

\begin{remark}\label{rem:reducednorm}
In the literature (e.g.,~\cite{voight2021}), the quantity $\xi\overline{\xi}$ is referred to as the \emph{reduced norm} of $\xi$ in $\mathbb{H}$, and denoted by $\rn(\xi)$ (see Remark~\ref{redborm}). From the definition, it is clear that the reduced norm on the real quaternion algebra $\mathbb{H}$ is equivalent to the square of the Euclidean norm on $\mathbb{R}^4$ (since the definite quadratic form on $\mathbb{R}^4$ induces the Euclidean inner product, and vice versa). However, this is not the case when one defines a quaternion algebra over a generic field w.r.t.\ some quadratic form (see, for instance~\cite{voight2021, Lam}): The latter does not necessarily induce the considered norm on the vector-space structure of that algebra.
\end{remark}

There is another (yet equivalent) way in which $\qal$ can be described~\cite{jacobson2012basic}. Let us consider the subset $\mqal$ of the algebra $\mathsf{M}_2(\mathbb{C})$ of complex $2\times 2$ matrices of the form
\begin{equation}\label{eq:subdivalg}
M=
\begin{pmatrix}
a & b\\
-\overline{b} & \overline{a}
\end{pmatrix}
=
\begin{pmatrix}
q_0+iq_1 & q_2+ iq_3\\
-q_2+iq_3 & q_0-iq_1
\end{pmatrix},
\quad q_j\in\mathbb{R},
\end{equation}
for every $j=0,1,2,3$, where $i$ denotes the imaginary unit. One can easily verify that $\mqal$ is a subalgebra (actually, a \emph{division} algebra) in $\mathsf{M}_2(\mathbb{C})$. In particular, that every non-null element $M\in\mqal$ is invertible easily follows by observing that
\begin{equation}
\det(M)=\det
\begin{pmatrix}
q_0+iq_1 & q_2+iq_3\\
-q_2+iq_3 & q_0-iq_1
\end{pmatrix}
= q_0^2+q_1^2+q_2^2+q_3^2
\end{equation}
is the (non-degenerate) four-dimensional definite quadratic form over $\mathbb{R}$. 
From~\eqref{eq:subdivalg},
we also see that every element $M\in\mqal$ can be written as 
$M=q_0+\I q_1+\J q_2+\K q_3$, where
\begin{equation}
\I\coloneqq
\begin{pmatrix}
i & 0\\
0 & -i
\end{pmatrix},
\quad\J\coloneqq
\begin{pmatrix}
0 & 1\\
-1 & 0
\end{pmatrix},
\quad\K\coloneqq
\begin{pmatrix}
0 & i\\
i & 0
\end{pmatrix},
\end{equation}
and where we omitted the identity matrix $\mathrm{I}_2$ multiplying $q_0$. Moreover, it can be easily checked that $\mathbf{i},\,\mathbf{j},\,\mathbf{k}$ obey  commutation relations which are analogous to the quaternion commutation relations~\eqref{comrel}. Indeed, the correspondence
\begin{equation}
\theta\coloneqq \qal\ni\xi=(q_0,q_1,q_2,q_3)\mapsto \theta(\xi):= \begin{pmatrix}
q_0+iq_1 & q_2+iq_3\\
-q_2+iq_3 & q_0-iq_1
\end{pmatrix}
\in\mqal
\end{equation}
defines an \emph{algebra isomorphism} from the quaternions $\qal$ to the algebra of complex matrices $\mqal$~\cite{folland2016course}.
In particular, unit quaternions are identified in $\mqal$ by
\begin{equation}\label{eq:uintquat2}
\un(\mqal)=\{M\in \mqal\mid \det(M)=1\}.
\end{equation}

\subsection{Relations between real quaternions and rotations}\label{sec.A2}
Here we recall the relation between $\mathbb{H}$ and $\so(3,\mathbb{R})$. Let $\xi\in \un(\qal)$ be a unit quaternion. The map $\qal\ni \eta\mapsto \xi\eta\xi^{-1}\in \qal$ is an \emph{isometric linear transformation} of $\qal$, which leaves the centre $\mathbb{R}$ of $\qal$ pointwise fixed and, therefore, also leaves the orthogonal subspace $\mathbb{R}^3$ invariant. Hence, the restriction of this map to $\mathbb{R}^3$ is an element of $O(3,\mathbb{R})$, that we denote by $\qrot(\xi)$:

\begin{equation}
\qrot(\xi)\mathbf{x}\coloneqq \xi \mathbf{x}\xi^{-1},\quad \mathbf{x}\in\mathbb{R}^3.
\end{equation}
Furthermore, $\qrot(\xi\nu)=\qrot(\xi)\qrot(\nu)$, i.e., $\qrot\colon \un(\qal)\rightarrow O(3,\mathbb{R})$ is a homomorphism. Let us derive the explicit form of $\qrot(\xi)$. For $\qal\ni\xi=q_0+\I q_1 +\J q_2 + \K q_3$, and $\mathbb{R}^3\ni\mathbf{x}=\I x+ \J y+ \K z$, we have:
\begin{align}
\xi \mathbf{x}\xi^{-1}=&\I\big(x(q_0^2+q_1^2-q_2^2-q_3^2)+2y(q_1q_2-q_3q_0)+2z(q_2q_0+q_3q_1)\big),\nonumber\\
&\J\big(2x(q_1q_2+q_0q_3)+y(q_0^2-q_1^2+q_2^2-q_3^2)+2z(q_2q_3-q_1q_0)\big),\nonumber\\
&\K\big(2x(q_1q_3-q_2q_0)+2y(q_1q_0+q_2q_3)+z(q_0^2-q_1^2-q_2^2+q_3^2)\big),
\end{align}
from which we deduce that $\qrot(\xi)$ is given by
\begin{equation}\label{eq:kappaxi}
\qrot(\xi)=
\begin{pmatrix}
q_0^2+q_1^2-q_2^2-q_3^2 & 2(q_1q_2-q_3q_0) & 2(q_2q_0+q_3q_1)\\
2(q_1q_2+q_0q_3) & q_0^2-q_1^2+q_2^2-q_3^2 & 2(q_2q_3-q_1q_0)\\
2(q_1q_3-q_2q_0) & 2(q_1q_0+q_2q_3) & q_0^2-q_1^2-q_2^2+q_3^2
\end{pmatrix}.
\end{equation}
A direct calculation shows that $\det(\qrot(\xi))=1$, i.e., $\qrot(\xi)\in \so(3,\mathbb{R})$.
\begin{remark}
The fact that $\qrot(\xi)\in \so(3,\mathbb{R})$ also follows by observing that $\un(\qal)$ is \emph{connected} and $\qrot\colon \un(\qal)\rightarrow \so(3,\mathbb{R})$ is continuous~\cite{folland2016course}.
\end{remark}
In the light of the discussion above, every unit quaternion $\xi=q_0+\I q_1 +\J q_2 + \K q_3\in \un(\qal)$ is associated with a rotation $R$ in $\so(3,\mathbb{R})$. In particular, $\kappa$ is the group homomorphism $\un(\mathbb{H})\rightarrow \so(3,\mathbb{R})$ in the short exact sequence~\eqref{realexac}~\cite{voight2021, Lam}. This then yields the group isomorphism 
\begin{equation}\label{eq:so3real2quat}
    \so(3,\mathbb{R})\cong \un(\mathbb{H})/\mathbb{F}_2.
\end{equation}
With a similar argument, based on the isometries $\mathbb{H}\ni\eta\mapsto \xi\eta\varrho^{-1}\in \mathbb{H}$ for $(\xi,\varrho)\in \un(\mathbb{H})\times \un(\mathbb{H})$, one can prove the following isomorphism~\cite{voight2021}:
\begin{equation}\label{eq:so4real2quat}
\so(4,\mathbb{R})\cong (\un(\qal)\times \un(\qal))/\mathbb{F}_2.
\end{equation}
\eqref{eq:so3real2quat} and~\eqref{eq:so4real2quat} become homeomorphism, considering the standard topology for the involved spaces, providing double coverings for $\so(3,\mathbb{R})$ and $\so(4,\mathbb{R})$.

\section{Alternative proof of Proposition~\ref{prop:omeoSO3}
}\label{alternhomeo}
In Section~\ref{sec:relquatrotp}, we showed that the group isomorphism $\psi\colon \mathbb{H}_p^\times/\mathbb{Q}_p^\times \rightarrow \so(3,\mathbb{Q}_p)$ given in Theorem~\ref{theo:SO3quat} is a homeomorphism. The proof of Proposition~\ref{prop:omeoSO3} relies on measure-theoretical results; here we provide an alternative proof which shows more explicitly the relation between $p$-adic rotations and quaternions, depending on their reduced norm.

As already argued in the proof of Proposition~\ref{prop:omeoSO3}, the groups $\pqal^\times$ and $\so(3,\rat_p)$ are locally compact, once supplied with their $p$-adic topology. The map $\kappa_p$ is continuous, as $\kappa_p(\xi)$ is a rational function on the parameters $q_0,q_1,q_2,q_3$ of $\xi=q_0+\I q_1+\J q_2+\K q_3$, with denominator $\rn(\xi)\neq0$ for every $\xi\neq0$. Therefore, $\kappa_p$ redefined on the quotient of $\mathbb{H}_p^\times$ modulo $\ker(\kappa_p)$ is continuous, i.e. $\psi$ is continuous. We are left to prove that also the inverse map $\psi^{-1}$ is continuous, or equivalently that $\psi$ is a closed function (it maps closed subsets of $\mathbb{H}_p^\times/\mathbb{Q}_p^\times$ to closed subsets of $\so(3,\mathbb{Q}_p)$). To ease this, we want to deal with compact spaces, rather than just locally compact ones. 

Notice that $\rn:\mathbb{H}_p^\times\rightarrow \mathbb{Q}_p^\times$ is a homomorphism (it is multiplicative), as well as the quotient map $q:\mathbb{Q}_p^\times\rightarrow \mathbb{Q}_p^\times/(\mathbb{Q}_p^\times)^2$, therefore $q\circ \rn:\mathbb{H}_p^\times\rightarrow \mathbb{Q}_p^\times/(\mathbb{Q}_p^\times)^2$ is a homomorphism. Its kernel is $\ker(q\circ \rn)=\left\{\xi\in\mathbb{H}_p^\times\mid \rn(\xi)\in(\mathbb{Q}_p^\times)^2\right\}$, and $\mathbb{Q}_p^\times$ is a normal subgroup of $\ker(q\circ \rn)$ and $\mathbb{H}_p^\times$ (being its centre). It follows, by the fundamental homomorphism theorem, that there exists a unique group
homomorphism $\varphi:\mathbb{H}_p^\times/\mathbb{Q}_p^\times \rightarrow \mathbb{Q}_p^\times/(\mathbb{Q}_p^\times)^2$ such that $q\circ \rn=\varphi\circ\pi$: this map is $\varphi(\xi\mathbb{Q}_p^\times)=\rn(\xi)\mod (\mathbb{Q}_p^\times)^2$. In fact, given two distinct representatives of the same class, i.e. $\nu\neq\xi$ such that $\nu\mathbb{Q}_p^\times=\xi\mathbb{Q}_p^\times$, we have $\nu\in\xi\mathbb{Q}_p^\times$ and hence
$\varphi(\nu\mathbb{Q}_p^\times)=\rn(\nu)\mod (\mathbb{Q}_p^\times)^2=\rn(\xi)\mod (\mathbb{Q}_p^\times)^2 = \varphi(\xi\mathbb{Q}_p^\times)$.

\begin{center}
\begin{tikzcd}
& \mathbb{H}_p^\times \arrow[r, "\kappa_p", two heads] \arrow[d, "\pi"', two heads] \arrow[ld, "\rn"'] & {\so(3,\mathbb{Q}_p)} \\
\mathbb{Q}_p^\times \arrow[rd, "q"'] & \mathbb{H}_p^\times/\mathbb{Q}_p^\times \arrow[d, "\varphi", dashed, two heads] \arrow[ru, "\psi"']     &    \\
& \mathbb{Q}_p^\times/(\mathbb{Q}_p^\times)^2   &                 
\end{tikzcd}
\end{center}
Actually, by the isomorphism theorem, we have $\mathbb{H}_p^\times/\ker(q\circ\rn)\simeq \mathbb{Q}_p^\times/(\mathbb{Q}_p^\times)^2$.
For $\ker(q\circ\rn)$ is stable under multiplication by scalars in $\mathbb{Q}_p^\times$, we have the induced quotient $\ker(q\circ\rn)/\mathbb{Q}_p^\times$, such that $(\mathbb{H}_p^\times/\mathbb{Q}_p^\times) / (\ker(q\circ\rn)/\mathbb{Q}_p^\times) \simeq  \mathbb{H}_p^\times/\ker(q\circ\rn) \simeq \mathbb{Q}_p^\times/(\mathbb{Q}_p^\times)^2$. We observe that $\xi$ and $\lambda\xi$ have same image w.r.t.\ $\varphi\circ \pi$ (or equivalently $q\circ\rn$), for every $\xi\in\mathbb{H}_p^\times,\,\lambda\in\mathbb{Q}_p^\times$. Therefore, to have a surjective map onto $\mathbb{Q}_p^\times/(\mathbb{Q}_p^\times)^2$, once chosen a representative for each equivalence class of this quotient group, it is enough to consider the restriction of $\mathbb{H}_p^\times$ to the set of quaternions having reduced norm exactly equal to those representatives. A similar argument applies to $\mathbb{H}_p^\times / \mathbb{Q}_p^\times$. We recall~\cite{serre2012course, Cassels} that 
\begin{equation}
\mathbb{Q}_p^\times/(\mathbb{Q}_p^\times)^2\cong\left\{\begin{aligned}
 & \{1,u,p,up\}\cong
 \mathbb{Z}/2\mathbb{Z}\times \mathbb{Z}/2\mathbb{Z},\quad p>2,\\
 & \{\pm1,\pm2,\pm5,\pm10\}\cong \mathbb{Z}/2\mathbb{Z}\times \mathbb{Z}/2\mathbb{Z}\times \mathbb{Z}/2\mathbb{Z},\quad p=2,\end{aligned}\right.
\end{equation}
where $u\in\mathbb{U}_p$ is a non-squared invertible $p$-adic integer. We thus define $S(\epsilon)\coloneqq\{\xi\in\mathbb{H}_p^\times\mid \rn(\xi)=\epsilon\}$, by varying $\epsilon$ in the set of $p$-adic integer representatives of the equivalence classes of $\mathbb{Q}_p^\times/(\mathbb{Q}_p^\times)^2$, i.e. for $\epsilon=1,u,p,up$ when $p>2$ and for $\epsilon=\pm1,\pm2,\pm5,\pm10$ when $p=2$. Now we have the following diagram where, by abuse of notation, we also denote by $\pi$ and $\varphi$ the homonyms maps redefined on $\bigcup_\epsilon S(\epsilon)$ and $\left(\bigcup_\epsilon S(\epsilon)\right)/\mathbb{Q}_p^\times$ respectively, and where the injective maps are simply the (closed, continuous) canonical embeddings of $\bigcup_\epsilon S(\epsilon)$ and $\left(\bigcup_\epsilon S(\epsilon)\right)/\mathbb{Q}_p^\times$ in $\mathbb{H}_p^\times$ and $\mathbb{H}_p^\times/\mathbb{Q}_p^\times$ respectively.
\begin{center}
    \begin{tikzcd}
\bigcup_\epsilon S(\epsilon) \arrow[r, hook] \arrow[d, "\pi"', two heads]  & \mathbb{H}_p^\times \arrow[r, "\kappa_p", two heads] \arrow[d, "\pi"', two heads] & {\so(3,\mathbb{Q}_p)} \\
\left(\bigcup_\epsilon S(\epsilon)\right)/\mathbb{Q}_p^\times \arrow[r, hook] \arrow[rd, "\varphi"'] & \mathbb{H}_p^\times/\mathbb{Q}_p^\times \arrow[ru, "\psi"'] \arrow[d, "\varphi"]  &  \\
& \mathbb{Q}_p^\times/(\mathbb{Q}_p^\times)^2  &                     \end{tikzcd}
\end{center}

The sets $S(\epsilon)$ are pairwise disjoint. Moreover, it can be shown that $S(\epsilon)\subset \left(p^{-1}\mathbb{Z}_p\right)^4$ is compact, in a similar fashion to the proof of Eq.~\eqref{eq:inclsSO4SO3SO2SLint} (see Theorem 5 in~\cite{our1st}). As a consequence, $\bigcup_\epsilon S(\epsilon)$ is a compact subset of $\mathbb{H}_p^\times$, being the finite union of compact sets. 
Now that we can consider just the compact subspace $\bigcup_\epsilon S(\epsilon)$ of $\mathbb{H}_p^\times$, the proof of the fact that $\psi$ is closed is straightforward. Consider a closed subset $C$ in $\left(\bigcup_\epsilon S(\epsilon)\right)/\mathbb{Q}_p^\times\subset \mathbb{H}_p^\times/\mathbb{Q}_p^\times$. Its preimage $\pi^{-1}(C)\subset\bigcup_\epsilon S(\epsilon)\subset \mathbb{H}_p^\times$ is closed, since $\pi$ is continuous. Closed subsets of the compact $\bigcup_\epsilon S(\epsilon)$ are compact, in particular $\pi^{-1}(C)$ is compact. The map $\kappa_p^\prime\coloneqq \kappa_p|_{\bigcup_\epsilon S(\epsilon)}:\bigcup_\epsilon S(\epsilon)\rightarrow \so(3,\mathbb{Q}_p)$ is continuous, as a restriction of the continuous map $\kappa_p$. The continuous image $\kappa_p^\prime\left(\pi^{-1}(C)\right)$ of the compact set $\pi^{-1}(C)$ is compact. In turn, $\kappa_p^\prime\left(\pi^{-1}(C)\right)$ is closed, being a compact subset of the compact Hausdorff group $\so(3,\mathbb{Q}_p)$. This proves that $\kappa_p^\prime$ is a closed map. Finally, this implies that $\psi$ is closed, and hence it is a homeomorphism.

%%%%%%%%%%%%%%%%%%%%%%%%%%%%%%%%%%%%%%%%%%%%
%%%%%%%%%%%%%%%%%%%%%%%%%%%%%%%%%%%%%%%%

\end{document}